
\documentclass[11pt]{article}
\usepackage{amsmath,amssymb,subfigure,epsfig,bbm,stmaryrd,MnSymbol,mathrsfs,hyperref,framed}
\DeclareMathOperator*\essinf{ess\,inf}

\usepackage{graphicx}
\usepackage{cite}
\usepackage[table]{xcolor}
\usepackage{booktabs}
\usepackage{multirow}
\usepackage{amsmath,amssymb,mathrsfs}
\usepackage{pstricks}
\usepackage[margin=3cm]{geometry}
\pagestyle{empty} \hoffset=0in \textwidth=437pt

\title{\textbf{High-order regularized regression in Electrical Impedance Tomography}}

\author{Nick Polydorides\footnotemark[2]\
\and Alireza Aghasi\footnotemark[3]\ \footnotemark[4] \and Eric.
L. Miller\footnotemark[3]\ \footnotemark[5]}

\begin{document}

\newtheorem{assumption}{Assumption}
\newtheorem{theorem}{Theorem}[section]
\newtheorem{lemma}[theorem]{Lemma}
\newtheorem{proposition}[theorem]{Proposition}
\newtheorem{corollary}[theorem]{Corollary}

\newenvironment{proof}[1][Proof]{\begin{trivlist}
\item[\hskip \labelsep {\bfseries #1}]}{\end{trivlist}}
\newenvironment{definition}[1][Definition]{\begin{trivlist}
\item[\hskip \labelsep {\bfseries #1}]}{\end{trivlist}}
\newenvironment{example}[1][Example]{\begin{trivlist}
\item[\hskip \labelsep {\bfseries #1}]}{\end{trivlist}}
\newenvironment{remark}[1][Remark]{\begin{trivlist}
\item[\hskip \labelsep {\bfseries #1}]}{\end{trivlist}}

\newcommand{\qed}{\nobreak \ifvmode \relax \else
      \ifdim\lastskip<1.5em \hskip-\lastskip
      \hskip1.5em plus0em minus0.5em \fi \nobreak
      \vrule height0.75em width0.5em depth0.25em\fi}

\maketitle

\renewcommand{\thefootnote}{\fnsymbol{footnote}}
\footnotetext[2]{Energy, Environment and Water Research Center,
The Cyprus Institute, Cyprus and MIT Energy Initiative, Cambridge,
MA. (\href{mailto: nickpld@mit.edu}{nickpld@mit.edu}).
Corresponding author.} \footnotetext[3]{Department of Electrical
and Computer Engineering, Tufts University, Halligan Hall,
Medford, MA.} \footnotetext[4]{Email:
\thanks{(\href{alireza.aghasi@tufts.edu}{alireza.aghasi@tufts.edu).}}}
\footnotetext[5]{Email:
\thanks{(\href{elmiller@ece.tufts.edu}{elmiller@ece.tufts.edu).}}}

\renewcommand{\thefootnote}{\arabic{footnote}}

\newcommand{\slugmaster}{%
\slugger{MMedia}{xxxx}{xx}{x}{x--x}}

\begin{abstract}
We present a novel approach for the inverse problem in electrical
impedance tomography based on regularized quadratic regression.
Our contribution introduces a new formulation for the forward
model in the form of a nonlinear integral transform, that maps
changes in the electrical properties of a domain to their
respective variations in boundary data. Using perturbation theory
the transform is approximated to yield a high-order misfit function which is then used to derive a
regularized inverse problem. In particular, we consider the
nonlinear problem to second-order accuracy, hence our
approximation method improves upon the local linearization of the
forward mapping. The inverse problem is approached using Newton's
iterative algorithm and results from simulated experiments are
presented. With a moderate increase in computational complexity,
the method yields superior results compared to those of
regularized linear regression and can be implemented to address
the nonlinear inverse problem.

\end{abstract}

\textbf{keywords:} Impedance tomography transform, quadratic
regression, Newton's method \\

\pagestyle{myheadings}

\section{Introduction}

In Electrical Impedance Tomography (EIT) voltage measurements
captured at the boundary of a conductive domain are used to
estimate the spatial distribution of its electrical properties.
The technique has numerous applications in exploration geophysics
\cite{Zhdanov}, environmental monitoring and hydrogeophysics
\cite{Binley}, \cite{Kemna}, biomedical imaging \cite{Holder},
industrial process monitoring \cite{mccann}, archaeological site
assessment \cite{papadopoulos} and non-destructive testing of
materials \cite{Santosa_Vogelius}. Owing to its many practical uses
and intriguing mathematics, EIT has seen numerous
theoretical and computational developments, e.g. the
the chapter expositions in \cite{adgablio}, \cite{Kirsch} and
\cite{Kaipio_Somer_book}. Among its fundamental challenges remain
the nonlinearity and ill-posedness of the inverse problem, which
inevitably compromise the spatial resolution of the reconstructed
images.  From the mathematical prospective, this inverse boundary
value problem, formalized by the seminal publication of
Cald\'{e}ron \cite{Calderon}, presents a number of implications on
the existence, uniqueness and numerical stability of the
solution \cite{Borcea}, \cite{adgablio}.  Although the issues of
existence and uniqueness can be eradicated under some mild
assumptions, see for example \cite{SylvUhlmann} for isotropic
conductivity fields, the instability causes the problem to be
extremely sensitive to inaccuracies and small errors in the data.
To alleviate the ill-posedness one usually resorts in implementing
some type of regularization strategy that stabilizes the solution
\cite{engl}. Based on prior information about the unknown
electrical parameters and/or the noise statistics in the
measurements, regularization schemes are applied in order to
stabilize the reconstructions. In the typical variational
framework for example, regularization methods are often expressed
as additive penalty terms augmenting the associated data misfit
function, essentially biasing the solution away from features that
are inconsistent with the available a priori information. In this
sense, \cite{johansen} examines the case of Tikhonov
regularization in the context of nonlinear system identification
emphasizing the bias-variance trade-off on the solution.

The nonlinearity inevitably increases the complexity of the problem, as the data misfit function has several local minima, and hence one is faced with the challenge of locating the solution that corresponds to the global minimum.  Aside a few notable exceptions, like the d-bar method \cite{Mueller} and
the factorization method \cite{KirschHanke}, algorithms that treat
the nonlinearity are essentially Newton-type iterative solvers, such as the
often used Gauss--Newton (GN) method, that implement local
linearization and regularization, essentially exploiting the
Fr\'{e}chet differentiability of the analytic forward operator, to
yield at each iteration a quadratic error function with respect to the unknown
parameters \cite{noser}, \cite{Oldenburg}. Starting from a
feasible guess and, in some cases, an estimate of the noise level in the data, one
applies a number of linearization--regularization cycles until a
convergence is reached in the sense of the discrepancy principle. Analysis on the convergence rates of the GN
algorithm and quasi-Newton variants for high-dimensional problems
can be found in \cite{Bakushinsky}, \cite{engl} and
\cite{Kaltenbacher}, and \cite{Haber}. These results state that
convergence is not guaranteed unless a stable Newton direction,
descent in the usual case of minimization, is computed at each
linearization point. In turn, this relies on the optimal tuning of
regularization at each iteration, indeed a delicate and
challenging task as the degree of ill-posedness may vary
significantly \cite{Lechleiter}. To rectify this problem and aid
convergence line search algorithms can be used, that scale
optimally the solution increment in the descent Newton direction
\cite{Bertsekas}, as indeed trust-region methods \cite{Conn}
although more computationally complex. An additional important
complication may arise when the typically-neglected linearization
error is significantly large, invariably when the linearization
point is `not close enough' to the true solution
\cite{Polydorides_linear}. This implies that a component of the
linearized data should not be considered in the fitting process,
since local linearization approximation is accurate in a rather
narrow trust region, and hence to cope with the lack of this
information at each iteration one seeks to recover a `small'
perturbation of the parameters. With this as background, the work in this paper focusses on the following contributions: (i) A nonlinear integral transform as a forward
model that maps arbitrarily large, bounded changes in electrical
properties to changes in boundary observations. Effectively, this
replaces the linear approximation involving the Jacobian of the
forward mapping \cite{blnpab}. The transform has a closed form and
admits a numerical approximation using the finite element method.
(ii) Exploiting the new model, a high-order misfit function is
formulated for the inverse problem in the context of regularized
regression. Numerical experiments on the resulting inverse problem
have yield solutions with small image errors and adequate spatial
resolution.

Higher-order derivatives are thus seldom used in inversion schemes
since the increase in convergence rates may not compensate
adequately for the computational effort required in computing the
derivatives, in particular when high-dimensional discrete models
with tensor parameters are concerned. Moreover, if the data misfit
residual is small then the error contribution of the higher-order
terms becomes negligibly small. The majority of inversion
algorithms, as indeed the general theory, for nonlinear inverse
problems utilize merely a first-order approximation of the
underlying model \cite{engl}. A notable exception is the
second-order method for nonlinear, highly ill-posed parameter
identification problems in some classical partial differential
equations, such as Helmholtz, diffusion and Sturm-Liouville
\cite{hettlich}. In \cite{hettlich}, the authors propose iterative
predictor-corrector schemes encompassing Tikhonov regularization
that approximate the second-order solution without solving
a quadratic equation. Using this computationally efficient
framework, they report on advantages in the final reconstructions
and significant improvements in the number of iterations required
for convergence.

As we develop our methodology we address mainly the EIT problem
with complex isotropic admittivity and the complete electrode
boundary conditions \cite{Somersalo_unique}. However, our
derivations are not constrained by isotropic or complex property
assumptions and thus can be easily shown to hold true for the
similar problems of Electrical Resistance and Capacitance
Tomography (ERT/ECT) with purely real coefficients in scalar or
tensor field material properties \cite{adgablio}, \cite{Pain}.
Moreover, we show that the form of the new forward model remains
unchanged with the governing elliptic differential equation is
addressed in the context of more generalized boundary conditions
that resemble more simplistic electrode models conventionally
encountered in the geophysical setting
\cite{Binley},\cite{Miller}.

\subsection{Notation and paper organization}

Consider a simply connected domain $B \subset \mathbb{R}^d$,
$d=2,3$ with Lipschitz smooth boundary $\partial B$ and a space
depended isotropic admittivity function
$\gamma(\mathbf{x},\omega): B \rightarrow \mathbb{C}$. At an
angular frequency $\omega \geq 0$, the admittivity can be
expressed as $$\gamma(\mathbf{x},\omega) = \sigma(\mathbf{x}) + i
\omega \epsilon(\mathbf{x}),$$ where $\infty>C_1>\sigma >c_1>0$
and $\infty>C_2 >\epsilon\geq 0$ denote the domain's electrical
conductivity and permittivity respectively for some positive
bounding constants $C_1,\,C_2,\, c$. If there are no charges or
sources in the interior of $B$ and the angular frequency of the
applied currents is small enough, then Maxwell's equations
describing the electromagnetic fields in the interior of the
domain reduce to the elliptic equation
\begin{equation}\label{pdei}
\nabla \cdot [ \gamma(\mathbf{x},\omega) \nabla u(\mathbf{x},\omega)] = 0, \quad  \mathbf{x} \in B,
\end{equation}
where $u$ denotes the scalar electric potential function.
Measuring the potential at the accessible parts of the boundary of
the domain through a finite number of sensors yields a set of
observations $\zeta$ that are likely to suffer from some type of
noise and measurement imprecision $\eta$. We will assume EIT
systems equipped with $L$ electrodes exciting the domain with a
sequence of currents $I\doteq(I^1,\ldots,I^q)$, with $I^i \doteq
(I_1,\ldots,I_L)$ all fixed at frequency $\omega$. In such a case
$\zeta$ is typically a linear combination of the electrode
potentials $U(I^i)\doteq(U_1,\ldots,U_L)$, for $i=1,\ldots,q$ at
the various current patterns. For an applied current pattern $I$
we associate an electric potential field $u$ in $\overline B$, and
an array of electrode potentials $U$ at $\partial B$. When
required by the context we shall denote their dependence on
admittivity and applied current as $u(\gamma)$ and $u(I)$, or both
as $u(\gamma,I)$; and respectively $U(\gamma)$, $U(I)$ and
$U(\gamma,I)$. The first and second partial derivatives of $u$
with respect to $\gamma$ will be denoted by $\partial_\gamma u$
and $\partial_{\gamma \gamma} u$, a notation adopted for both
continuous and discrete spatial functions, while for matrices and
vectors the differentiation is to be considered element-wise. The
position in $\overline B$ is specified by the vector $\mathbf{x}
\in \mathbb{R}^d$, while the outward unit normal vector at the
boundary is denoted $\mathbf{n}$. Matrices and vector fields are
expressed in bold capital letters while vectors and scalar fields
in small case regular. For a matrix $\bf A$, $a_j$ will denote the
$j$th row, $\mathbf{A}_{i,j}$ its $(i,j)$th element and $\bf A'$
its transpose. For a vector $v$, $v_i$ is the $i$th element
and $\bar v$ is the complex conjugate. The spaces of real and
complex numbers are given by $\mathbb{R}$ and $\mathbb{C}$, while
we use $\Re\{c\}$ to express the real component of the complex
argument $c$.

The paper is organized as follows: We begin with a brief review of
the the EIT model equations and associated preliminary concepts
and then proceed to formulate the inverse problem commenting on
existing algorithms the address the problem through local linearization.
The next section is devoted to the derivation of the nonlinear
integral transform under the complete electrode model and its
generalization to the Poisson's equation with mixed boundary
conditions. Further on we consider the high-order regularized
regression problem and approximate the nonlinear system as a
quadratic operator equation. Using the finite element we obtain a
numerical approximation and subsequently implement Newton's
algorithm to solve the problem. Finally, we present numerical
results from simulated studies that demonstrate the advantages of
the proposed methodology and we end the paper with the conclusions
section.

\section{EIT model equations and preliminaries}

The complete electrode model in electrical impedance tomography is
derived from Maxwell's time-harmonic equations at the quasi-static
limit and describes the electric potential field in the closure of
a conductive domain $B$ with known electrical properties $\gamma$
and impressed boundary excitation conditions. The model has been
extensively reviewed and analyzed in several publications,
including \cite{Somersalo_unique} where the authors prove the
existence and uniqueness of the solution, under some continuity assumptions on the interior admittivity. With reference to figure
\ref{figmain}, assuming no charges or current sources in the
interior of $B$, when a current is applied at the boundary, the
electric potential $u$ satisfies the elliptic partial differential
equation (\ref{pdei}). The applied current, inducing this field,
is expressed by the Neumann boundary conditions
\begin{eqnarray}\label{boundc}
\int_{e_\ell} \mathrm{d}s \; & \gamma(\mathbf{x},\omega) \nabla u(\mathbf{x}) \cdot \mathbf{n}  = & I_\ell, \quad \mathbf{x} \in \Gamma_{e_\ell}, \; \ell=1,\ldots,L ,\\
& \gamma (\mathbf{x},\omega) \nabla u(\mathbf{x}) \cdot \mathbf{n}  = & 0, \quad \; \mathbf{x} \in \partial B \setminus \Gamma_e,
\end{eqnarray}
where $\Gamma_e = \bigcup_{\ell=1}^L  \Gamma_{e_\ell}$. An
accurate model of the electrodes is critical when comparing
experimental measurements to synthetic model predictions. In
effect, the voltage measurement recorded at the $\ell$th electrode
with contact impedance $z_\ell$ is given by the Robin condition
\begin{equation}\label{boundv}
U_\ell = u(\mathbf{x}) + z_\ell \gamma(\mathbf{x},\omega) \nabla u(\mathbf{x}) \cdot \mathbf{n}, \quad \mathbf{x} \in \Gamma_{e_\ell}, \; \ell=1,\ldots,L ,
\end{equation}
assuming that the characteristic function of the contact impedance
is uniform on each electrode and $\Re\{z_\ell\} > 0$. The model
admits a unique solution $(u,U)$ upon enforcing the charge
conservation principle on the applied currents and a choice of
ground is made. Maintaining the conventional notation of
\cite{Somersalo_unique}, \cite{Kirsch}, and \cite{Borcea} we these
constraints imply
\begin{equation}\label{icond}
\sum_{\ell=1}^L I_\ell = 0, \quad \text{and} \quad \int_{\partial B} \mathrm{d}s\, u = 0,
\end{equation}
where the applied currents should sum up to zero and the induced
potential should have a vanishing mean on the boundary. For the
so-called forward or direct problem (\ref{pdei})-(\ref{icond}) we
adopt the following essential assumptions \cite{Kirsch},
\cite{Kaipio_Somer_book}.
\begin{assumption}\label{ass}
(a) The domain $B$ is simply connected with boundary $\partial B$ at least Lipschitz continuous.\\
(b) The electrical admittivity $\gamma \in L^\infty(\overline B)$ with $\essinf(\gamma) > c_1 >0$.\\
(c) The potential field $u \in H^1_o(B) = \{ u \in H^1(B) : \int_{\partial B} \mathrm{d}s\, u =0\}$\\
(d) The applied currents $I$ and measured voltages $\zeta$ belong in the Hilbert spaces of the $L$, and respectively $m$ dimensional complex vectors $\mathbb{C}^L$ and $\mathbb{C}^m$, where $m \geq L$.
\end{assumption}

\begin{figure}
\begin{center}
\epsfig{file=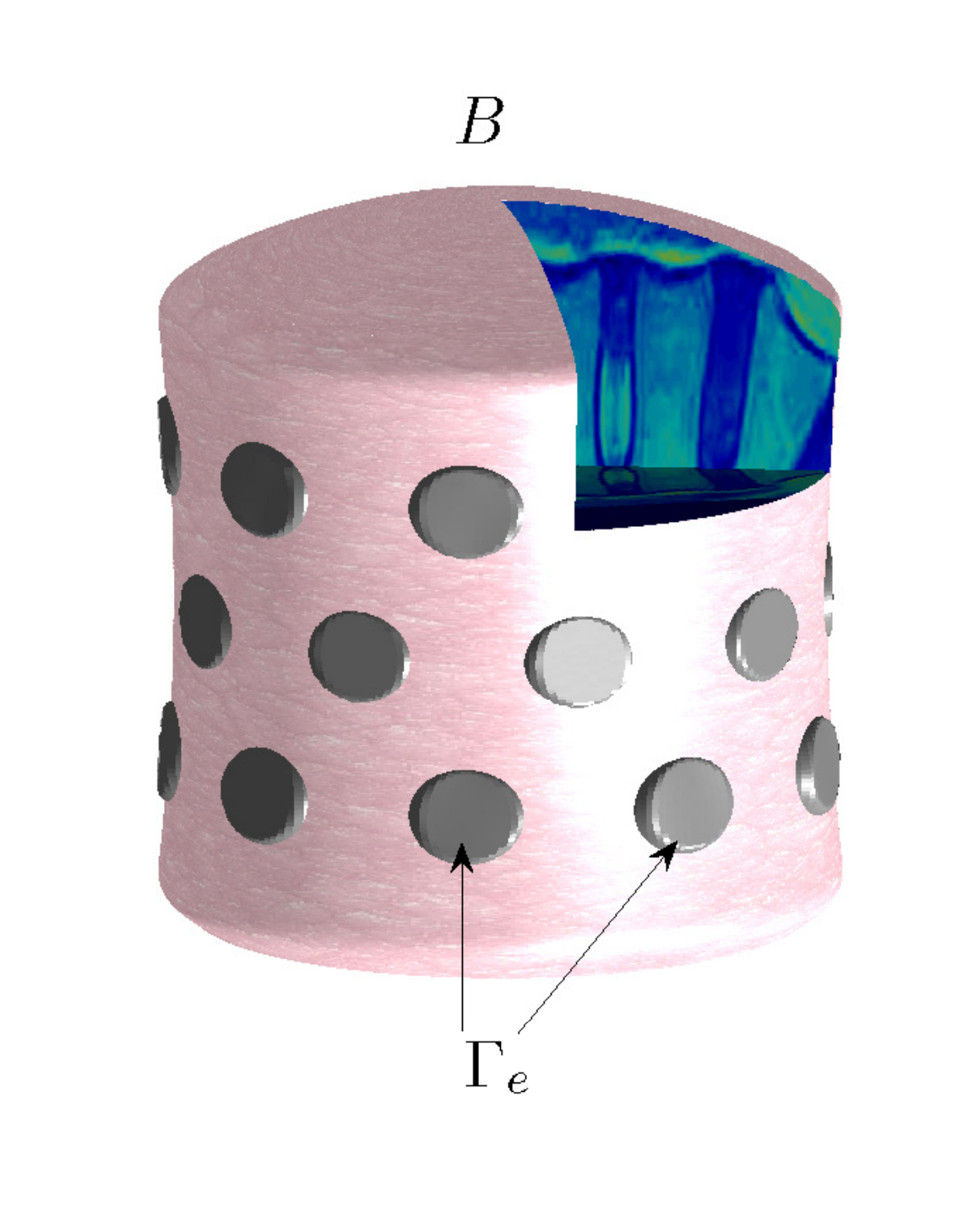,height=10cm,width=8cm}\\\vspace{-1cm}
\end{center}
\caption{The domain under consideration $B$ with $L$ round surface electrodes $\Gamma_{e_\ell}$ attached at the boundary $\Gamma_e$.}\label{figmain}
\end{figure}

We will often refer to the solution $(u,U) \in H_o^1(B) \oplus
\mathbb{C}^L$ as the \emph{direct} solution, and to the problem
(\ref{pdei})-(\ref{icond}) as the \emph{direct problem}. Pertinent to
this model is the adjoint forward problem \cite{Miller}. Consider the direct solution under a
pair drive current pattern $I^d$ with positive and negative
polarity applied at electrodes $e_p$ and $e_n$ respectively.
Moreover, let the $k$'th boundary measurement be of the form
$$
\zeta_k = U_{e_{p'}} - U_{e_{n'}}, \quad p', n' \in \{1,\ldots,L\}, \quad k=1,\ldots,m
$$
for a pair of electrodes $e_{p'}$ and $e_{n'}$. In the practical
setting of EIT or ERT, see for example the applications discussed
in \cite{Holder}, \cite{Gunther}, \cite{Kemna} and
\cite{vauhkonen}, instead of measuring the electrode potentials,
it is usual to measure the potential between adjacent electrodes.
When captured systematically, this differential type of
measurement yields $m \geq L$ linearly independent data. Based on
this measurement definition, the adjoint field solution $(v,V) \in
H_o^1(B) \oplus \mathbb{C}^L$ satisfies the equations
\begin{eqnarray}\label{adj}
\nabla \cdot [\overline{\gamma} (\mathbf{x},\omega) \nabla v (\mathbf{x})] & = & 0 \quad  \mathbf{x} \in B,\\ \label{adj1}
\overline{\gamma} (\mathbf{x},\omega)  \nabla v (\mathbf{x}) \cdot \mathbf{n} & = & 0 \quad \mathbf{x} \in \partial B \setminus \Gamma_e,\\ \label{adj2}
\int_{e_\ell} \mathrm{d}s \,\overline{\gamma} (\mathbf{x},\omega)  \nabla v (\mathbf{x}) \cdot \mathbf{n}  & = & I^m_\ell \quad \mathbf{x} \in \Gamma_{e_\ell},\\ \label{adj3}
v (\mathbf{x}) + z_\ell \overline{\gamma} (\mathbf{x},\omega)  \nabla v (\mathbf{x}) \cdot \mathbf{n} & = & V_\ell, \quad \mathbf{x} \in \Gamma_{e_\ell}, \; \ell = 1, \ldots, L
\end{eqnarray}
where $\overline{\gamma} (\mathbf{x},\omega)=\gamma
(\mathbf{x},-\omega)$ is the conjugated admittivity, and $I^{m}
\in \mathbb{C}^L$ is the adjoint current pattern whose $\ell$'th
entry equals to $ I^m_{\ell'} = \overline{I^d_{\ell}}$,  if
$\ell=e_p$ or $\ell=e_n$ and zero otherwise. The uniqueness of the
adjoint solution is subject to the constraints of the type in
(\ref{icond}).

\subsection{Green's reciprocity}
In what follows, we make reference to the \emph{reciprocity
principle}. Originally derived from Maxwell's laws of
electromagnetics, Maxwell's reciprocity principle has an analogue
for irrotational fields known as Green's reciprocity. In the
context of the impedance experiment it states, that if a current
intensity $I$ is applied at the boundary of a closed domain
between two electrodes, say $P_1\doteq(e_p, e_n)$, then the
potential measured at the boundary through another pair of
electrodes $P_2\doteq(e_{p'},e_{n'})$ will be equal to the
potential measured at $P_1$ if the same current is applied to
$P_2$. Impedance data acquisition instruments rely on this
principle to avoid making redundant, i.e. linearly dependent,
measurements. To see this consider a linear conductive medium $B$
whose admittivity $\gamma$ has a nonzero imaginary component at
the operating non-resonant frequency $\omega$. Suppose we apply a
time-harmonic electric current $I^d$ at the boundary of the domain
through electrodes $P_1$,
$$
I^d(\mathbf{x},t) = \mathbf{J}(\mathbf{x},\omega)\, e^{i \omega t}, \qquad \mathbf{x} \in \partial B,
$$
where $\mathbf{J}$ is the current density field. From Maxwell's
laws the electric and magnetic fields $\mathbf{E}$, and
$\mathbf{H}$, within the domain satisfy
$$
\nabla \times \mathbf{H}(\mathbf{x}) = \gamma(\mathbf{x}) \mathbf{E}(\mathbf{x}), \qquad \mathbf{x} \in \overline{B}.
$$
As the domain is simply connected, using $\mathbf{E}(\mathbf{x}) =
- \nabla u(\mathbf{x})$ and $\nabla \times \mathbf{H}(\mathbf{x})
= \mathbf{J}(\mathbf{x})$ reduces to Ohm's law
\begin{equation}\label{ohm}
\mathbf{J}(\mathbf{x}) =- \gamma (\mathbf{x}) \nabla u(\mathbf{x}).
\end{equation}
Let the magnitude of the applied current be equal to $I_o$, such
that $|I^d_{e_p}|= I_o$ and $I^d_{e_n} = - I^d_{e_p}$. Similarly,
allow $I^m$ a different current pattern of unit magnitude applied
through a different pair of boundary electrodes, say $P_2$,
inducing a new electric potential field. We denote the two fields
as $u(I^d)$ and $u(I^m)$ to emphasize their dependance on the
excitation currents. Taking the normal component of the vector
fields in (\ref{ohm}) for $I^d$, multiplying with $u(I^m)$ and
integrating over the boundary yields
$$
\int_{\partial B}\mathrm{d}s\; u(I^m) \mathbf{J}(I^d) \cdot \mathbf{n} = - \int_{\partial B} \mathrm{d}s\; \gamma \, u(I^m)  \nabla u(I^d) \cdot \mathbf{n}.
$$
At $\mathbf{x} \in \partial B$, let $j(\mathbf{x}) =
\mathbf{J}(\mathbf{x}) \cdot \mathbf{n}$  be the normal component
of the boundary current density field, then combining with
conditions (\ref{boundc}) and (\ref{boundv}) the left hand size of
the equation above reduces to
\begin{eqnarray*}
\int_{\partial B}\mathrm{d}s\; u(I^m) j(I^d) & = & \int_{\Gamma_e}\mathrm{d}s\; u(I^m) j(I^d)\\
                                              & = &  \sum_{\ell=1}^L I^d_\ell \int_{\Gamma_{e_\ell}}\mathrm{d}s \bigl ( U_\ell(I^m) - z_\ell j(I^m) \bigr )\\
                                              & = &  \sum_{\ell=1}^L I^d_\ell U_\ell(I^m) - \sum_{\ell=1}^L z_\ell I^d_\ell I^m_\ell\\
                                              & = & I_o \Bigl (U_{e_p}(I^m) - U_{e_n}(I^m) \Bigr )
\end{eqnarray*}
where the last simplification follows as the supports of $I^d$ and
$I^m$ are disjoint. Using the Green's first formula, the right
land side of the same equation can be developed to
$$
- \int_{\partial B} \mathrm{d}s\; \gamma\, u(I^m)  \nabla u(I^d) \cdot \mathbf{n} = - \int_B \mathrm{d}x \; \gamma \nabla u(I^m) \cdot \nabla u(I^d),
$$
and hence equating the two yields
$$
I_o \bigl (U_{e_p}(I^m) - U_{e_n}(I^m) \bigr ) = - \int_B \mathrm{d}x \; \gamma \nabla u(I^d) \cdot \nabla u(I^m).
$$
Working similarly for the adjoint field $u(I^m)$ leads to
$$
U_{e_{p'}}(I^d) - U_{e_{n'}}(I^d) = - \int_B \mathrm{d}x \; \gamma \nabla u(I^m) \cdot \nabla u(I^d),
$$
therefore for $I_o=1$ we have the standard form of Green's reciprocity theorem
\begin{equation}\label{grt}
U_{e_p}(I^m) - U_{e_n}(I^m) = U_{e_{p'}}(I^d) - U_{e_{n'}}(I^d).
\end{equation}

An alternative way to formalize this important result is via the
complete electrode admittance operator  $\mathcal{A}_{\gamma,z}:
\mathbb{C}^L \rightarrow \mathbb{C}^L$, a complex Hermitian matrix
that is the discrete equivalent to the Dirichlet-Neumann operator
encountered at the analysis of the continuum EIT model
\cite{Borcea}. For a fixed pair of $\gamma \in L^\infty(\overline
B)$ and $z \in \mathbb{R}^L$ this bounded operator maps linearly
the electrode potentials to the boundary currents inducing them,
$\mathcal{A}_{\gamma,z} U = I$. Let $\mu_d, \mu_m \in
\mathbb{R}^L$ be two vectors of zero sum
\begin{equation}\label{mus}
\mu_d(\ell) \doteq
\begin{cases} 1 & \ell = e_p\\
                        -1 & \ell=e_n \end{cases},
                                        \qquad \mu_m(\ell) \doteq \begin{cases} 1 & \ell=e_{p'}\\
                                    -1 & \ell=e_{n'}
            \end{cases},
\end{equation}
and consider the current pattern $I^d = I_o \mu_d$ where
$\sum_{\ell=1}^L I^d_\ell = 0$. By the Hermitianity of
$\mathcal{A}_{\gamma,z}$  the $k$'th measurement $\zeta_k =
U_{e_p'} - U_{e_n'} $ of the data vector $\zeta \in \mathbb{C}^m$
can now be expressed as
\begin{eqnarray*}
\zeta_k = \mu_m' U(I^d) & = & \mu_m' \, \mathcal{A}^{-1}_{\gamma,z} \,I^d\\
                     & = & I_o \mu_m' \,\mathcal{A}^{-1}_{\gamma,z} \,\mu_d\\
                    & = & I_o \mu_d' \,\mathcal{A}^{-1}_{\gamma,z} \,\mu_m\\
                    & = & \mu_d' U(I^m),
\end{eqnarray*}
thus we arrive at the principle (\ref{grt}).

\subsection{The inverse problem and its linear approximation}

The inverse problem of EIT is to reconstruct the admittivity function $\gamma \in L^\infty(\overline B)$ given the
operator $\mathcal{A}_{\gamma,z}$. Invariably, this requires
determining $\gamma$ given a finite set of linearly independent
current patters $(I^1,I^2,\ldots,I^q)$ and their respective
electrode potentials $(U^1,U^2,\ldots,U^q)$. Typically, in EIT
measurements one deals with frame(s) of (independent) data $\zeta$
that arise as linear combinations of the $U$ vectors. To address
this ill-posed problem some prior information on the data noise
$\eta$ and the (spatial) properties of $\gamma$ are needed.
To approach this problem one usually considers the nonlinear
operator equation
\begin{equation}\label{nfwd}
\zeta = \mathcal{E}(\gamma) + \eta,
\end{equation}
where $\mathcal{E}: L^\infty(\overline B) \rightarrow \mathbb{C}^m$.
A solution to this problem can be obtained by considering the regularized regression problem
\begin{equation}\label{nlp}
\gamma^* = \arg \min_\gamma \bigl \{ \bigl \|\zeta - \mathcal{E}(\gamma)  \bigr \|^2 + \mathcal{G}(\gamma) \bigr \},
\end{equation}
where $\mathcal{G}: L^\infty(\overline B) \rightarrow \mathbb{R}$
is a regularization functional. The choice of $\mathcal{G}$
depends on the a priori knowledge on $\gamma$, and it usually
takes the form of a smoothness enforcing term \cite{eidors}, an
$L1$ norm allowing for sparse solutions \cite{L1citation} or a
total variation norm that preserves large discontinuities in the
electrical properties \cite{Borsic_tv, Vogel}. As the forward
operator was proved to be analytic \cite{Borcea}, then subject to
the differentiability of $\mathcal{G}$, problem (\ref{nlp})
becomes suitable for gradient optimization methods \cite{engl}.
Linearizing $\mathcal{E}$ locally within a sphere
$S_{\gamma_p,\kappa} = \{\gamma : \|\gamma - \gamma_p\|^2 \leq
\kappa^2\}$, centered at an a priori guess-estimate $\gamma_p \in
L^\infty(\overline B)$, yields the Taylor series expansion
\begin{equation}\label{tay1}
\mathcal{E}(\gamma|S_{\gamma_p,\kappa}) = \mathcal{E}(\gamma_p) + \partial_\gamma \mathcal{E}(\gamma_p)(\gamma - \gamma_p) +   \mathcal{O} (\|\gamma-\gamma_p\|^2),
\end{equation}
where $\partial_\gamma \mathcal{E}: L^\infty(\overline B)
\rightarrow \mathbb{C}^m$ is the Fr\'{e}chet derivative of the
forward operator and $\kappa \geq 0$ can be thought to be the
Taylor series convergence radius. Truncating the series to
first-order accuracy yields the linearized approximation of
(\ref{nfwd})
\begin{equation}\label{lin}
\zeta \simeq \mathcal{E}(\gamma_p) + \partial_\gamma \mathcal{E}(\gamma_p) (\gamma - \gamma_p) + \eta, \qquad \gamma \in S_{\gamma_p,\kappa},
\end{equation}
which upon inserting into problem (\ref{nlp}) leads to the
regularized least-squares problem-- that coincides with the first
iteration of the regularized GN algorithm, for the optimal
admittivity perturbation
\begin{equation}\label{rlsq}
\delta \gamma_p^* = \arg\min_{\|\delta \gamma\|^2 \leq \kappa} \Bigl \{\bigl \|\delta \zeta - \partial_\gamma \mathcal{E}(\gamma_p) \delta \gamma \bigr \|^2 + \mathcal{G}(\delta \gamma) \Bigr\}, \quad \delta \zeta = \zeta  - \mathcal{E}(\gamma_p).
\end{equation}
Given the invertibility of the Hessian $\bigl [
\partial_\gamma \mathcal{E}(\gamma_p)' \partial_\gamma
\mathcal{E}(\gamma_p)   +  \partial_{\gamma
\gamma}\mathcal{G}(\gamma_p) \bigr ]$, implementing a GN algorithm
for $p=0,1,2\ldots$ yields a sequence of solutions
$\{\gamma_0,\gamma_1, \gamma_2, \ldots \}$ that converges to a
point in the neighborhood of $\gamma^*$, subject to the level of
noise in the data.  Analysis and numerical results on the
implementation of GN for the problem (\ref{rlsq}) can be found in
many publications and textbooks on EIT, see for example
\cite{vauhkonen}, \cite{engl}, \cite{Lechleiter} and
\cite{Kaltenbacher}.  We emphasize that this popular approach, as
well as its variants of Levenberg-Marquardt \cite{engl} and
quasi-Newton schemes \cite{Haber}, rely fundamentally on the local
linearization of the forward operator $\mathcal{E}$, and thus
yield a linear regression problem. Moreover, when $\mathcal{G}$ is
quadratic, the resulting cost-objective function to be minimized
is quadratic and thus Newton-type methods provide for speedy
analytically expressed solutions. The Noser algorithm proposed in
\cite{noser} is a typical example of this approach, where the
solution is computed after a single regularized GN iteration. Here
we propose an alternative approach that leads to high-order
regression problems. In this study we address explicitly the
quadratic case. The starting point toward this direction is the
nonlinear integral admittivity transform that we derive next.

\section{Nonlinear integral transform}

\subsection{Perturbation in power}
To derive the nonlinear transform that maps changes in admittivity
to those they cause on the observed boundary data we follow an
approach of power perturbation. The method, which is due to
Lionheart, has been developed in \cite{eidors} and
\cite{Polydorides_linear} to treat the real conductivity problem.
Here we extend it to the complex admittivity case incorporating
also the nonlinear terms arising in the perturbation analysis.
With minimal loss of generality we restrict ourselves to the case
of real contact impedance. If $\gamma$ and $u$ are smooth enough,
then applying the divergence theorem to (\ref{pdei}) for a test
function $\psi \in H_o^1(B)$ we have
\begin{equation}\label{w1}
0 = \int_B \mathrm{d}x\, \psi \nabla  \cdot \gamma \nabla u = - \int_B \mathrm{d}x \; \gamma \nabla u \cdot \nabla \psi + \int_{\partial B} \mathrm{d} s \; \psi \, \gamma \nabla u \cdot \mathbf{n}.
\end{equation}
If $\psi$ is set to satisfy the boundary conditions on the applied
currents (\ref{boundc}), the above becomes
$$
 \int_B \mathrm{d}x \; \gamma \nabla u \cdot \nabla \psi = \sum_{\ell=1}^L \int_{\Gamma_{e_\ell}} \mathrm{d} s \; (\psi - \Psi_\ell) \, \gamma \nabla u \cdot \mathbf{n} + \sum_{\ell=1}^L I_\ell \Psi_\ell ,
$$
where $\Psi \in \mathbb{C}^L$ is a test vector for the electrode
potentials. Plugging in the boundary condition on the measurements
(\ref{boundv}) yields the weak form of the forward problem
\cite{Kaipio_Somer_book}
\begin{equation}\label{weakv}
\int_B \mathrm{d}x \; \gamma \nabla u \cdot \nabla \psi + \sum_{\ell=1}^L \frac{1}{z_\ell} \int_{\Gamma_{e_\ell}} \mathrm{d} s \; (\psi - \Psi_\ell)(u - U_\ell) = \sum_{\ell=1}^L I_\ell \Psi_\ell,
\end{equation}
for all $(\psi,\Psi) \in H^1_o(B) \oplus \mathbb{C}^L$. Existence
and uniqueness of the weak (variational) solution $(u,U) \in
H^1_o(B) \oplus \mathbb{C}^L$ has been proved in
\cite{Somersalo_unique}. If $z_\ell >0$, then substituting
$\psi=\overline u$, $\Psi = \overline U$ into the weak form yields
the power conservation law
\begin{equation}\label{pcl}
\int_B \mathrm{d}x\; \gamma |\nabla u |^2 + \sum_{\ell=1}^L z_\ell \int_{\Gamma_{e_\ell}} \mathrm{d} s \; |\gamma \nabla u \cdot \mathbf{n}|^2 =\sum_{\ell=1}^L I_\ell \overline{U_\ell},
\end{equation}
which states that the power driven into the domain is either
stored as electric potential or dissipated at the contact
impedances of the electrodes. Consider now a complex perturbation
$\gamma \rightarrow \gamma + \delta \gamma$, causing $u
\rightarrow u + \delta u$ in the interior, and $U_\ell \rightarrow
U_\ell + \delta U_\ell$, $j \rightarrow j + \delta j$ at the
boundary. Recall that the normal component of the current density
field at the boundary is $j = \gamma \nabla u \cdot \mathbf{n}$,
under the new state of the model the volume integral in
(\ref{pcl}) becomes
\begin{eqnarray*}
\int_B \mathrm{d}x \; (\gamma + \delta \gamma) |\nabla(u + \delta u)|^2 & = & \int_B \mathrm{d}x\, \gamma|\nabla u|^2 + \int_B \mathrm{d}x\, \gamma \nabla u \cdot \nabla \overline{\delta u} \\
& & + \int_B \mathrm{d}x\, \gamma \nabla \delta u \cdot \nabla \overline{u}  + \int_B \mathrm{d}x\, \gamma|\nabla \delta u|^2 \\
& &  + \int_B \mathrm{d}x\, \delta \gamma |\nabla (u + \delta u)|^2.
\end{eqnarray*}
Notice that $\nabla \delta u \cdot \nabla \overline{u} =
\overline{\nabla u \cdot \nabla \overline{\delta u}}$ hence the
second and third integrals on the right sum up to $2 \int_B
\mathrm{d}x\, \gamma \; \Re\{\nabla u \cdot \nabla
\overline{\delta u}\}$. For $I$ and $z_\ell$ fixed, the
surface term in (\ref{pcl}) gbecomes
$$
\sum_{\ell=1}^L z_\ell \int_{\Gamma_{e_\ell}} \mathrm{d} s \; |j + \delta j|^2 = \sum_{\ell=1}^L z_\ell \int_{\Gamma_{e_\ell}}\mathrm{d} s \;  \bigl ( |j|^2 + j \, \bar{\delta j} + \delta j \, \bar{j} + |\delta j|^2 \bigr ),
$$
hence putting together the power conservation law for the new
state of the model and subtracting (\ref{pcl}) gives
\begin{eqnarray*}
\sum_{\ell=1}^L I_\ell \overline{\delta U_\ell} & = & \int_B \mathrm{d}x \; \gamma | \nabla \delta u|^2 + \int_\Omega \mathrm{d}x\, \gamma \nabla u \cdot \nabla \overline{\delta u} + \int_\Omega \mathrm{d}x\, \delta \gamma |\nabla (u + \delta u)|^2\\
& + & \int_B \mathrm{d}x\, \gamma \nabla \delta u \cdot \nabla \overline{u} +  \sum_{\ell=1}^L z_\ell \int_{\Gamma_{e_\ell}}\mathrm{d} s\, \delta j \, \overline{j} + \sum_{\ell=1}^L z_\ell \int_{\Gamma_{e_\ell}}\mathrm{d} s\, j\,\overline{\delta j} \\
& + &\sum_{\ell=1}^L z_\ell \int_{\Gamma_{e_\ell}}\mathrm{d} s\, |\delta j|^2.
\end{eqnarray*}
From the weak form (\ref{w1}) with $\psi=\overline{\delta u}$, the second integral above simplifies as
\begin{eqnarray*}
\int_B \mathrm{d}x\, \gamma \nabla u \cdot \nabla \overline{\delta u} & = & \int_{\partial B}\mathrm{d} s\; \overline{\delta u} \; \gamma \nabla u \cdot \mathbf{n} \\
                                                                                                                                & = & \int_{\Gamma_e}\mathrm{d} s\; \overline{\delta u} \, j\\
                                                                                                                                & = & \sum_{\ell=1}^L \int_{\Gamma_{e_\ell}}\mathrm{d} s\; (\overline{\delta U_\ell} - z_\ell \overline{\delta j}) \, j\\
                                                                                                                                & = & \sum_{\ell=1}^L I_\ell \overline{\delta U_\ell} - \sum_{\ell=1}^L z_\ell \int_{\Gamma_{e_\ell}}\mathrm{d} s\; \overline{\delta j} j,
\end{eqnarray*}
thus substituting back into the previous equation gives the perturbed power conservation law
\begin{eqnarray}\label{ppcl}
 \int_B \mathrm{d}x \; \gamma | \nabla \delta u|^2 + \int_B \mathrm{d}x \; & & \delta \gamma | \nabla (u + \delta u)|^2 + \int_B \mathrm{d}x \; \gamma \nabla \delta u \cdot \nabla \overline{u}\\ \nonumber
& & + \sum_{\ell=1}^L z_\ell \int_{\Gamma_{e_\ell}} \mathrm{d} s |\delta j|^2 + \sum_{\ell=1}^L z_\ell \int_{\Gamma_{e_\ell}} \mathrm{d} s\; \delta j \, \overline{j} = 0
\end{eqnarray}
In $B$, subtracting $\nabla \cdot \gamma \nabla u = 0$ from
$\nabla \cdot (\gamma + \delta \gamma)\nabla (u + \delta u) = 0$
gives the elliptic equation
\begin{equation}\label{iden}
\nabla \cdot [\gamma\nabla \delta u + \delta \gamma \nabla (u + \delta u)] = 0 \quad \text{in} \; B,
\end{equation}
and then applying (\ref{w1}) for $\psi= \overline{\delta u}$ yields
\begin{eqnarray*}
 \int_B \mathrm{d}x \; \gamma | \nabla \delta u|^2 & + & \int_B \mathrm{d}x \; \delta \gamma \nabla (u + \delta u) \cdot \nabla \overline{\delta u} \\
& = & \int_{\Gamma_{e_\ell}}\mathrm{d} s\;  \gamma \, \overline{\delta u}  \nabla \delta u \cdot \mathbf{n} + \int_{\Gamma_{e_\ell}}\mathrm{d} s\;  \delta \gamma \, \overline{\delta u}\nabla (u + \delta u) \bigr ) \cdot \mathbf{n}\\
& = & \int_{\Gamma_e}\mathrm{d} s\; \overline{\delta u} \, \delta j
\end{eqnarray*}
where the second equality holds true by the definition of the
perturbed normal component of boundary current density $j + \delta
j = ( \gamma + \delta \gamma )\nabla ( u + \delta u ) \cdot
\mathbf{n}$. Substituting back to (\ref{ppcl}) yields
\begin{eqnarray}\label{appc} \nonumber
& & \int_B \mathrm{d}x \; \delta \gamma | \nabla u|^2 + \int_B \mathrm{d}x \; (\gamma + \delta \gamma) \nabla \delta u \cdot \nabla \overline{u} + \sum_{\ell=1}^L z_\ell \int_{\Gamma_{e_\ell}} \mathrm{d} s |\delta j|^2\\
& & +  \sum_{\ell=1}^L z_\ell \int_{\Gamma_{e_\ell}} \mathrm{d} s \; \delta j \, \overline{j} + \int_{\Gamma_e} \mathrm{d} s \;  \overline{\delta u} \, \delta j = 0,
\end{eqnarray}
while applying the perturbations to the electrode potential
boundary condition (\ref{boundv}) gives $\overline{\delta u} =
\overline{\delta U_\ell} - z_\ell \overline{\delta j}$, and
therefore the last integral term becomes
\begin{eqnarray*}
\int_{\Gamma_e}\mathrm{d} s\, \overline{\delta u}\, \delta j & = & \sum_{\ell=1}^L \int_{\Gamma_{e_\ell}}\mathrm{d} s\, (\overline{\delta U_\ell} - z_\ell \overline{\delta j}) \delta j\\
                                        & = & \sum_{\ell=1}^L \overline{\delta U_\ell}  \int_{\Gamma_{e_\ell}}\mathrm{d} s\, \delta j - \sum_{\ell=1}^L z_\ell \int_{\Gamma_{e_\ell}}\mathrm{d} s\, |\delta j|^2\\
                                        & = &  - \sum_{\ell=1}^L z_\ell \int_{\Gamma_{e_\ell}}\mathrm{d} s\, |\delta j|^2,
\end{eqnarray*}
where the last equation is due to the following lemma.
\begin{lemma}\label{dj}
The perturbations in electrical admittivity $\gamma \rightarrow
\gamma + \delta \gamma$, and induced electric potential in the
interior of the domain $u \rightarrow u + \delta u$ give rise to a
perturbation in the boundary current density with vanishing
integral
$$
\int_{\partial B}\mathrm{d} s\, \delta j(\mathbf{x}) = 0, \qquad \mathbf{x} \in \partial B.
$$
\end{lemma}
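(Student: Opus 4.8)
The plan is to recognize $\delta j$ as the normal component, on $\partial B$, of a single divergence-free vector field, and then to integrate it using the divergence theorem. First I would compute $\delta j$ explicitly from its definition. Since $j = \gamma \nabla u \cdot \mathbf{n}$ and the perturbed flux is $j + \delta j = (\gamma + \delta\gamma)\nabla(u + \delta u)\cdot\mathbf{n}$, subtracting the two and cancelling the common term $\gamma\nabla u\cdot\mathbf{n}$ leaves
$$
\delta j = \bigl[\gamma\nabla\delta u + \delta\gamma\,\nabla(u + \delta u)\bigr]\cdot\mathbf{n}.
$$
The bracketed field is precisely the one whose divergence vanishes throughout $B$ by the elliptic identity (\ref{iden}); this observation is the crux of the argument.

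The second step is the integration itself. Writing $\mathbf{F} = \gamma\nabla\delta u + \delta\gamma\,\nabla(u + \delta u)$, we have $\nabla\cdot\mathbf{F} = 0$ in $B$ by (\ref{iden}) and $\delta j = \mathbf{F}\cdot\mathbf{n}$ on $\partial B$, so the divergence theorem gives
$$
\int_{\partial B}\mathrm{d}s\,\delta j = \int_{\partial B}\mathrm{d}s\,\mathbf{F}\cdot\mathbf{n} = \int_B \mathrm{d}x\,\nabla\cdot\mathbf{F} = 0,
$$
which is the claim. Equivalently, one may apply the identity (\ref{w1}) — which is just Green's first formula and holds for an arbitrary test function, not only those in $H_o^1(B)$ — to (\ref{iden}) with the constant test function $\psi \equiv 1$: the volume term drops out because $\nabla\psi = 0$, leaving exactly the boundary integral of $\delta j$.

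Because the computation is so short, the only genuine subtlety is regularity. The divergence theorem (equivalently, the validity of Green's formula tested against a constant) requires $\mathbf{F}$ to admit a well-defined normal trace and enough integrability for the flux to make sense; this is covered by the blanket ``smooth enough'' hypothesis under which (\ref{w1}) and (\ref{iden}) were derived, together with the Lipschitz regularity of $\partial B$ in Assumption \ref{ass}. In the weak setting the statement should be read as the vanishing of the boundary functional $\int_{\partial B}\delta j$ obtained by testing (\ref{iden}) against the constant. I would also flag the physical reading of the result: with no sources in the interior of $B$, charge conservation forces the net perturbed current through the boundary to vanish, exactly as the unperturbed currents satisfy $\sum_\ell I_\ell = 0$ in (\ref{icond}).
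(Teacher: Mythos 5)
Your proof is correct for the statement as written, but it is a genuinely different argument from the paper's. The paper does not touch the interior equation at all: it observes that the applied current is held fixed under the perturbation, so $\int_{\Gamma_{e_\ell}}\mathrm{d}s\,j=I_\ell=\int_{\Gamma_{e_\ell}}\mathrm{d}s\,(j+\delta j)$ electrode by electrode, whence $\int_{\Gamma_{e_\ell}}\mathrm{d}s\,\delta j=0$ for \emph{each} $\ell$, and $\delta j$ vanishes off $\Gamma_e$ by the no-flux condition. You instead identify $\delta j$ as the normal trace of the divergence-free field $\mathbf{F}=\gamma\nabla\delta u+\delta\gamma\,\nabla(u+\delta u)$ from (\ref{iden}) and integrate by the divergence theorem (equivalently, test (\ref{iden}) against $\psi\equiv 1$). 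Both are sound; your route is arguably cleaner and more structural, and your regularity remarks are apt. What the paper's route buys, however, is the stronger per-electrode statement $\int_{\Gamma_{e_\ell}}\mathrm{d}s\,\delta j=0$, and that is what the surrounding derivation actually uses: the step just before the lemma kills the term $\sum_{\ell}\overline{\delta U_\ell}\int_{\Gamma_{e_\ell}}\mathrm{d}s\,\delta j$, where the constants $\overline{\delta U_\ell}$ differ from electrode to electrode, so the global identity alone does not suffice. Your divergence-theorem argument cannot deliver the per-electrode version without re-importing the fixed-current boundary condition --- indeed the global identity $\int_{\partial B}\mathrm{d}s\,\delta j=0$ would hold even if the drive currents were redistributed among the electrodes, since both $j$ and $j+\delta j$ integrate to zero over $\partial B$ by charge conservation. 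So your proof establishes the lemma as literally stated, but to support its use in the text you should add the one-line observation that fixing $I_\ell$ forces $\int_{\Gamma_{e_\ell}}\mathrm{d}s\,\delta j=0$ on each electrode separately.
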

\begin{proof}
From the Neumann boundary condition (\ref{boundc}) the current
applied at the $\ell$'th electrode satisfies
$$
I_\ell = \int_{\Gamma_{e_\ell}} \mathrm{d} s\, \gamma \nabla u \cdot \mathbf{n} = \int_{\Gamma_{e_\ell}} \mathrm{d} s\, j.
$$
Keeping $I_\ell$ fixed before and after effecting the perturbations gives
$$
I_\ell = \int_{\Gamma_{e_\ell}} \mathrm{d} s\, (\gamma + \delta \gamma) \nabla (u + \delta u) \cdot \mathbf{n} = \int_{\Gamma_{e_\ell}} \mathrm{d} s\, (j + \delta j).
$$
Splitting the last integral, equating the right hand sides of the
two equations above, and recalling from (\ref{boundc}), that
$(j(\mathbf{x})+\delta j(\mathbf{x})) = 0$ for $\mathbf{x} \in
\partial  B \setminus \Gamma_e$ yields the result.
\end{proof}

Effectively equation (\ref{appc}) reduces further to
\begin{equation}
 \int_B \mathrm{d}x \; \delta \gamma | \nabla u|^2 + \int_B \mathrm{d}x \; (\gamma + \delta \gamma) \nabla \delta u \cdot \nabla \overline{u}  +  \sum_{\ell=1}^L z_\ell \int_{\Gamma_{e_\ell}} \mathrm{d} s \; \delta j \, \overline{j} = 0,
\end{equation}
and using once again the perturbed Robin condition the last integral simplifies further to
\begin{eqnarray*}
\sum_{\ell=1}^L z_\ell \int_{\Gamma_{e_\ell}} \mathrm{d} s \; \delta j \, \overline{j} & = & \sum_{\ell=1}^L \delta U_\ell \int_{\Gamma_{e_\ell}} \mathrm{d} s \; \overline{j} - \int_{\Gamma_{e_\ell}} \mathrm{d} s\; \delta u \cdot \overline{j}\\
                                                          & = & \sum_{\ell=1}^L \overline{I_\ell} \delta U_\ell - \int_{\Gamma_{e_\ell}}\mathrm{d} s\; \delta u \cdot \overline{j}\\
                                                          & = &     \sum_{\ell=1}^L \overline{I_\ell} \delta U_\ell - \int_{\Gamma_{e_\ell}}\mathrm{d} s\; \overline{\gamma}\, \delta u \cdot \nabla \overline{u} \cdot \mathbf{n}
\end{eqnarray*}
Now, consider the adjoint field problem (\ref{adj1})-(\ref{adj3})
subject to a current $I^m = \overline{I^d}$. Then by the
properties of the complete electrode admittance operator
$\mathcal{A}_{\gamma,z}$ it is easy to show that the adjoint
solution $v(\overline{\gamma},I^m)$ coincides with
$\overline{u}(\gamma,I^d)$. Applying the divergence theorem to the
adjoint field equation (\ref{adj})  gives
$$
\int_B \mathrm{d}x\; \overline{\gamma} \nabla \overline{u} \cdot \nabla \delta u = \int_{\Gamma_e}\mathrm{d} s\; \delta u \overline{\gamma} \nabla \overline{u} \cdot \mathbf{n} = \int_{\Gamma_e}\mathrm{d} s\; \delta u \, \overline{j}.
$$
From the above the perturbed power conservation law finalizes to
\begin{equation}\label{fppcl}
\sum_{\ell=1}^L \overline{I_\ell} \delta U_\ell = - \int_B \mathrm{d}x \, \delta \gamma |\nabla u|^2 - \int_B \mathrm{d}x \, \delta \gamma \nabla \delta u \cdot \nabla \overline{u} - \int_B \mathrm{d}x\, (\gamma - \overline{\gamma}) \nabla \delta u \cdot \nabla \overline{u}.
\end{equation}
Notice that for the purely real conductivity case, i.e. the cases
of electrical resistance tomography where $\omega=0$, the third
term on the right hand side vanishes and the above collapses to
the formula provided in \cite{Polydorides_linear}.

\begin{lemma}
If the applied currents are purely real,  the perturbed power conservation law (\ref{fppcl}) simplifies to
\begin{equation}\label{sppcl}
\sum_{\ell=1}^L I_\ell \delta U_\ell = - \int_\Omega \mathrm{d}x \, \delta \gamma \nabla u \cdot \nabla u - \int_\Omega \mathrm{d}x \, \delta \gamma \nabla \delta u \cdot \nabla u.
\end{equation}
\end{lemma}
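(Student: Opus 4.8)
The plan is to specialize the perturbed power conservation law (\ref{fppcl}) to the resistance-tomography regime and then strip away the complex conjugates by showing that all the fields involved are real. Three reductions carry (\ref{fppcl}) into (\ref{sppcl}). First, the integral carrying the factor $(\gamma-\overline\gamma)$ must disappear; this is immediate in the purely real conductivity case $\gamma=\sigma=\overline\gamma$ already isolated in the sentence preceding the lemma. Second, the prefactor $\overline{I_\ell}$ must collapse to $I_\ell$, which is exactly the hypothesis that the applied currents are real. Third --- and this is the only step with content --- the conjugated gradient $\nabla\overline u$ must be replaceable by $\nabla u$, so that $|\nabla u|^2=\nabla u\cdot\nabla u$ and $\nabla\delta u\cdot\nabla\overline u=\nabla\delta u\cdot\nabla u$. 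This last reduction amounts to proving that the direct field $u$ and its perturbation $\delta u$ are real-valued.

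To establish realness I would use the conjugation symmetry of the forward problem together with its uniqueness. The weak form (\ref{weakv}) is \emph{bilinear} in $(u,\psi)$, and in the present regime its coefficients $\gamma,z_\ell$ and its data $I_\ell$ are all real. Taking the complex conjugate of (\ref{weakv}) and noting that the conjugated test pair $(\overline\psi,\overline\Psi)$ sweeps out the same space $H^1_o(B)\oplus\mathbb{C}^L$ shows that $(\overline u,\overline U)$ is again a weak solution of the identical problem; uniqueness of the complete-electrode solution \cite{Somersalo_unique}, under the grounding and charge-conservation constraints (\ref{icond}), then forces $(\overline u,\overline U)=(u,U)$, i.e. the solution is real. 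Equivalently, one may split $u=u_r+iu_i$ and observe that with real coefficients and data the imaginary part solves the homogeneous problem driven by zero currents with vanishing boundary mean, hence $u_i\equiv 0$.

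With this in hand the conclusion follows quickly. Since in resistance tomography the conductivity perturbation $\delta\gamma$ is itself real, $\gamma+\delta\gamma$ is real while the current pattern is unchanged, so the very same argument applied to the perturbed direct problem gives that $u+\delta u$, and hence $\delta u$, $\delta U$ and $\delta j$, are real as well. Consequently $\overline{I_\ell}=I_\ell$, $\nabla\overline u=\nabla u$, and the two surviving volume integrals of (\ref{fppcl}) become $\int\delta\gamma\,\nabla u\cdot\nabla u$ and $\int\delta\gamma\,\nabla\delta u\cdot\nabla u$, which is precisely (\ref{sppcl}).

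The main obstacle is the realness claim, and it is genuinely load-bearing: uniqueness is what upgrades ``$\overline u$ is \emph{a} solution'' to ``$u=\overline u$''. I would also make explicit a caveat the statement glosses over, namely that real currents alone do \emph{not} suffice once $\gamma$ is truly complex, since then $\gamma-\overline\gamma=2i\omega\epsilon\neq0$ and $u$ remains complex; the lemma should therefore be read inside the real-conductivity ($\omega=0$) context established immediately above it, where the $(\gamma-\overline\gamma)$ term has already been dropped.
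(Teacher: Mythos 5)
Your argument is internally consistent, but it proves a strictly weaker statement than the one the paper needs, and your closing caveat --- that real currents ``do not suffice once $\gamma$ is truly complex'' --- is actually false. The lemma is invoked immediately afterwards to derive Theorem \ref{thm}, which is stated for complex admittivity $\gamma\in L^\infty(\overline B)$ with real currents (and is used that way in the first numerical example, where $\gamma^*$ has a nonzero imaginary part). In that regime your central device fails: $u$ and $\delta u$ are genuinely complex, $\nabla\overline u\neq\nabla u$, and $\gamma-\overline{\gamma}=2i\omega\epsilon\neq 0$, so there is no conjugate-stripping to be done. By invoking realness of the fields you have proved the lemma only for resistance tomography ($\omega=0$), not in the generality in which it is stated and subsequently used.

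The paper's proof does not obtain (\ref{sppcl}) by taking conjugates or real parts of (\ref{fppcl}) at all. It runs the Green's-identity computation a second time with \emph{unconjugated} test functions: apply (\ref{w1}) to the perturbed equation (\ref{iden}) with $\psi=u$ and to (\ref{pdei}) with $\psi=\delta u$, subtract, and evaluate the boundary terms with the Robin condition and Lemma \ref{dj}. This gives
\begin{equation*}
-\int_B \mathrm{d}x\,\delta\gamma\,\nabla u\cdot\nabla u-\int_B \mathrm{d}x\,\delta\gamma\,\nabla\delta u\cdot\nabla u \;=\; \int_{\Gamma_e}\mathrm{d}s\,\delta u\,j-\int_{\Gamma_e}\mathrm{d}s\,u\,\delta j \;=\; \sum_{\ell=1}^L I_\ell\,\delta U_\ell,
\end{equation*}
an identity containing no conjugates and therefore valid for complex $\gamma$, $u$, $\delta u$. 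The hypothesis of real currents enters only to identify the left-hand side $\sum_\ell \overline{I_\ell}\,\delta U_\ell$ of (\ref{fppcl}) with $\sum_\ell I_\ell\,\delta U_\ell$, so that (\ref{sppcl}) may legitimately replace (\ref{fppcl}). Your conjugation-symmetry-plus-uniqueness argument is a correct and arguably cleaner route in the $\omega=0$ special case, but to close the gap you need the unconjugated Green's identity (or an equivalent), not realness of the fields.
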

\begin{proof}
Consider applying the diverge theorem to (\ref{iden}) for a test
function $\psi=\bar{u}$ and to the adjoint pde (\ref{adj}) for
$\psi=\delta u$. Then upon subtracting the later from the former
yields,
\begin{eqnarray*}
\sum_{\ell=1}^L \bar{I_\ell} \delta U_\ell & = & - \int_B \mathrm{d}x \, \delta \gamma |\nabla u|^2 - \int_B \mathrm{d}x \, \delta \gamma \nabla \delta u \cdot \nabla \overline{u} - \int_B \mathrm{d}x\, (\gamma - \overline{\gamma}) \nabla \delta u \cdot \nabla \overline{u}\\
& = & \int_{\Gamma_e}\mathrm{d} s\; \overline{\gamma} \delta u \nabla \overline{u} \cdot \mathbf{n} - \int_{\Gamma_e}\mathrm{d} s\; \overline{u} \bigl (\gamma \nabla \delta u + \delta \gamma \nabla (u+\delta u) \bigr  ) \cdot \mathbf{n}\\
& = & \int_{\Gamma_e}\mathrm{d} s\;  \delta u \, \overline{j} - \int_{\Gamma_e}\mathrm{d} s\; \overline{u} \,  \delta j\\
& = & \int_{\Gamma_e}\mathrm{d} s\; (\delta U_\ell - z_\ell \delta j) \overline{j} - \int_{\Gamma_e}\mathrm{d} s\; (\overline{U_\ell} - z_\ell \overline{j}) \delta j ,
\end{eqnarray*}
where the last equation is due to lemma (\ref{dj}). Similarly,
from the diverge theorem to (\ref{iden}) with $f=u$ and to
(\ref{pdei}) with $\psi=\delta u$ one obtains
\begin{eqnarray*}
& & - \int_B \mathrm{d}x \, \delta \gamma \nabla u \cdot \nabla u - \int_B \mathrm{d}x \, \delta \gamma \nabla \delta u \cdot \nabla u \\
& & = \int_{\Gamma_e}\mathrm{d} s\; \gamma \delta u \nabla u \cdot \mathbf{n} - \int_{\Gamma_e}\mathrm{d} s\; u \bigl (\gamma \nabla \delta u + \delta \gamma \nabla (u+\delta u) \bigr  ) \cdot \mathbf{n}\\
&  & = \int_{\Gamma_e}\mathrm{d} s\;  \delta u \, \overline{j} - \int_{\Gamma_e}\mathrm{d} s\; \overline{u} \,  \delta j\\
&  & = \int_{\Gamma_e}\mathrm{d} s\; (\delta U_\ell - z_\ell \delta j) j - \int_{\Gamma_e}\mathrm{d} s\; (U_\ell - z_\ell j) \delta j = \sum_{\ell=1}^L I_\ell \delta U_\ell.
\end{eqnarray*}
From the above the result follows in the case where
$\overline{I_\ell} = I_\ell$, i.e. the imaginary component of the
currents is zero.
\end{proof}

For simplicity we assume the case of real excitation currents. For
a current pattern $I$,  let $\gamma_p, \; \gamma \in
L^\infty(\overline B)$, the states of the model before and after
the admittivity perturbation so that the change on the potential
of the $\ell$'th electrode is
$$
\delta U_\ell (I) = U_\ell(\gamma,I) - U_\ell(\gamma_p,I),
$$
and evaluate equation (\ref{fppcl}) for some pair drive current
patterns that satisfy the constraint (\ref{icond}). Let $\mu_d,
\mu_m \in \mathbb{R}^L$ as in (\ref{mus}) some discrete patterns
of zero sum, and define the currents
$$
I^d = a \mu_d, \qquad I^m = \mu_m, \qquad I^c = I^d + I^m.
$$
Suppose the currents are applied to the model with known
admittivity $\gamma_p$, and then to that of the unknown $\gamma$,
giving rise to $U(\gamma_p,I^t) = \mathcal{A}^{-1}_{\gamma_p,z}
I^t$, and $U(\gamma,I^t) = \mathcal{A}^{-1}_{\gamma,z} I^t$, from
which we compute the difference as
$$
\delta U(I^t) = U(\gamma,I^t) - U(\gamma_p,I^t),
$$  for $t=\{d,m,c\}$. Based on the linearity of the admittance operator we deduce that
$$
\delta U(I^c) = \mathcal{A}^{-1}_{\gamma,z} (I^d + I^m) - \mathcal{A}^{-1}_{\gamma_p,z} (I^d + I^m),
$$
$\delta U(I^d) = \mathcal{A}^{-1}_{\gamma,z} I^d - \mathcal{A}^{-1}_{\gamma_p,z} I^d$, and $\delta U(I^m) = \mathcal{A}^{-1}_{\gamma,z} I^m - \mathcal{A}^{-1}_{\gamma_p,z} I^m$. Evaluating the left hand side of (\ref{fppcl}) for the three current patterns yields
$$
\sum_{\ell=1}^L I^c_\ell \delta U^c_\ell - \sum_{\ell=1}^L I^d_\ell \delta U^d_\ell - \sum_{\ell=1}^L I^m_\ell \delta U^m_\ell = I_o \bigl ( \delta U^m_{e_p} - \delta U^m_{e_n} \bigr ) + \bigl ( \delta U^d_{e_{p'}} - \delta U^d_{e_{n'}} \bigr ).
$$
It is worth noticing that only $\delta U^d$ are realistically
measurable, since data acquisition occurs only under the direct
patterns and borrowing the reciprocity result (\ref{grt}) for
$I_o=1$ gives
\begin{equation}
\sum_{\ell=1}^L I^c_\ell \delta U^c_\ell - \sum_{\ell=1}^L I^d_\ell \delta U^d_\ell - \sum_{\ell=1}^L I^m_\ell \delta U^m_\ell = 2 \bigl ( \delta U^d_{e_{p'}} - \delta U^d_{e_{n'}} \bigr ).
\end{equation}
Expanding the corresponding right hand sides from (\ref{sppcl}) yields
$$
\sum_{\ell=1}^L I^c_\ell \delta U^c_\ell - \sum_{\ell=1}^L I^d_\ell \delta U^d_\ell - \sum_{\ell=1}^L I^m_\ell \delta U^m_\ell =  -2 \int_B \mathrm{d}x\; \delta \gamma \; \nabla u(I^d) \cdot \nabla u(I^m) - 2 \int_B \mathrm{d}x\; \delta \gamma \; \nabla \delta u(I^d) \cdot \nabla u(I^m),
$$
where we have used $u(\gamma, I^c) = u(\gamma, I^d) +
u(\gamma,I^m)$ for the interior fields. Let the $k$'th measurement
be $\zeta_k = \mu_m U$ and note that $u(\gamma_p,I^m) =
\overline{v}$, for $v$ the adjoint fields solution of (\ref{adj}).
In effect, substituting and simplifying yields
\begin{equation}\label{jacobian}
\delta \zeta_k = -\int_B \mathrm{d}x \; \delta \gamma \;\nabla u(\gamma_p,I^d) \cdot \nabla \overline{v}(\gamma_p,I^m) - \int_B \mathrm{d}x\; \delta \gamma \; \nabla \delta u(I^d) \cdot \nabla \overline{v}(\gamma_p,I^m).
\end{equation}
We are now ready to tabulate our main result in the form of the
following theorem.

\begin{theorem}\label{thm} (The forward EIT transform)
Consider the complete electrode model of (\ref{pdei}) -
(\ref{icond}) on a simply connected domain $B$, and suppose assumptions \ref{ass} hold. Suppose further that the applied currents
are purely real and that boundary measurements $\zeta \in
\mathbb{C}^m$ are observed. If $u$ is the direct solution of this
problem and $v$ the pertinent adjoint vector satisfying
(\ref{adj}), then for any prior admittivity guess $\gamma_p \in
L^\infty(\overline B)$ with direct solution
$\mathcal{E}(\gamma_p)$, the data change $\delta \zeta_k$ the
$k$th element of the residual $\delta \zeta = \zeta -
\mathcal{E}(\gamma_p)$ satisfies
\begin{equation}\label{nlt}
\delta \zeta_k = - \int_B \mathrm{d}x \; \delta \gamma \nabla u(\gamma) \cdot \nabla \overline{v}(\gamma_p),
\end{equation}
where  $\delta \gamma = \gamma - \gamma_p$ is the residual vector between the target solution and the initial-prior guess.
\end{theorem}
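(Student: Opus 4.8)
The plan is to obtain (\ref{nlt}) directly from the expression (\ref{jacobian}) already established for the $k$th data residual, so that the theorem amounts to a single algebraic consolidation of two integrals. Recall that (\ref{jacobian}) writes $\delta\zeta_k$ as the sum of a first volume integral involving the unperturbed direct field $u(\gamma_p,I^d)$ and a second integral involving the field perturbation $\delta u(I^d)$, both paired against the conjugated adjoint gradient $\nabla\overline{v}(\gamma_p,I^m)$ under the common weight $\delta\gamma$. The substantive work has already been carried out upstream by the power-perturbation machinery: the perturbed power conservation law (\ref{fppcl}), its real-current specialization (\ref{sppcl}), the vanishing-integral Lemma \ref{dj}, and the reciprocity identity (\ref{grt}) together produce (\ref{jacobian}). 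What remains is only to combine its two terms.

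First I would factor the common weight $\delta\gamma$ and the common adjoint gradient $\nabla\overline{v}(\gamma_p,I^m)$ out of the two integrals in (\ref{jacobian}), collecting the direct-field gradients into the single bracket $\nabla u(\gamma_p,I^d) + \nabla\delta u(I^d)$. Next I would invoke the very definition of the field perturbation used throughout the derivation, namely $\delta u(I^d) = u(\gamma,I^d) - u(\gamma_p,I^d)$, which is exactly how $\delta u$ was introduced when passing from the state $(\gamma_p,u(\gamma_p))$ to $(\gamma,u(\gamma))$ under the admittivity change $\delta\gamma = \gamma-\gamma_p$. Linearity of the gradient then gives $\nabla u(\gamma_p,I^d) + \nabla\delta u(I^d) = \nabla u(\gamma,I^d)$, and the two integrals merge into $-\int_B \mathrm{d}x\,\delta\gamma\,\nabla u(\gamma,I^d)\cdot\nabla\overline{v}(\gamma_p,I^m)$, which is precisely (\ref{nlt}) once the current-pattern labels $I^d,I^m$ are suppressed.

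Rather than a genuine obstacle, the point worth stressing is that this consolidation is \emph{exact} and not a truncation: the first integral in (\ref{jacobian}) is exactly the Fr\'{e}chet (Jacobian) term appearing in the linearized model (\ref{lin}), while the second integral is the full nonlinear correction carried by $\delta u$. Collapsing both into a single integrand evaluated at the true direct field $u(\gamma)$ is what upgrades the approximate linear map into the exact transform (\ref{nlt}); the nonlinearity is now encoded implicitly through the dependence of $u(\gamma)$ on the unknown $\gamma$, whereas the adjoint factor $\overline{v}(\gamma_p)$ remains evaluated at the computable prior. Assumptions \ref{ass} enter only to guarantee that the direct and adjoint weak solutions exist, are unique, and are regular enough for each divergence-theorem and weak-form step leading to (\ref{jacobian}) to be legitimate, so no additional hypotheses are required to close the argument.
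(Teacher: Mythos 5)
Your proposal is correct and follows essentially the same route as the paper: the paper's proof likewise obtains (\ref{nlt}) by substituting $\delta u = u(\gamma) - u(\gamma_p)$ into (\ref{jacobian}) and merging the two integrals, with all the substantive work having been done upstream in deriving (\ref{jacobian}). Your additional remarks on the exactness of the consolidation and the role of Assumptions \ref{ass} are accurate but not part of the paper's (one-line) argument.
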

\begin{proof}
The result follows immediately by substituting $\delta u =
u(\gamma) - u(\gamma_p)$ for all direct currents $I^d$ to the
integral equation (\ref{jacobian}), and holds true for all
admissible bounded perturbations $\delta \gamma$. This completes
the proof.
\end{proof}

We would like to note that, in the Appendix we provide an alternative derivation of (\ref{nlt}) suggested to us by an anonymous reviewer based on a weak formulation of the problem.

\subsection{Generalization to Poisson's equation with mixed boundary conditions}

Although the complete electrode model is now widely used for EIT, our new
model formulation in (\ref{nlt}) as well as the image
reconstruction method to be described next are easily amenable to
treat more simplistic electrode models. In particular, we now show
that the above result holds true for a more general setting of
impedance imaging involving the Poisson equation with Dirichlet
and Neumann boundary conditions and point electrodes
\cite{Miller}, \cite{Kemna}. In geo-electrical application one
usually encounters the model
\begin{equation}\label{pdeg}
\nabla \cdot [ \gamma(\mathbf{x},\omega) \nabla u (\mathbf{x},\omega)]=f(\mathbf{x}), \quad \mathbf{x} \in B,
\end{equation}
with boundary conditions of the form
\begin{equation}\label{pdeg_bc}
\alpha(\mathbf{x})\gamma(\mathbf{x},\omega)\nabla u(\mathbf{x},\omega) \cdot \mathbf{n} + \beta(\mathbf{x}) u(\mathbf{x},\omega) = 0, \quad \mathbf{x} \in \partial B.
\end{equation}
where $\alpha$ and $\beta$ are functions defined on $\partial B$
and are not simultaneously zero to thoroughly impose the boundary
conditions. To consider problems with different types of boundary
conditions on different regions of $\partial B$, the functions
$\alpha$ and $\beta$ are allowed to be discontinuous. Figure
\ref{figgeo} shows a common geophysical problem associated with
the model in (\ref{pdeg})--(\ref{pdeg_bc}). In this problem
$\Gamma_n$ is the interface between the earth and air where a zero
current condition ($\beta=0$) holds. In the remaining boundary
$\Gamma_m=\partial B \setminus \Gamma_n$, the values $\alpha$ and
$\beta$ are appropriately chosen to model an infinite half-space
\cite{pollock2008temporal}. When the sources of current are far
from $\Gamma_m$, a zero potential condition ($\alpha=0$) may be
used as an approximation to the infinite half-space
\cite{tripp1984two}.

\begin{figure}
\begin{center}
\epsfig{file=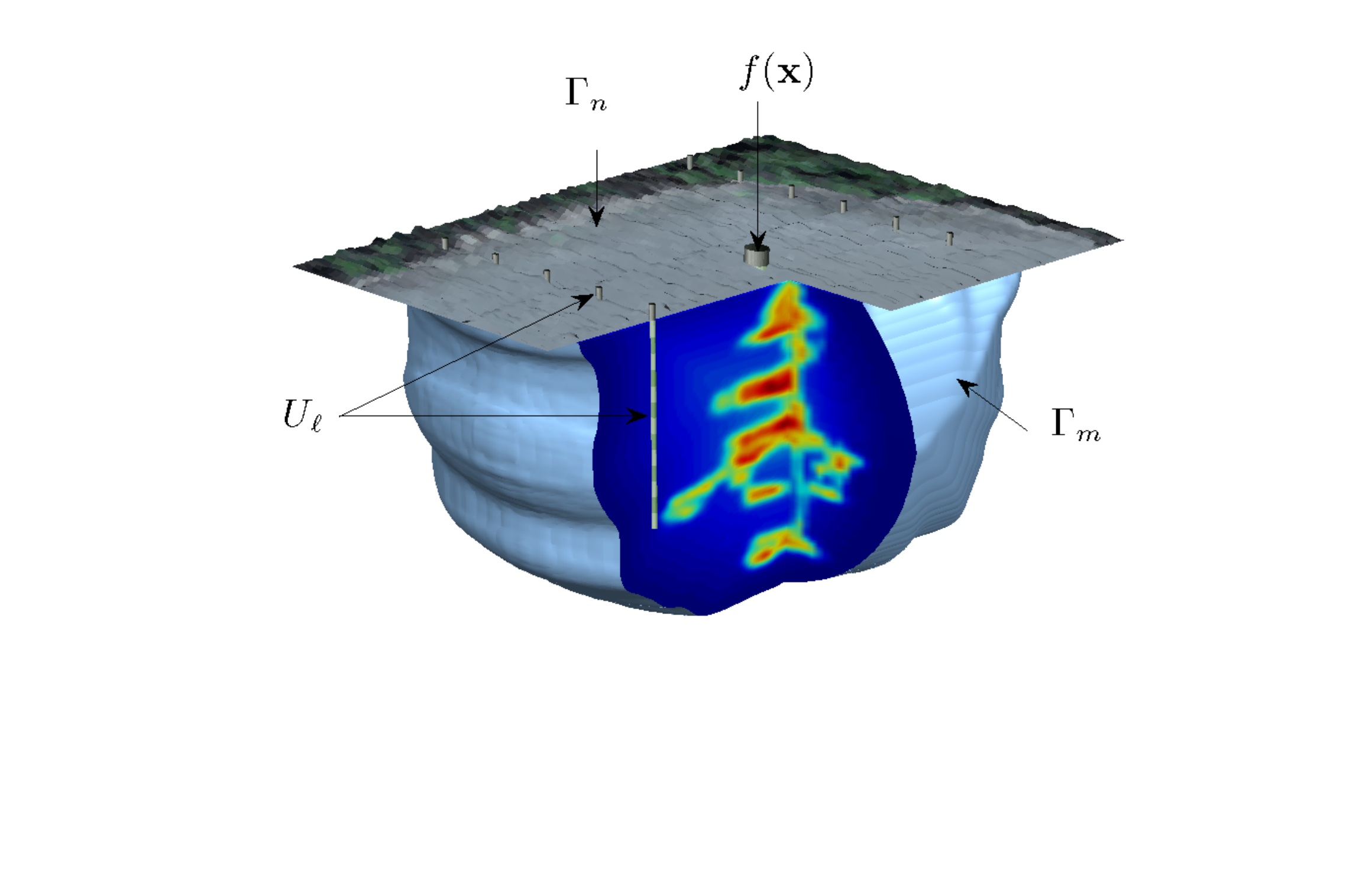,height=11cm,width=16.5cm}\\\vspace{-2cm} 
\end{center}
\caption{Geophysical application problem setting. The current sources $f(x)$  are applied though the borehole electrodes yielding electrode potentials $U_\ell$. $\Gamma_m$ is the model termination boundary and $\Gamma_n$ is the upper surface of the model domain $B$.}\label{figgeo}
\end{figure}

The electric potential measurements are collected through
point-wise electrodes, contact impedances of which are effectively
zero. The measurement points are $\mathbf{x}_\ell$ for $\ell =
1,2,\ldots,L$ and the measured potential at every point is
\begin{equation}\label{eq:3}
U_\ell=\int_B  \mathrm{d}x \; u(\mathbf{x})\delta(\mathbf{x}-\mathbf{x}_\ell),
\end{equation}
where $\delta(.)$ denotes the Dirac delta function. Consider a
perturbation $\gamma\rightarrow \gamma + \delta \gamma$ in the
additivity causing the potential perturbation $ u\rightarrow
u+\delta u$. Introducing these into (\ref{pdeg})--(\ref{pdeg_bc})
gives
\begin{eqnarray}
\nabla \cdot \big((\gamma+\delta \gamma) \nabla ( u+\delta u)\big) &=& f, \hspace{1.45 cm} \mbox{on}\;B,\label{eq:4}\\
\alpha(\gamma+\delta \gamma)\nabla( u+\delta u)\cdot \mathbf{n}+\beta( u+\delta u)
&=& 0, \hspace{1.45cm} \mbox{on}\;\partial B.\label{eq:5}
\end{eqnarray}
Expanding (\ref{eq:3}) and (\ref{eq:4}) and using (\ref{pdeg})--(\ref{pdeg_bc}) to simplify the resulting terms yields
\begin{eqnarray}
\nabla \cdot (\delta \gamma \nabla  u)+\nabla \cdot (\gamma \nabla \delta u)+\nabla \cdot (\delta \gamma \nabla \delta u)&=&0, \hspace{1.45 cm} \mbox{on}\;B,\label{eq:6}\\
\alpha(\delta \gamma \nabla  u \cdot \mathbf{n} + \gamma \nabla \delta u\cdot \mathbf{n}+ \delta \gamma \nabla \delta u \cdot
\mathbf{n})+\beta\delta u&=&0, \hspace{1.45cm}
\mbox{on}\;\partial B.\label{eq:7}
\end{eqnarray}
Based on (\ref{eq:3}) a perturbation in the measurement at $\mathbf{x}_\ell$ can be written as a volume integral
\begin{equation}
  \label{eq:8}
  \delta U_\ell =  \int_B \mathrm{d}x \; \delta u(\mathbf{x})\delta(\mathbf{x}-\mathbf{x}_\ell)
\end{equation}
To proceed with finding a closed form for the measurement perturbation $\delta U_\ell$, it is useful to define $v$, as the
solution to the adjoint system
\begin{eqnarray}
\nabla \cdot (\overline \gamma \nabla  v_\ell)&=&\delta(\mathbf{x}-\mathbf{x}_\ell),
\hspace{1.45cm}\mathbf{x}\in B,\label{eq:9}
\\\alpha\overline \gamma\nabla v_\ell\cdot \mathbf{n}+\beta v_\ell &=& 0,
\hspace{2.65cm} \mathbf{x}\in\; \partial B,\label{eq:10}
\end{eqnarray}
from which it is easily inferred that $\overline{v_\ell}$ satisfies
\begin{eqnarray}
\nabla \cdot (\gamma \nabla \overline{v_\ell})&=&\delta(\mathbf{x}-\mathbf{x}_\ell),
\hspace{1.45cm}\mathbf{x}\in B,\label{eq:11}
\\\alpha\gamma\nabla \overline{v_\ell}\cdot \mathbf{n}+\beta \overline{v_\ell} &=&
0, \hspace{2.65cm} \mathbf{x}\in\; \partial B.\label{eq:12}
\end{eqnarray}
Using (\ref{eq:8}) and (\ref{eq:11}) we conclude that the
perturbation to the residuals can be written in terms of the
adjoint field as
\begin{equation}
  \label{eq:13}
  \delta U_\ell =  \int_B \mathrm{d}x \; \delta u(\mathbf{x}) \;\nabla\cdot(\gamma \nabla \overline{v_\ell}).
\end{equation}
The remaining derivation requires extensive use of the following
identity derived from Green's theorem \cite{marsden2003vector} for
vector function $\mathbf{\Psi}$ and scalar function $\psi$
\begin{equation}
  \label{eq:14}
  \int_B \mathrm{d}x\, \mathbf{\Psi}\cdot\nabla \psi  + \int_B \mathrm{d}x\, \psi \nabla \cdot \mathbf{\Psi}  =
  \int_{\partial B} \mathrm{d}s\, \psi \mathbf{\Psi} \cdot\mathbf{n}.
\end{equation}
We begin by taking $\psi = \delta u$ and $\mathbf{\Psi} = \gamma \nabla \overline{v_\ell}$ in (\ref{eq:13}) to obtain
\begin{equation}
  \label{eq:15}
  \delta U_\ell =
    -\int_B \mathrm{d}x\, \gamma \nabla \overline{v_\ell} \cdot \nabla\delta u
  + \int_{\partial B}  \mathrm{d}s\, \gamma \delta u\nabla\overline{v_\ell}\cdot \mathbf{n}.
\end{equation}
Next using $\psi =  \overline{v_\ell}$ and $\mathbf{\Psi} =
\gamma\nabla\delta u$ in the first term on the right hand side of
(\ref{eq:15}), we have
\begin{align}
  \label{eq:16}
    \delta U_\ell =
     \int_B \mathrm{d}x\, \overline{v_\ell} \nabla\cdot(\gamma\nabla\delta u)
    - \int_{\partial B}  \mathrm{d}s\, \gamma\;\overline{v_\ell}\nabla\delta u \cdot \mathbf{n}
    + \int_{\partial B}  \mathrm{d}s\, \gamma \delta u\nabla\overline{v_\ell} \cdot \mathbf{n}.
\end{align}
From (\ref{eq:6}), $\nabla \cdot (\gamma \nabla \delta u)=-\nabla \cdot (\delta \gamma \nabla  u)-\nabla \cdot (\delta \gamma \nabla \delta u)$ which we use in the first term on the right hand side of (\ref{eq:16}) to arrive at
\begin{align}
  \label{eq:17}
  \delta U_\ell =
     -\int_B \mathrm{d}x\, \overline{v_\ell} \nabla\cdot\big(\delta \gamma\nabla (u+ \delta u)\big)
    - \int_{\partial B} \mathrm{d}s\, \gamma\;\overline{v_\ell}\nabla\delta u \cdot \mathbf{n}
    + \int_{\partial B} \mathrm{d}s\, \gamma \delta u\nabla\overline{v_\ell}\cdot \mathbf{n}.
\end{align}
Appealing once more to (\ref{eq:14}) with $\psi =
\overline{v_\ell}$ and $\mathbf{\Psi} = \delta \gamma\nabla
(u+\delta u)$ in the first term of (\ref{eq:17}) gives
\begin{align}
\nonumber
  \delta U_\ell &=
     \int_B \mathrm{d}x\, \delta \gamma \nabla \overline{v_\ell}\cdot\nabla u+\int_B \mathrm{d}x\, \delta \gamma \nabla \overline{v_\ell}\cdot\nabla
     \delta u\\
    &- \int_{\partial B} \mathrm{d}s\, \big(\gamma\overline{v_\ell}\nabla\delta u\cdot\mathbf{n}+\delta \gamma\overline{v_\ell}\nabla  u \cdot\mathbf{n}+\delta \gamma\overline{v_\ell}\nabla  \delta u \cdot\mathbf{n}-\gamma \delta u \nabla \overline{v_\ell} \cdot\mathbf{n} \big).\label{eq:18}
\end{align}
We now show that the surface integral term in (\ref{eq:18}) is zero. For this purpose we multiply both sides of (\ref{eq:12}) by
$\delta u$ to arrive at
\begin{equation}
  \label{eq:19}
  \alpha\gamma\delta u\nabla \overline{v_\ell} \cdot \mathbf{n}+\beta\delta u \overline{v_\ell} = 0
\end{equation}
Using (\ref{eq:7}) to replace the term $\beta\delta u$ in (\ref{eq:19})
results in
\begin{equation}
  \label{eq:20}
  -\alpha \big( \gamma\overline{v_\ell}\nabla\delta u\cdot\mathbf{n}+ \delta \gamma\overline{v_\ell}\nabla  u \cdot\mathbf{n}+\delta \gamma\overline{v_\ell}\nabla  \delta u \cdot\mathbf{n}-\gamma\delta u\nabla \overline{v_\ell} \cdot\mathbf{n} \big)=0, \qquad
  \text{on}\;\partial B.
\end{equation}
The parenthesized expression in (\ref{eq:20}) is the same as the
surface integrand in (\ref{eq:18}). We partition the boundary
$\partial B$ into $\Gamma_\alpha$ where $\alpha\neq 0$ and
$\partial B \setminus\Gamma_\alpha$ where $\alpha=0$. Clearly
(\ref{eq:20}) results the inside bracket expression to vanish on
$\Gamma_\alpha$. On the remaining surface $\partial B
\setminus\Gamma_\alpha$ that $\alpha=0$, we certainly have
$\beta\neq 0$ since $\alpha$ and $\beta$ may not be simultaneously
zero and using this fact in (\ref{eq:7}) and (\ref{eq:12}) would
result in $\delta u=0$ and $\overline{v_\ell}=0$ which again make
the inside bracket term zero. Therefore the surface integral in
(\ref{eq:18}) vanishes both on $\Gamma_\alpha$ and $\partial B
\setminus\Gamma_\alpha$ and therefore
\begin{equation}\label{eq:21}
  \delta U_\ell =
     \int_B \mathrm{d}x\, \delta \gamma \nabla \overline{v_\ell}\cdot\nabla u+\int_B \mathrm{d}x\, \delta \gamma \nabla \overline{v_\ell}\cdot\nabla
     \delta u,
\end{equation}
and thus by substituting for $\delta u$ in the second term we arrive at the result of the theorem \ref{thm}.

\section{High-order regularized regression}\label{results}

Within the $d-$ dimensional sphere $S_{\gamma_p,\kappa}$, the
electric potential field in the interior of the domain admits a
Taylor expansion
$$
u(\gamma) = u(\gamma_p) + \partial_\gamma u(\gamma_p) \delta \gamma + \mathcal{O}(\|\delta \gamma^2\|)
$$
hence to first-order accuracy this can be approximated by
\begin{equation}\label{linap}
u(\gamma) \simeq \hat{u}(\gamma) = u(\gamma_p) + \partial_\gamma u(\gamma_p) \delta \gamma.
\end{equation}
Introducing the right hand side of (\ref{linap}) in the integral equation (\ref{nlt}) gives
\begin{eqnarray}\label{exnlt}
\delta \zeta_k & \approx & - \int_B \mathrm{d}x \; \delta \gamma \nabla  \bigl (  u(\gamma_p) + \partial_\gamma u(\gamma_p) \delta \gamma \bigr )\cdot \nabla \overline{v}(\gamma_p),\\\nonumber
                       & = & - \int_B \mathrm{d}x \; \delta \gamma \nabla  u(\gamma_p) \cdot \nabla \overline{v}(\gamma_p) - \int_B \mathrm{d}x \; \delta \gamma \; \nabla (\partial_\gamma u(\gamma_p) \delta \gamma) \cdot \nabla \overline{v}(\gamma_p),
\end{eqnarray}
where the first, linear term, involves the definition of the
Fr\'{e}chet derivative of the forward mapping as in (\ref{lin})
\cite{Miller}, \cite{Oldenburg}, and the second nonlinear term the
differential operator $\partial_\gamma u(\gamma_p)$ that provides
a measure on local sensitivity of the potential in the interior of
the domain to perturbations in electrical properties. From
(\ref{linap}), (\ref{exnlt}), it is trivial to deduce that the
linear approximation of the forward operator $\mathcal{E}$ as in
(\ref{tay1}), as proposed by Calder\'{o}n in \cite{Calderon},
effectively imposes a zeroth-order Taylor approximation on the
electric potential $\hat u(\gamma) \simeq u(\gamma_p)$. In turn
this enforces $\partial_\gamma u$ and higher-order derivatives to
vanish everywhere in $\bar B$, thus elliminating the nonlinear
terms in (\ref{jacobian}) and (\ref{exnlt}). Let the linear
operator $\partial_\gamma \mathcal{E} = \mathcal{J}:
L^\infty(\overline B) \rightarrow \mathbb{C}^m$, and nonlinear,
quadratic in $\delta \gamma$, $\mathcal{K}: L^\infty(\overline B)
\rightarrow \mathbb{C}^m$ defined by
\begin{eqnarray}\label{linop}
\mathcal{J}\, \delta \gamma & \doteq & - \int_B \mathrm{d}x \; \delta \gamma \; \nabla  u(\gamma_p) \cdot \nabla \overline{v}(\gamma_p),\\\label{nonlinop}
\mathcal{K} \,\delta \gamma & \doteq & - \int_B \mathrm{d}x \; \delta \gamma \; \nabla \partial_\gamma u(\gamma_p) \delta \gamma \cdot \nabla \overline{v}(\gamma_p)
\end{eqnarray}
then the inverse problem can be formulated in the context of
regularized regression based on the nonlinear operator equation
\begin{equation}\label{contip}
\delta \zeta = \mathcal{J} \delta \gamma + \mathcal{K} \delta \gamma + \eta.
\end{equation}

\subsection{Numerical approximation}

Usually the EIT problem is approached with a numerical
approximation method like finite elements, where the governing
equations are discretized on a finite dimensional model of the
domain, say $B_h(n,N)$ comprising $n$ nodes connected in $N$
elements \cite{eidors}. For simplicity in the notation we assume
linear Lagrangian finite elements and consider element-wise linear
and constant basis functions for the support of the electric
potential $u$ and conductivity $\gamma$ respectively,
\begin{equation}\label{approx}
u(\mathbf{x},\omega) = \sum_{i=1}^n u_i \phi_i, \quad \phi_i:B_h \rightarrow \mathbb{R}, \qquad \gamma(\mathbf{x},\omega) = \sum_{i=1}^N  \gamma_i \chi_i, \quad \chi_i:B_h \rightarrow \mathbb{R}
\end{equation}
where $\{\phi_i\}_{i=1}^n$ and  $\{\chi_i\}_{i=1}^N$ the
respective bases in $B_h$. Following the discretization of the domain into a finite number of elements, the basis functions $\{\phi_1,\ldots,\phi_n\}$ in the expansion of the potential are assumed to belong in a finite, $n$-dimensional subspace of $H_o^1(B)$. For clarity in the notation, we keep $u$ and $\gamma$ as the vectors of coefficients relevant to the respective functions as from now on we deal exclusively the
numerical approximation of the problem. On the discrete domain the
weak form of the operator equation (\ref{contip}) is approximated
by
\begin{equation}\label{quadmodel}
{\delta \zeta}_k = j_k' \delta \gamma + {\delta \gamma}' \mathbf{K}^{k} \delta \gamma + \eta_k, \qquad k=1,\ldots,m
\end{equation}
where $\zeta_k \in \mathbb{C}$ is the $k$th measurement, $j_k$ the
$k$th row of the Jacobian matrix $\mathbf{J}$ that is the discrete
form of $\partial_\gamma \mathcal{E}(\gamma_p)$, $\mathbf{K}^{k}
\in \mathbb{C}^{N \times N}$ is the $k$th coefficients (Hessian)
matrix derived from $\mathcal{K}$ in (\ref{nonlinop}), $\eta_k$
the noise in the $k$th measurement and $\delta \gamma \in
\mathbb{C}^N$ the required perturbation in the admittivity
coefficients. Let the additive noise be uncorrelated zero-mean
Gaussian with diagonal covariance matrix $\mathbf{C}_\eta$, with positive diagonal element $c_k$ then the data
misfit function
\begin{equation}\label{quadmis}
Q(\delta \gamma) = \sum_{k=1}^m c_k^{-1} \bigl ( {\delta \zeta}_k - j_k'\delta \gamma - \delta \gamma' \mathbf{K}^k \delta \gamma \bigr )^2
\end{equation}
can be used to define the regularized \emph{quadratic regression} problem
\begin{equation}\label{qrp}
{\delta \gamma}^* = \arg\min_{\delta \gamma \in \mathbb{C}^N} \xi(\delta \gamma), \qquad \xi(\delta \gamma) \doteq \frac{1}{2} \bigl \{ Q(\delta \gamma) + \alpha \mathcal{G}(\delta \gamma) \bigr \}
\end{equation}
with $\mathcal{G}: \mathbb{C}^N \rightarrow \mathbb{R}$ a convex
differentiable regularization term. On the other hand, choosing to
neglect the matrices $\mathbf{K}^k$ yields the conventional misfit
function
\begin{equation}\label{linmis}
\Lambda(\delta \gamma) = \sum_{k=1}^m c_k^{-1} \bigl ( {\delta \zeta}_k - j_k'\delta \gamma \bigr )^2,
\end{equation}
often used in the context of regularized linear regression
formulations. As shown in \cite{adgablio} the Jacobian matrix can
be computed directly from (\ref{linop}) and (\ref{approx}) using
numerical integration as
\begin{equation}\label{jac}
\mathbf{J}_{k,j} =  - \int_{B_j} \mathrm{d}x \; \chi_j \; \sum_{l \,\in \,\mathrm{supp}(B_j)} u_l \nabla  \phi_l  \sum_{l \,\in \, \mathrm{supp}(B_j)} \overline{v}_l \nabla  \phi_l , \quad k=1,\ldots,m, \; j=1,\ldots,N
\end{equation}
with $v$ the coefficients of the adjoint field solution
corresponding to the $k$th measurement, and $\mathrm{supp}(B_j)$
the support of the $j$th element. To derive the respective element
of $\mathbf{K}^k$ we follow an approach similar to that of Kaipio
et al. in \cite{Kaipio_mc} that is based on the Galerkin
formulation of the problem. For this we choose
$\{\phi_1,\ldots,\phi_{n}\}$ as a test basis for the potentials
and by substituting into the variational form of the model we
arrive at
$$
\sum_{i=1}^{n} \sum_{j=1}^{n} \Bigl (\int_B\mathrm{d}x \, \gamma \nabla \phi_i \cdot \nabla \phi_j +  \sum_{\ell=1}^L z_\ell \int_{\Gamma_{e_\ell}} \mathrm{d}s \, \phi_i \, \phi_j \Bigr ) u_i  - \sum_{\ell=1}^L  z_\ell \int_{\Gamma_{e_\ell}} \mathrm{d} s\,  \phi_i \;  U_\ell = 0.
$$
Imposing the Neumann conditions for the applied boundary currents yields the additional equations
$$
I_\ell = - z_\ell \sum_{i=1}^{n}  \Bigl ( \int_{\Gamma_{e_\ell}} \mathrm{d}s\, \phi_i \Bigr ) u_i + z_\ell \, |\Gamma_{e_\ell}| \, U_\ell , \quad \ell=1,\ldots,L,
$$
with $|\Gamma_{e_\ell} |$ the area of the $\ell$th electrode. In matrix
form the electric potential expansion coefficients $u \in
\mathbb{C}^{n}$ and the electrode potentials $U \in \mathbb{C}^L$
can be computed by solving the $(n + L) \times (n+ L)$ matrix
equation
\begin{equation}\label{matrix}
\left[\begin{array}{c c}\mathbf{A}_{11} & \mathbf{A}_{12} \\  \mathbf{A}_{12}' & \mathbf{A}_{22}\end{array}\right]\left[\begin{array}{c} u \\U\end{array}\right] = \left[\begin{array}{c}0 \\I\end{array}\right],
\end{equation}
where
\begin{eqnarray*}
{\mathbf{A}_{11}}_{\,i,j} & = & \int_B\mathrm{d}x\, \gamma \nabla \phi_i \cdot \nabla \phi_j +  \sum_{\ell=1}^L z_\ell  \int_{\Gamma_{e_\ell}} \mathrm{d}s \, \phi_i \, \phi_j, \quad i,j=1,\ldots,n\\
{\mathbf{A}_{12}}_{\,i,\ell} & = &  - z_\ell \int_{\Gamma_{e_\ell}} \mathrm{d}s\,  \phi_i ,  \quad i=1,\ldots,n, \quad \ell=1,\ldots,L,\\
{\mathbf{A}_{22}}_{\,\ell,\ell} & = &  z_\ell |\Gamma_{e_\ell}|, \quad \ell=1,\ldots,L.
\end{eqnarray*}
For a conductivity $\gamma$ and applied current $I$, let
$\begin{bmatrix} u\\ U  \end{bmatrix} = \mathbf{A}^{-1}(\gamma)
\begin{bmatrix}0 \\I\end{bmatrix}$ the solution of (\ref{matrix}).
Using the matrix differentiation formula, the partial derivatives
with respect to the $q$th admittivity element are
\begin{eqnarray*}
\partial_{\gamma_q} \begin{Bmatrix} u\\ U  \end{Bmatrix}  =  \frac{\partial}{\partial_{\gamma_q}} \begin{Bmatrix} \mathbf{A}^{-1}(\gamma) \begin{bmatrix}0 \\I\end{bmatrix} \end{Bmatrix}& = & -\mathbf{A}^{-1}(\gamma) \, \partial_{\gamma_q} \bigl \{ \mathbf{A}(\gamma) \bigr \} \, \mathbf{A}^{-1}(\gamma) I\\
                                                           & = & -\mathbf{A}^{-1}(\gamma) \, \partial_{\gamma_q} \bigl \{ \mathbf{A} (\gamma) \bigr \} \, \begin{bmatrix} u\\ U  \end{bmatrix},
\end{eqnarray*}
where
$$
\partial_{\gamma_q} \bigl \{ \mathbf{A} (\gamma) \bigr \} = \partial_{\gamma_q} \bigl \{ \mathbf{A}_{11} (\gamma) \bigr \} = \int_{B_q} \mathrm{d}x\, \nabla \phi_i \cdot \nabla \phi_j , \quad q=1,\ldots,N,
$$
as only the block $\mathbf{A}_{11}$ depends on admittivity. Separating the above as
$$
\partial_{\gamma_q} \begin{Bmatrix} u\\ U  \end{Bmatrix}  = \left[\begin{array}{c|c} \partial_{\gamma_q}u & \partial_{\gamma_q} U\end{array}\right]'
$$
and evaluating the upper part for all elements in the model yields the required matrix in vector concatenation form
\begin{equation}
\partial_{\gamma} u(\gamma) = \bigl [ \begin{array}{c|c|c|c} \partial_{\gamma_1} u(\gamma) & \partial_{\gamma_2} u(\gamma) & \ldots  & \partial_{\gamma_N} u(\gamma) \end{array} \bigr ],
\end{equation}
while $\partial_{\gamma_q} U$ are the elements of the Jacobian
matrix $\mathbf{J}$. Effectively the element of $\mathbf{K}^k$
matrix is given by
$$
\mathbf{K}^{k}_{r,j} =  - \int_{B_j} \mathrm{d}x \; \psi_j \; \sum_{l \in \mathrm{supp}(B_r)}  \nabla \phi_l \; \partial_{\gamma_r} u_l  \sum_{l \, \in \, \mathrm{supp}(B_j)} \overline{v}_l \nabla  \phi_l, \quad k=1,\ldots,m, \;\, r,j=1,\ldots,N
$$
with $v$ the adjoint field corresponding to the $k$th measurement
and $\partial_{\gamma_r} u$ the derivative of the $k$th direct
field with respect to $\gamma_r$.

\subsection{Newton's minimization method}

We propose solving the regularized problem (\ref{qrp}) using
Gauss-Newton's minimization method \cite{engl}. At a feasible
point $\delta \gamma_p$ the minimization cost function $\xi$ is
approximated by a second-order Taylor series \cite{hettlich}
\begin{equation}
\hat \xi(\delta \gamma) = \xi(\delta \gamma_p) + \partial_{\delta \gamma} \xi(\delta \gamma_p)(\delta \gamma - \delta \gamma_p)
+  \frac{1}{2}(\delta \gamma - \delta \gamma_p)'\partial_{\delta \gamma \delta \gamma}\xi(\delta \gamma_p) (\delta \gamma - \delta \gamma_p),
\end{equation}
where applying first-order optimality conditions $ \partial_{\delta \gamma} \hat \xi(\delta \gamma) = 0$ yields the linear system
$$
\partial_{\delta \gamma} \hat \xi(\delta \gamma_p) = - \partial_{\delta \gamma \delta \gamma}\hat \xi(\delta \gamma_p) (\delta \gamma - \delta \gamma_p).
$$
From (\ref{quadmis}), let the $k$th residual function be
$$
r_k(\delta \gamma) = c_k^{-1/2} \Bigl( {\delta \zeta}_k - \sum_{j=1}^N \mathbf{J}_{k,j} \delta \gamma_j - \sum_{j=1}^N \delta \gamma_j \sum_{l=1}^N \mathbf{K}^k_{j,l} \delta \gamma_l \Bigr ),
$$
such that $Q(\delta \gamma) = \|r(\delta \gamma)\|^2$, then the
cost gradient $\partial_{\delta \gamma} \hat \xi(\gamma_p)$ and
Hessian $\partial_{\delta \gamma \delta \gamma}\hat \xi(\delta
\gamma_p) $ are expressed as
\begin{eqnarray}\label{grad}\nonumber
\partial_{\delta \gamma} \hat \xi(\delta \gamma_p) &=& \partial_{\delta \gamma} r(\delta \gamma_p)' r(\delta \gamma_p)  + \alpha \mathbf{C}_{\gamma}^{-1} \delta \gamma_p,\\ \label{hess}\nonumber
\partial_{\delta \gamma \delta \gamma}\hat \xi(\delta \gamma_p) &=& \partial_{\delta \gamma} r(\delta \gamma_p)' \partial_{\delta \gamma} r(\delta \gamma_p) + \alpha \mathbf{C}_\gamma^{-1}
\end{eqnarray}
for $r(\delta \gamma) = \begin{bmatrix} r_1(\delta \gamma), &
\ldots, & r_m(\delta \gamma)\end{bmatrix}'$, and assuming a
Tikhonov-type regularization function $\mathcal{G}(\delta \gamma)
= \alpha {\delta \gamma}' \mathbf{C}_\gamma^{-1} \delta \gamma$,
with $\mathbf{C}_\gamma^{-1}$ positive semidefinite and $\alpha$ a
positive regularization parameter. The Jacobian of the residual
$\partial_{\delta \gamma} r(\partial \gamma_p) \in \mathbb{C}^{m
\times N}$ is then formed using the vectors
$$
\partial_{\delta \gamma_l} r_k(\delta \gamma) = - c_k^{-1/2} \mathbf{J}_{k,l} - c_k^{-1/2} \sum_{j=1}^N \Bigl ( \mathbf{K}^k_{l,j} + \mathbf{K}^k_{j,l}\Bigr ) \delta \gamma_j, \quad l=1,\ldots,N,
$$
evaluated at $\delta \gamma_p$ like
$$
\partial_{\delta \gamma} r(\delta \gamma_p) =  \begin{bmatrix} \partial_{\delta \gamma} r_1(\delta \gamma_p) & | & \partial_{\delta \gamma} r_2(\delta \gamma_p) & | & \ldots & | & \partial_{\delta \gamma} r_m(\delta \gamma_p) \end{bmatrix}'.
$$
If $\partial_{\delta \gamma \delta \gamma}\hat \xi(\delta
\gamma_p)$ is full rank and positive definite the solution can be
computed iteratively using Newton's algorithm
\begin{equation}\label{mainit}
\delta \gamma_{p+1} = \delta \gamma_p - \partial_{\delta \gamma \delta \gamma}^{-1}\hat \xi(\delta \gamma_p) \, \partial_{\delta \gamma} \hat \xi(\delta \gamma_p), \qquad p=0,1,2,\ldots
\end{equation}
Using standard arguments from the convergence analysis of Newton's
method on convex minimization it is easy to show convergence as in
\cite{engl}, \cite{Lechleiter}
\begin{equation}\label{convan}
\hat \xi(\delta \gamma_p) > \hat \xi(\delta \gamma_{p+1}) \geq \|\eta\|, \quad \|\delta \gamma^* - \delta \gamma_p\| \geq \|\delta \gamma^* - \delta \gamma_{p+1}\|, \quad p=0,1,\ldots,
\end{equation}
however a convergence in the sense of the discrepancy principle is
more appropriate as the data are likely to contain noise
\cite{Kaipio_Somer_book}.
\begin{corollary}
Initializing the quadratic regression iteration (\ref{mainit})
with $\delta \gamma_0 = 0$ yields a first iteration that coincides
with the linear regularized regression estimator
\begin{equation}\label{tik2}
\delta \gamma_1 = \bigl ( \mathbf{J}' \mathbf{C}_\eta^{-1} \mathbf{J} + \alpha \mathbf{C}_\gamma^{-1} \bigr )^{-1} \mathbf{J}' \mathbf{C}_\eta^{-1} \delta \zeta
\end{equation}
\end{corollary}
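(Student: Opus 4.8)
The plan is to evaluate the Newton iteration (\ref{mainit}) at $p=0$ with the initialization $\delta\gamma_0 = 0$, exploiting the fact that every term carrying a factor of $\mathbf{K}^k$ is at least linear in $\delta\gamma$ and therefore drops out when the iterate is the zero vector. Writing $\mathbf{C}_\eta^{-1/2}$ for the diagonal matrix with entries $c_k^{-1/2}$, I would first record the two ingredients of the step at $\delta\gamma_0 = 0$: the residual vector $r(\delta\gamma_0)$ and its Jacobian $\partial_{\delta\gamma} r(\delta\gamma_0)$.

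For the residual, the quadratic form $\delta\gamma'\mathbf{K}^k\delta\gamma$ appearing in $r_k$ vanishes at $\delta\gamma = 0$, so that $r_k(0) = c_k^{-1/2}\delta\zeta_k$, i.e. $r(0) = \mathbf{C}_\eta^{-1/2}\delta\zeta$. For the Jacobian, the entry $\partial_{\delta\gamma_l} r_k$ carries a $\mathbf{K}^k$-dependent contribution that is itself linear in $\delta\gamma$ and hence likewise vanishes at the origin, leaving $\partial_{\delta\gamma_l} r_k(0) = -c_k^{-1/2}\mathbf{J}_{k,l}$, or in matrix form $\partial_{\delta\gamma} r(0) = -\mathbf{C}_\eta^{-1/2}\mathbf{J}$.

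Next I would substitute these into the gradient and Hessian expressions (\ref{grad})--(\ref{hess}). The regularization contribution $\alpha\mathbf{C}_\gamma^{-1}\delta\gamma_0$ to the gradient vanishes because $\delta\gamma_0 = 0$, so the gradient reduces to $\partial_{\delta\gamma}\hat\xi(0) = -\mathbf{J}'\mathbf{C}_\eta^{-1}\delta\zeta$, where the two half-powers recombine by symmetry of the diagonal matrix $\mathbf{C}_\eta^{-1/2}$. The Hessian becomes $\partial_{\delta\gamma\delta\gamma}\hat\xi(0) = \mathbf{J}'\mathbf{C}_\eta^{-1}\mathbf{J} + \alpha\mathbf{C}_\gamma^{-1}$. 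Inserting both into (\ref{mainit}) with $p=0$ and $\delta\gamma_0 = 0$ annihilates the leading $\delta\gamma_0$ term and produces exactly (\ref{tik2}), the claimed Tikhonov-regularized linear estimator.

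The computation is essentially a bookkeeping exercise and I do not anticipate a genuine technical obstacle, but two points deserve care. First, the noise weighting must be tracked consistently: the two factors of $\mathbf{C}_\eta^{-1/2}$ contributed separately by $r(0)$ and by $\partial_{\delta\gamma} r(0)$ must recombine into a single $\mathbf{C}_\eta^{-1}$ in both the gradient and the Hessian, which uses that $\mathbf{C}_\eta$ is diagonal and positive definite so that $\mathbf{C}_\eta^{-1/2}$ is well defined and self-adjoint. Second, and conceptually more important, the clean identification with (\ref{tik2}) hinges on the iteration (\ref{mainit}) employing the Gauss--Newton Hessian (\ref{hess}) rather than the exact one: the discarded curvature term $\sum_k r_k\,\partial_{\delta\gamma\delta\gamma} r_k$ does \emph{not} vanish at $\delta\gamma_0 = 0$, since the second derivatives of $r_k$ are the constant matrices built from $\mathbf{K}^k$ and the residual $r(0) = \mathbf{C}_\eta^{-1/2}\delta\zeta$ is generically nonzero. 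Thus it is the combination of initializing at the origin, which annihilates the first-order nonlinear contributions to both $r$ and its Jacobian, together with the Gauss--Newton linearization of the Hessian, that collapses the full quadratic-regression step to a single regularized linear least-squares solve.
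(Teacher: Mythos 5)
Your proposal is correct and follows essentially the same route as the paper's proof: evaluate $r$ and $\partial_{\delta\gamma} r$ at $\delta\gamma_0=0$, substitute into the gradient and Gauss--Newton Hessian, and read off the Tikhonov estimator from (\ref{mainit}). Your version is in fact slightly more careful than the paper's one-line argument, since you track the sign of $\partial_{\delta\gamma} r(0) = -\mathbf{C}_\eta^{-1/2}\mathbf{J}$ (the paper drops the minus sign, which is harmless as it cancels in the update) and you correctly note that the identification with (\ref{tik2}) relies on the Gauss--Newton approximation of the Hessian, since the discarded curvature term involving $\mathbf{K}^k$ does not vanish at the origin.
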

\begin{proof}
The proof is by substitution of the residual and its Jacobian at
$\delta \gamma_0=0$ into the expressions for the gradient and
Hessian of the cost function. In particular for $r(\delta
\gamma_0) =  \mathbf{C}_\eta^{-1/2} \delta \zeta$ and
$\partial_{\delta \gamma} r(\delta \gamma_0) =
\mathbf{C}_\eta^{-1/2} \mathbf{J}$, iteration (\ref{mainit})
yields the result.
\end{proof}

Combining the convergence remarks of (\ref{convan}) with the
corollary above, we assert that for $p>1$ the quadratic regression
iterations should converge in a solution whose error does not
exceed that of the linear regression problem (\ref{rlsq}). Suppose
now that at a certain iteration $p$ the value of the residual
$r(\delta \gamma_p)$ converges to the level of noise $\|\eta\|$.
Then according to the discrepancy principle one updates the
admittivity estimate as $\gamma_{p+1} = \gamma_p + \delta
\gamma_p$ and thereafter the definitions of $\mathbf{J}$ and
$\mathbf{K}^k$, and then proceeds to the next iteration.
Effectively, the resulting scheme can be expressed as a
Newton-type algorithm. \vspace*{5pt}
\begin{center}
\line(1,0){430}
\end{center}
\begin{enumerate}\label{algo}
\vspace*{5pt} \item Given data $\zeta \in \mathbb{C}^m$ with noise
level $\|\eta\|$ and a finite domain $B_h$ with unknown
admittivity $\gamma^* \in \mathbb{C}^N$ \item Set $q=0$, choose
initial admittivity distribution $\gamma_0$, \item For
$q=1,2,\ldots \quad$ (Exterior iterations) \item Compute data
$\delta \zeta = \zeta - \mathcal{E}(\gamma_{q-1})$, and matrices
$\mathbf{J} \in \mathbb{C}^{m \times N}$, $\mathbf{K}^k \in
\mathbb{C}^{N \times N}$, for $\gamma_{q-1}$, and $k=1,\ldots,m$,
\begin{enumerate}
\item Set $p=0$, $\delta \gamma_p=0$, \item For $p=1,2,\ldots
\quad$ (Interior iterations) \item Compute update $$\delta
\gamma_{p} =  \delta \gamma_{p-1} - \tau_p \partial_{\delta \gamma
\delta \gamma}^{-1}\hat \xi(\delta \gamma_{p-1}) \,
\partial_{\delta \gamma} \hat \xi(\delta \gamma_{p-1}), \; \tau_p
>0,$$ \item End $p$ iterations \item Compute update $$\gamma_{q} =
\gamma_{q-1} + \tau_q \delta \gamma_p, \;  \tau_q>0,$$
\end{enumerate}
\item End $q$ iterations
\end{enumerate}\vspace*{-5pt}
\begin{center}
\line(1,0){430}
\end{center}
\vspace*{5pt} In performing the outer iterations, a complication
will likely arise in that a certain update admittivity change
$\delta \gamma_p$ may cause the real and/or imaginary components
of $\gamma_{q+1}$ to become zero or negative. This of course
violates a physical restriction on the electrical properties of
the media, and the solution cannot be admitted. For this reason
the problem of (\ref{qrp}) should be posed as a linearly
constrained problem
$$
 {\delta \gamma}^* = \arg\min_{\gamma_q > \delta \gamma} \xi(\delta \gamma),
$$
at each $\gamma_q$. A convenient heuristic to prevent this
complication is by adjusting the step sizes $\tau_q, \tau_p$ until
the above inequality is satisfied \cite{eidors}, \cite{vauhkonen}.
Note also, that the above methodology makes no explicit
assumptions on the type of the regularization functional
$\mathcal{G}(\gamma)$, aside its differentiability, thus we
anticipate it can be also be implemented in conjunction with total
variation and $\ell_1$-type regularization \cite{Borsic_tv} as
well as the level sets method \cite{Dorn}.

\section{Numerical results}\label{numerics}

To test the performance of the proposed algorithm we perform some
numerical simulations using two-dimensional models, although the
extension to three dimensions follows in a trivial way. In this
context we consider a rectangular domain $B  =
[-16,16]\times[0,-32] \subset \mathbb{R}^2$, with $L=30$ point
electrodes attached at its boundary in a borehole and surface
arrangement as shown in figures \ref{fig4} and \ref{fig5}. As a
first test case the domain is assumed to have an unknown target
conductivity $\gamma^*$ whose real and imaginary components are
functions with respective bounds $1.46 \leq \sigma^* \leq 5.60$
and $0.74 \leq \omega \epsilon^* \leq 3.90$. To compute the
measurements we consider 15 pair drive current patterns $I^d$,
$d=1,\ldots,L/2$,  yielding a vector of $m=390$ linearly
independent voltage measurements $\zeta \in \mathbb{C}^m$. The
forward problem is approximated using the finite element method
outlined in the previous section, and to the measurements we add a
Gaussian noise signal of zero mean and positive definite
covariance matrix $\mathbf{C}_\eta = 10^{-5} \max |\zeta|\,
\mathbf{I}$, where $\mathbf{I}$ is the identify matrix. For the
forward problem we use a finite dimensional model $B_f$ comprising
$n=1701$ nodes connected in $N=3144$ linear triangular elements.
All other computations are performed on a coarser grid $B_i$ with
$n=564$ nodes and $N=1038$ elements. The two finite models are
nested, hence for any function $\gamma$ approximated on $B_i$ with
expansion coefficients $\gamma_i$ there exists a projection
$\gamma_f = \Pi \gamma_i$, mapping it onto $B_f$. To reconstruct
the synthetic data we assume an initial homogeneous admittivity
model $\gamma_0 = 3.90 + 2.40 i$ which coincides with the mean
value of  $\gamma^*$, a methodology adopted from \cite{helsinki}.

At the initial admittivity guess $\gamma_0$ we approximate the
potential $u(\gamma^*)$ using the zeroth-order and first-order
Taylor series $u(\gamma_0)$ and $u(\gamma_0) + \partial_\gamma
u(\delta \gamma_0)(\gamma - \gamma_0)$ respectively. The
normalized approximation errors are illustrated at the top of
figure \ref{fig1} next to those of the error in the induced
potential gradient as this is involved in the computation of the
$\mathbf{K}^{k}$ matrices for $k=1,\ldots,390$. The results show
that the linear approximation sustains a smaller error in both
quantities and at all applied current patterns. In the same figure
we also plot the measurement perturbations $\delta \zeta = \zeta -
\mathcal{E}(\gamma_0)$ versus the linear and the quadratic
predictions to demonstrate that the proposed quadratic regression
will fit the noisy measurements at a smaller error. In particular,
the quadratic and linear misfit cost functions in (\ref{quadmis})
and (\ref{linmis}) are evaluated at $Q(\delta \gamma_0) = 0.06$
and $\Lambda(\delta \gamma_0) = 0.13$, where $\delta \gamma_0 =
\gamma^* - \gamma_0$. Notice the impact of the second-order term,
that brings the norm of the data misfit to about half of that of
the linear case.

\begin{figure}
\begin{center}
\epsfig{file=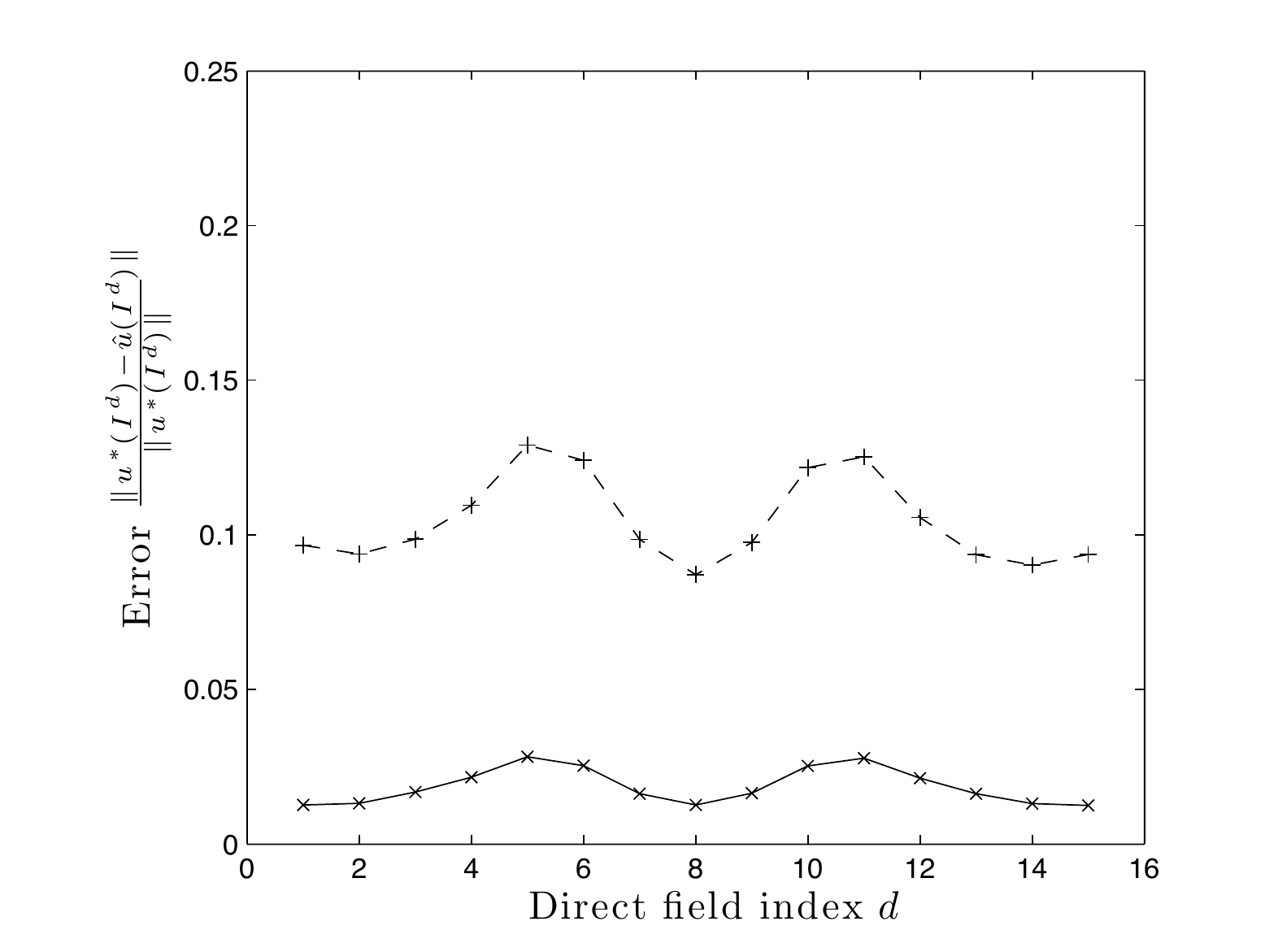,height=5cm,width=7cm} \epsfig{file=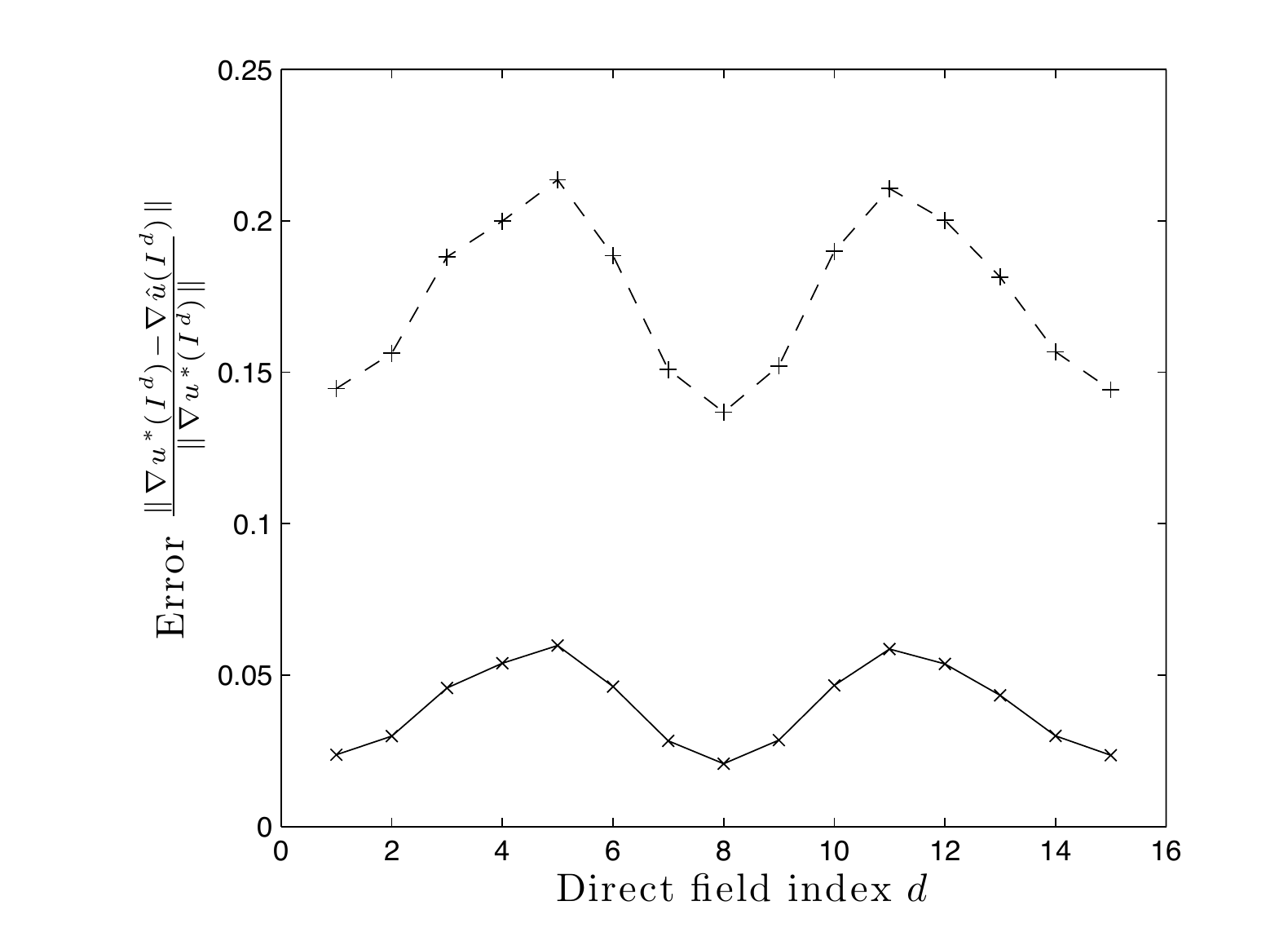,height=5cm,width=7cm}\\
\epsfig{file=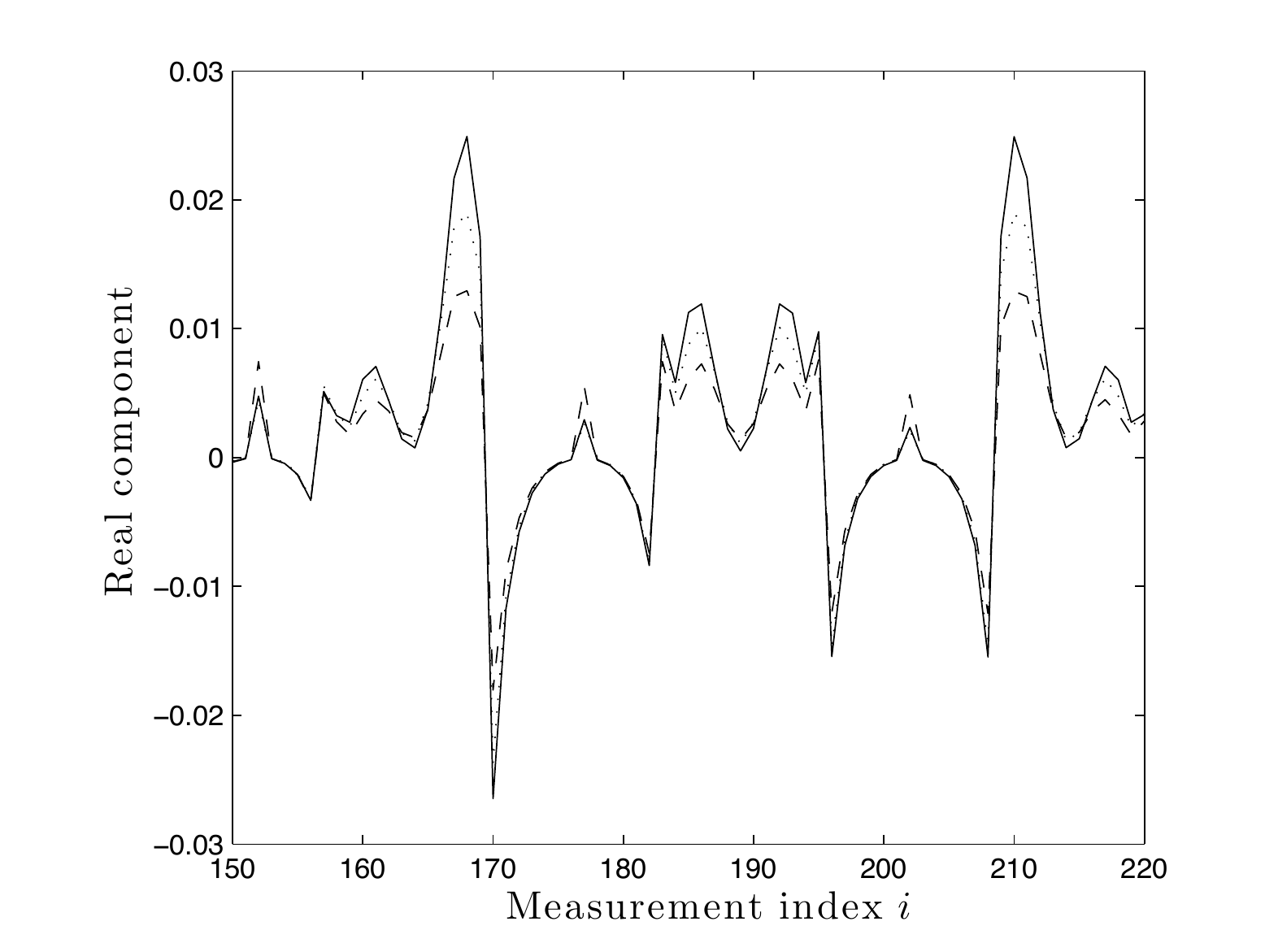,height=5cm,width=7cm} \epsfig{file=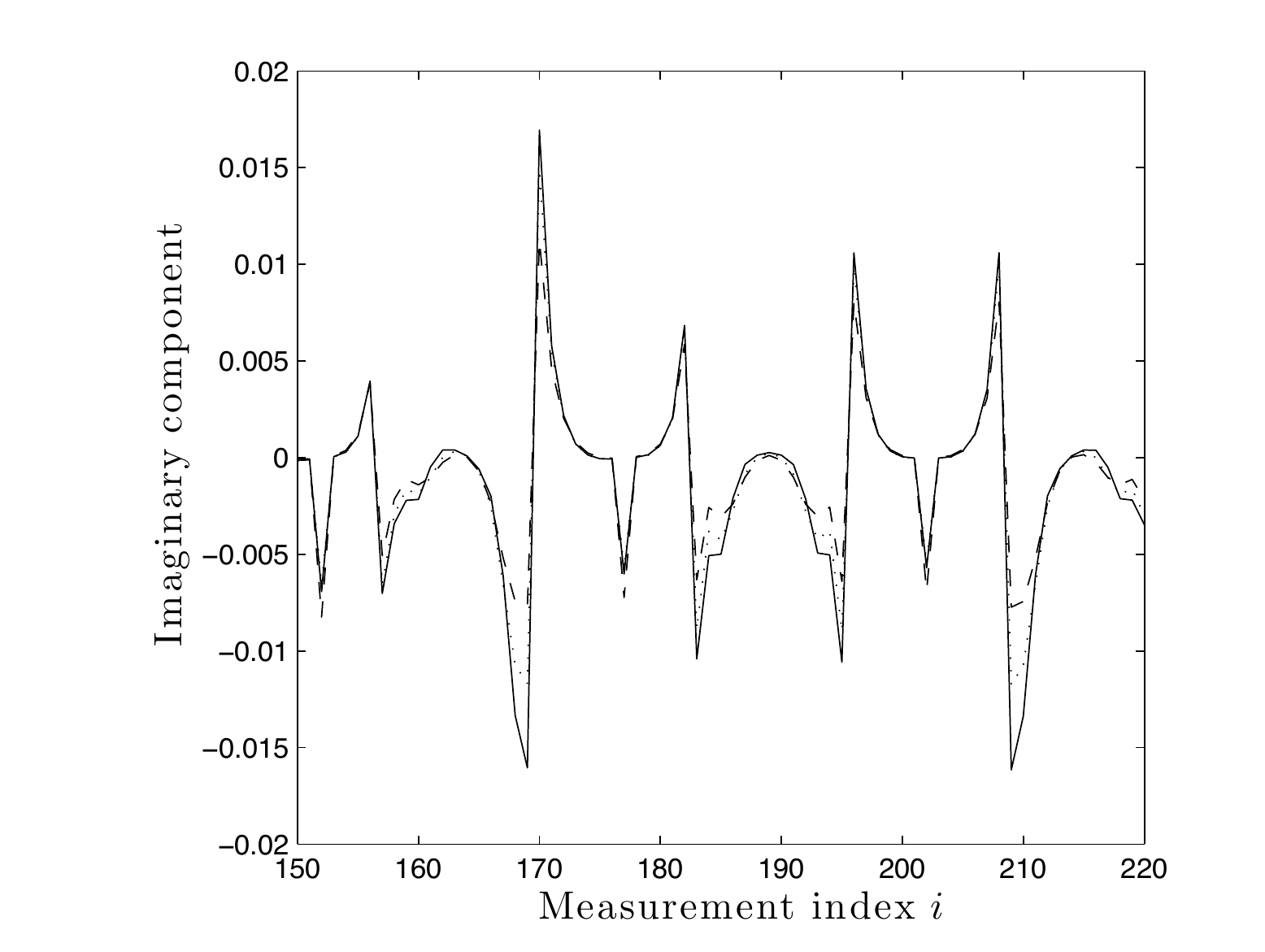,height=5cm,width=7cm}
\caption{At the top row, the normalized errors in the electric potential field approximation and its gradient, assuming zeroth-order (dashed line with $+$ markers) and first-order (solid line with $\times$ markers) Taylor series approximations of $u(\gamma^*,I^d)$ direct fields. In both cases the errors with the linear approximation are lower. Second row, the quality of the linear and quadratic approximations in predicting the nonlinear change in the boundary data $\delta \zeta$. The solid line denotes $\delta \zeta_i$, the dashed $j_i'\delta \gamma$ and the dotted $j_i'\delta \gamma + \sum_{i=1}^m \delta \gamma' \mathbf{K}^i \delta \gamma$, over the interval $i=150,\ldots,220$. The corresponding data misfit norms are  $0.057$ for the quadratic approximation $Q(\delta \gamma^*)$ and $0.123$ for the linear $\Lambda(\delta \gamma^*)$, assuming no additive noise. With the prescribed additive noise these values change to $0.062$  and  $0.126$ respectively.}
\label{fig1}
\end{center}
\end{figure}

\begin{figure}
\begin{center}
\epsfig{file=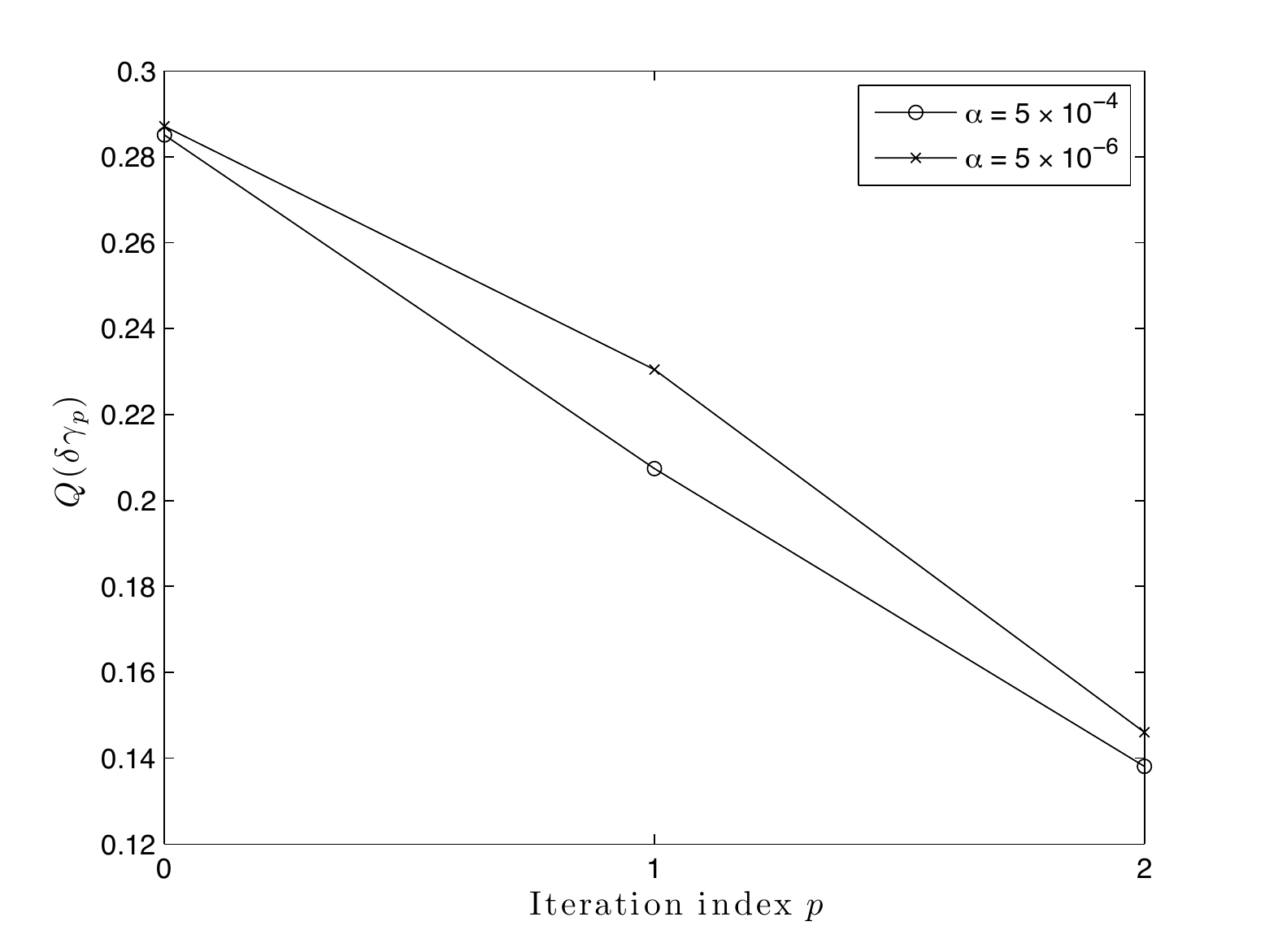,height=5cm,width=7cm} \epsfig{file=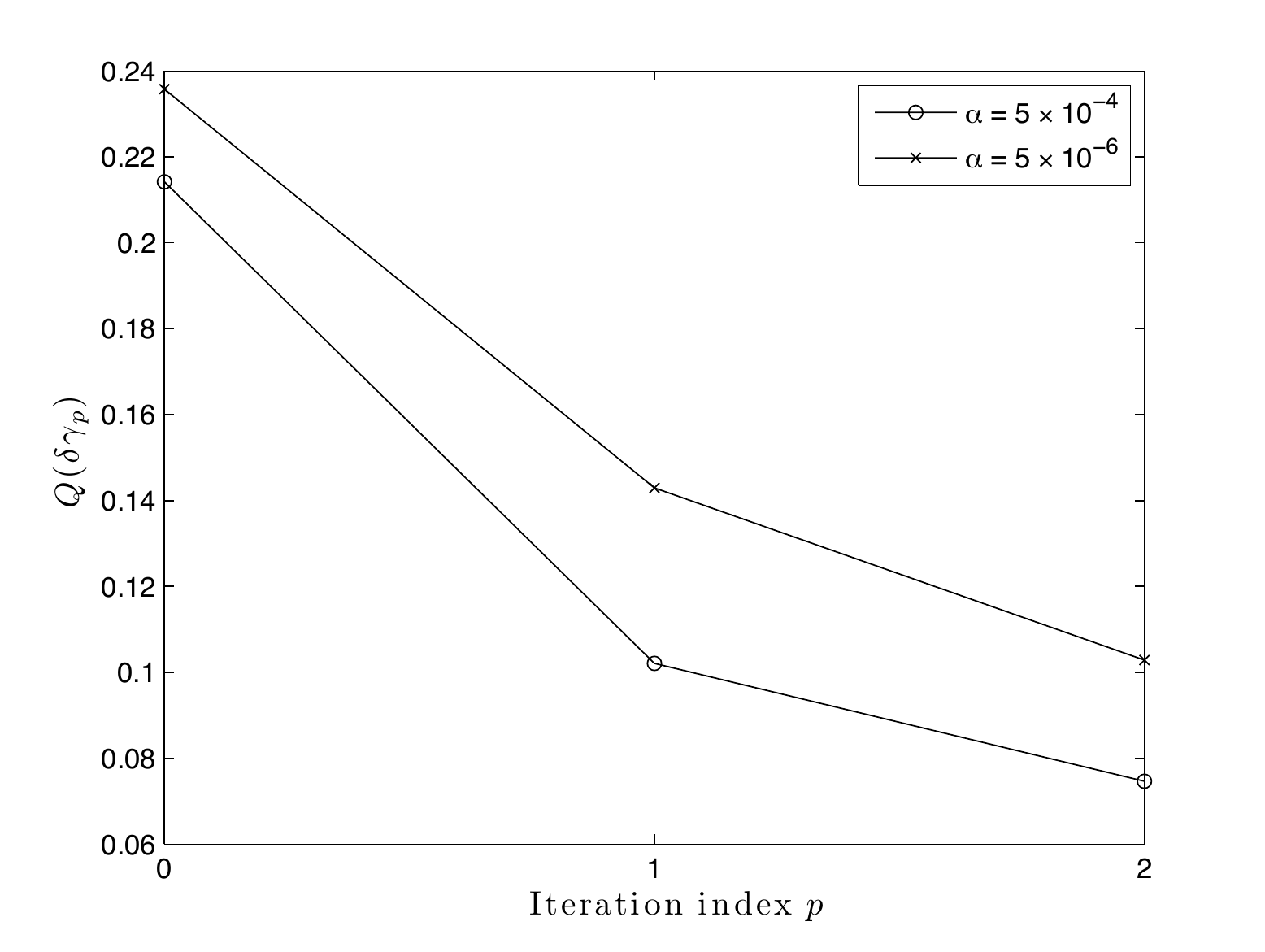,height=5cm,width=7cm}
\caption{Indicative convergence of the proposed method, in terms of minimizing the quadratic misfit error $Q(\delta \gamma_p)$  for two different values of the regularization parameter $\alpha$. Left the results during the first exterior iteration $q=1$, and right the corresponding values for $q=2$. In these results, $\delta \gamma_0=0$, $\delta \gamma_1$ coincides with the Tikhonov solution, and $\delta \gamma_2$ is the regularized quadratic regression solution. Notice that the quadratic regression solution has lower data misfit errors in both GN iterations. Between the first and second exterior iteration the admittivity increment was scaled to preserve positivity, hence the apparent discontinuity in the error reduction.}
\label{fig2}
\end{center}
\end{figure}

\begin{figure}
\begin{center}
\epsfig{file=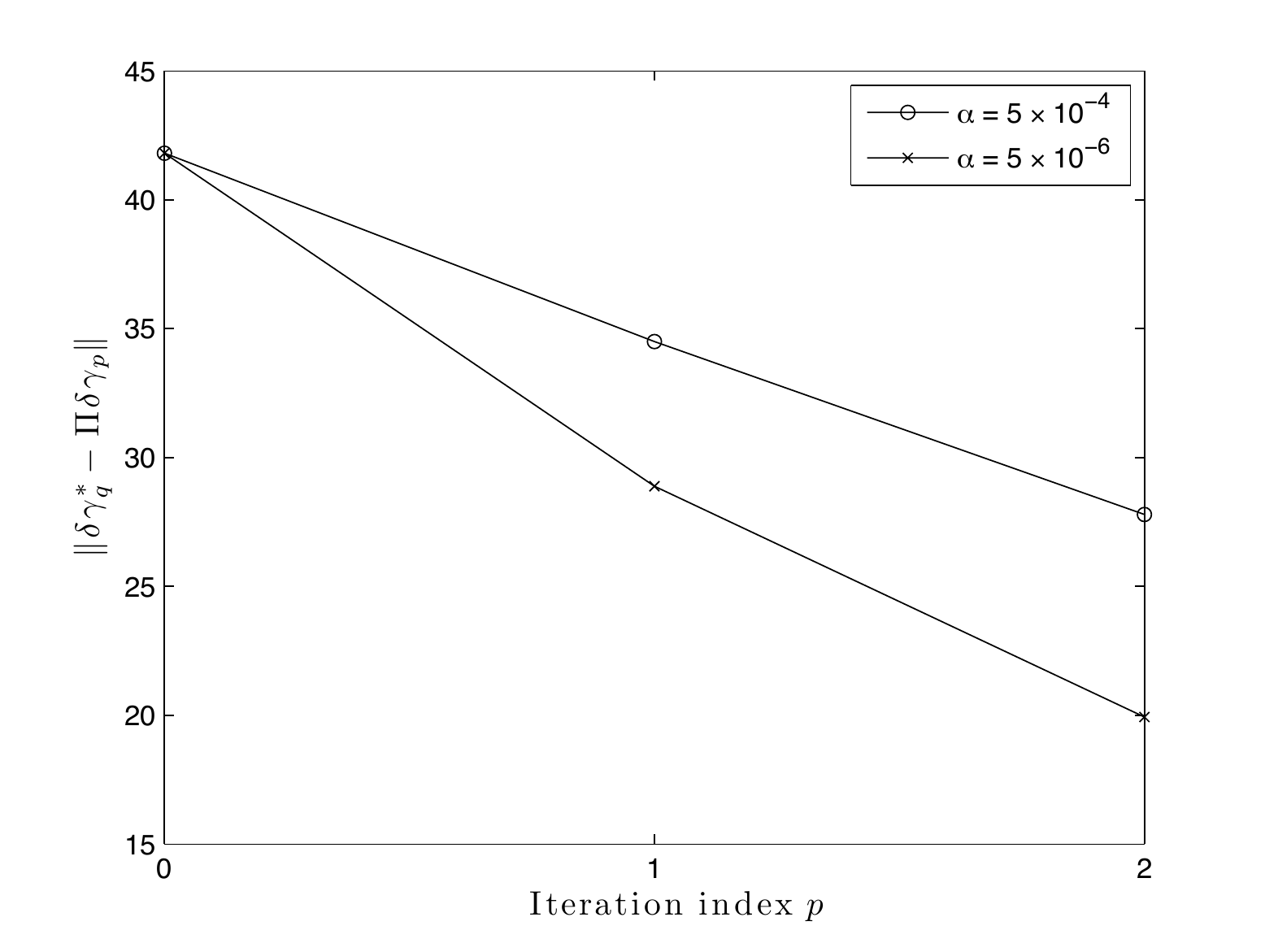,height=5cm,width=7cm} \epsfig{file=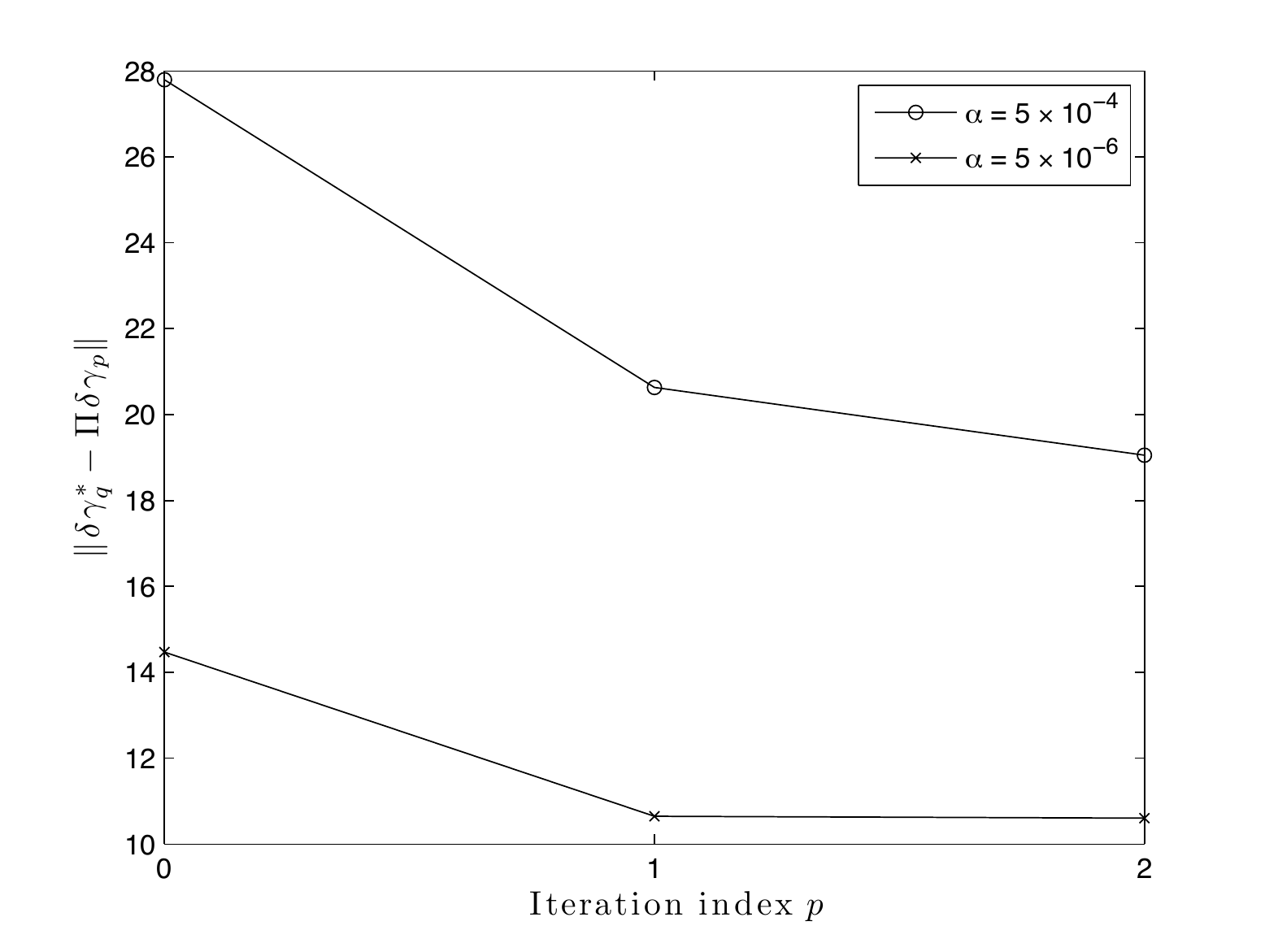,height=5cm,width=7cm}
\caption{Indicative convergence of the proposed method, in terms of minimizing the the image error $\|\delta \gamma_q^* - \Pi \delta \gamma_p\|$  for two different values of the regularization parameter $\alpha$. Left the results during the first exterior iteration $q=1$, and right the corresponding values for $q=2$. In these figures $\delta \gamma_0 = 0$, $\delta \gamma_1$ coincides with the Tikhonov solution, and $\delta \gamma_2$ is the regularized quadratic regression solution. Notice that the quadratic regression solution maintains lower image errors at each external iteration. Between the first and second exterior iteration the admittivity increment was scaled to preserve positivity, hence the apparent discontinuity in the error reduction.}
\label{fig2b}
\end{center}
\end{figure}

\begin{figure}
\centering \epsfig{file=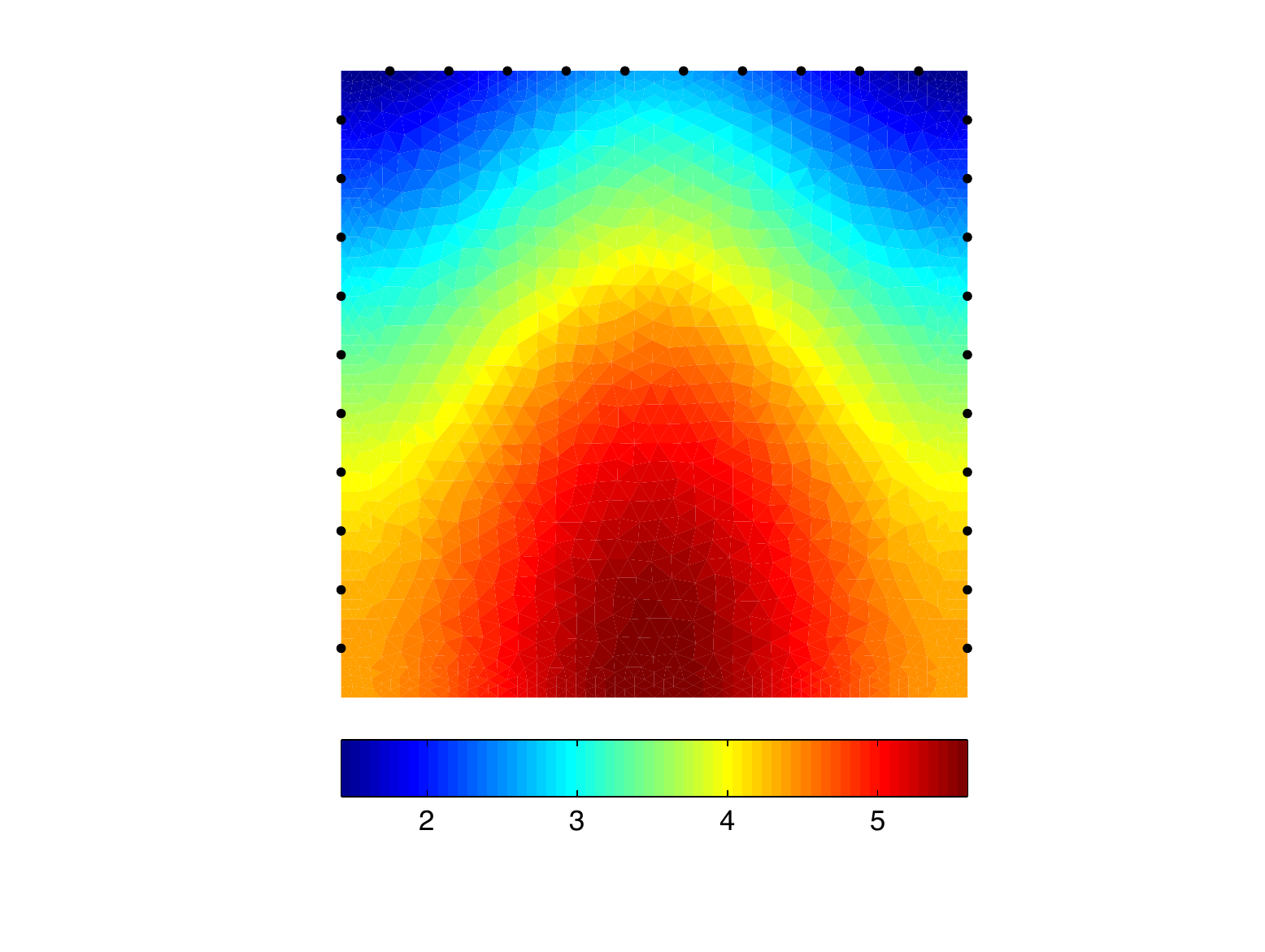,height=4cm,width=5.33cm}\\
\epsfig{file=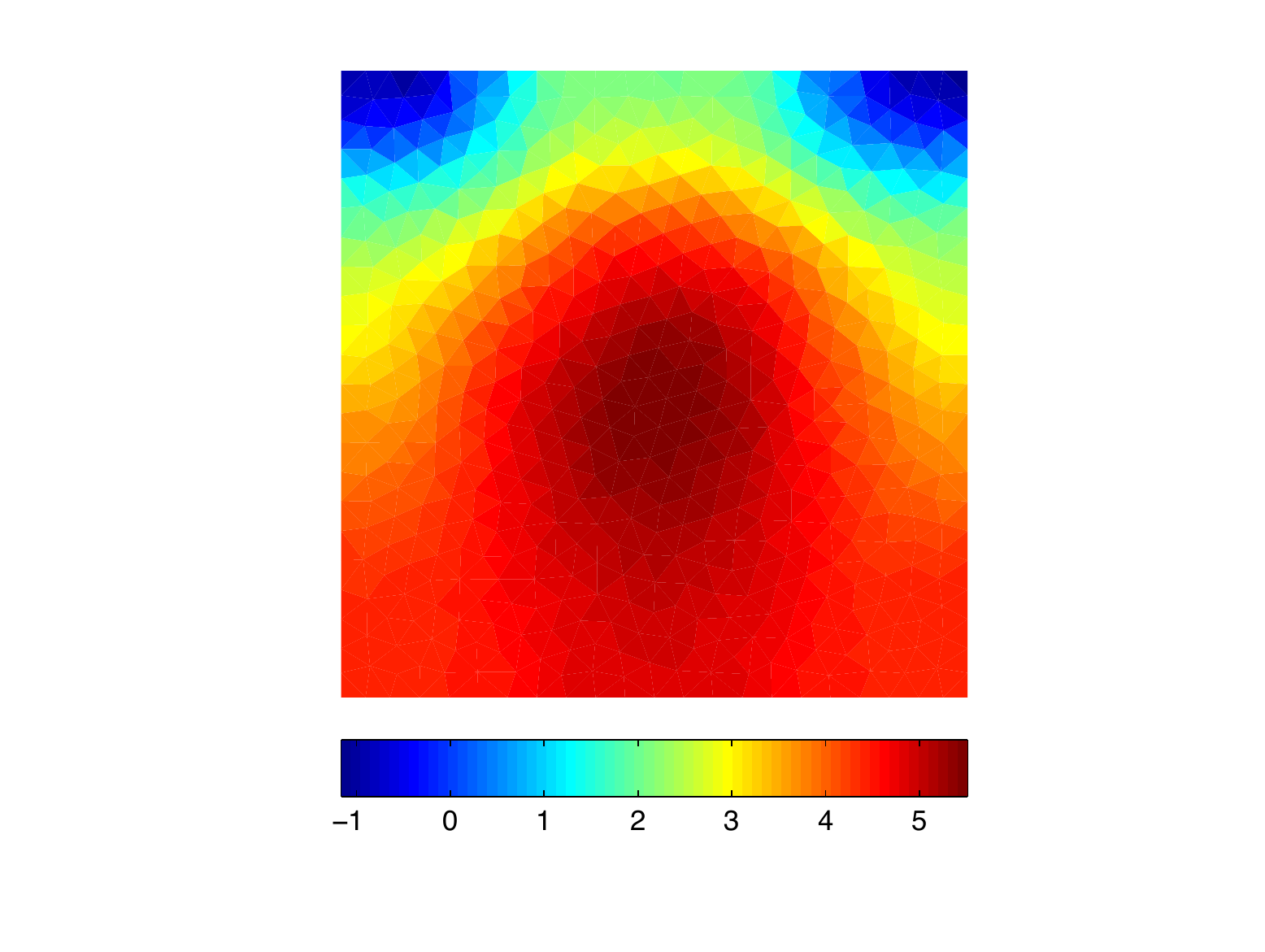,height=4cm,width=5.33cm} \hspace*{-20pt}\epsfig{file=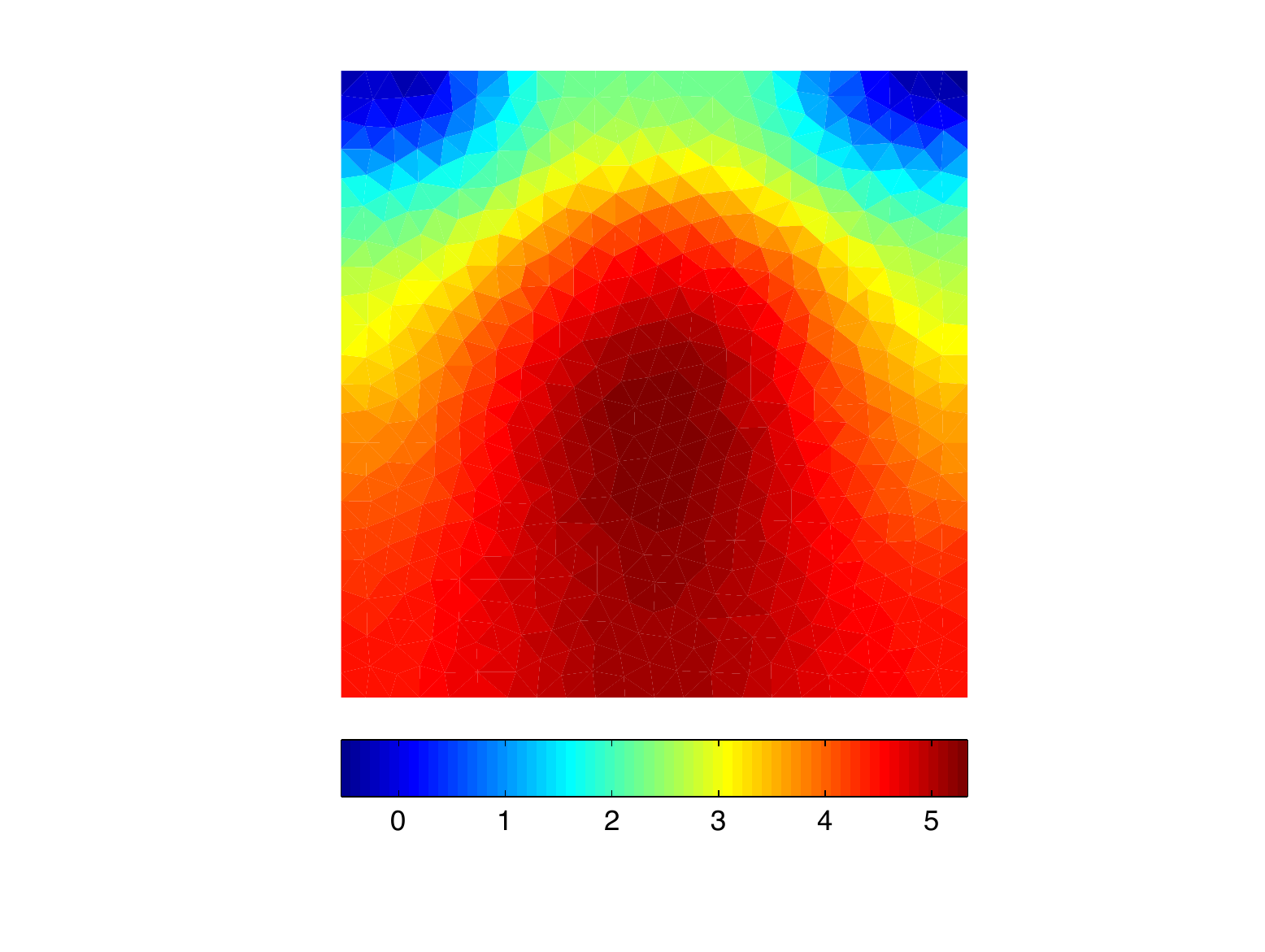,height=4cm,width=5.33cm} \hspace*{-20pt} \epsfig{file=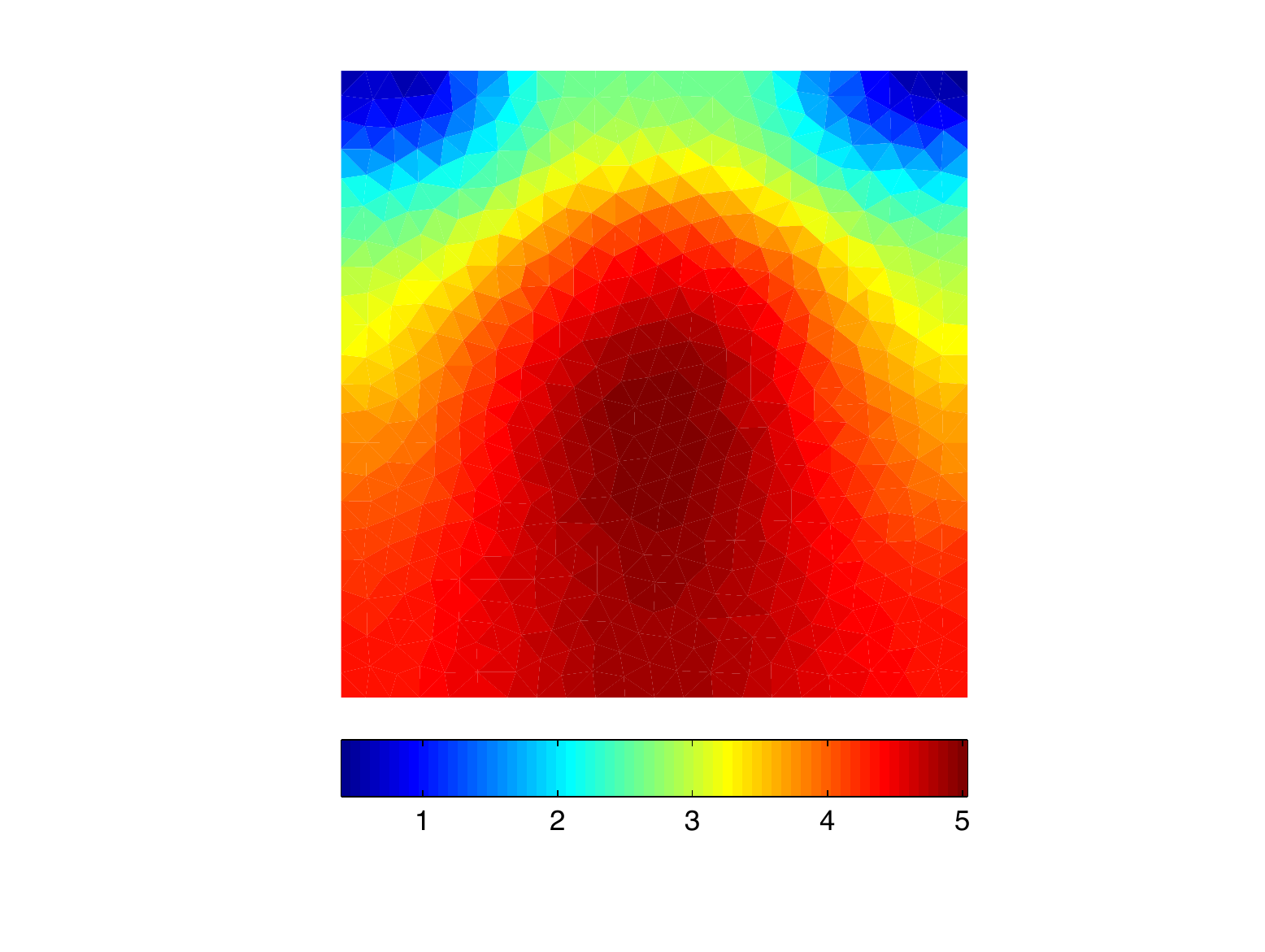,height=4cm,width=5.33cm}\\
\epsfig{file=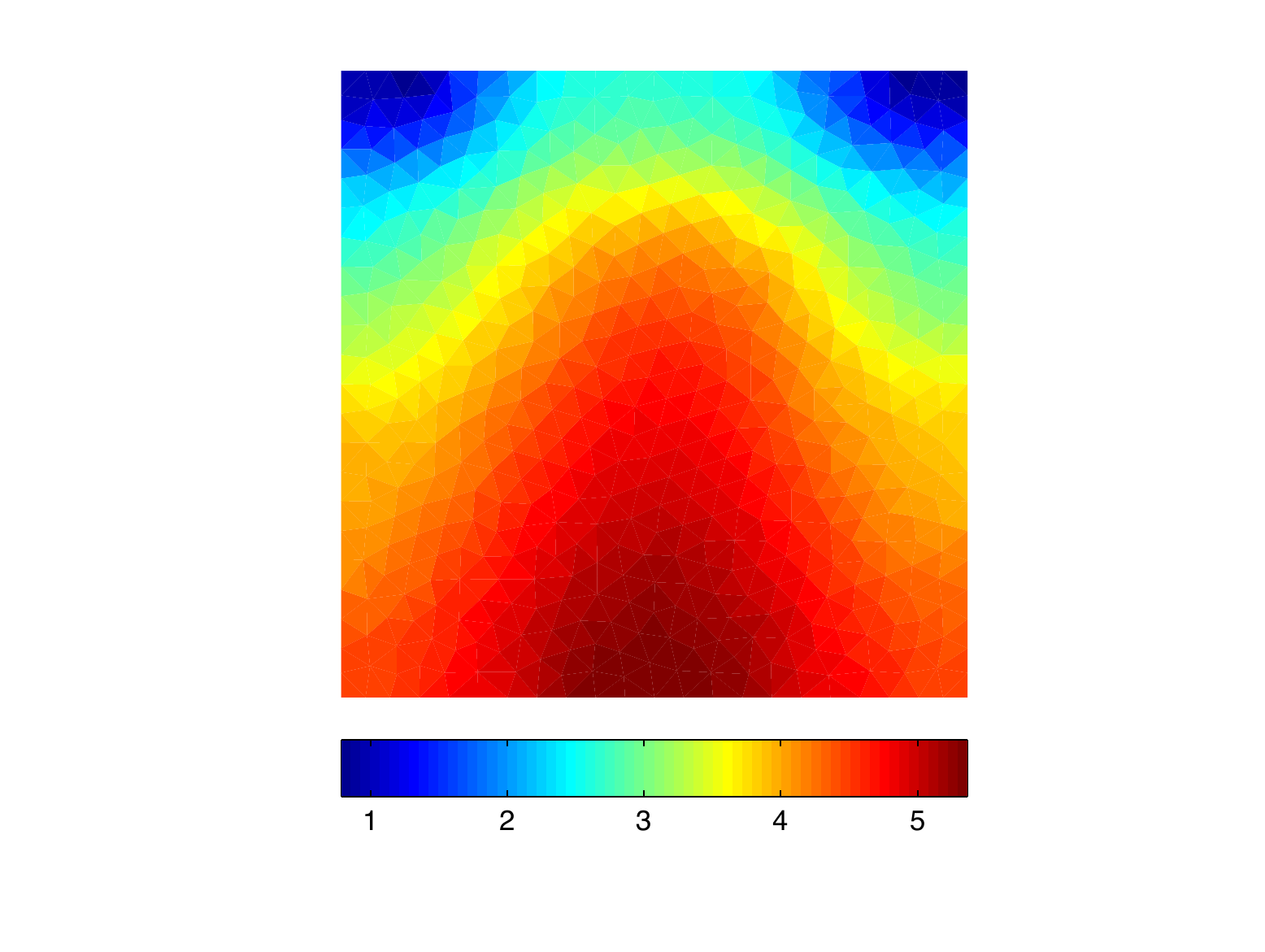,height=4cm,width=5.33cm} \hspace*{-20pt} \epsfig{file=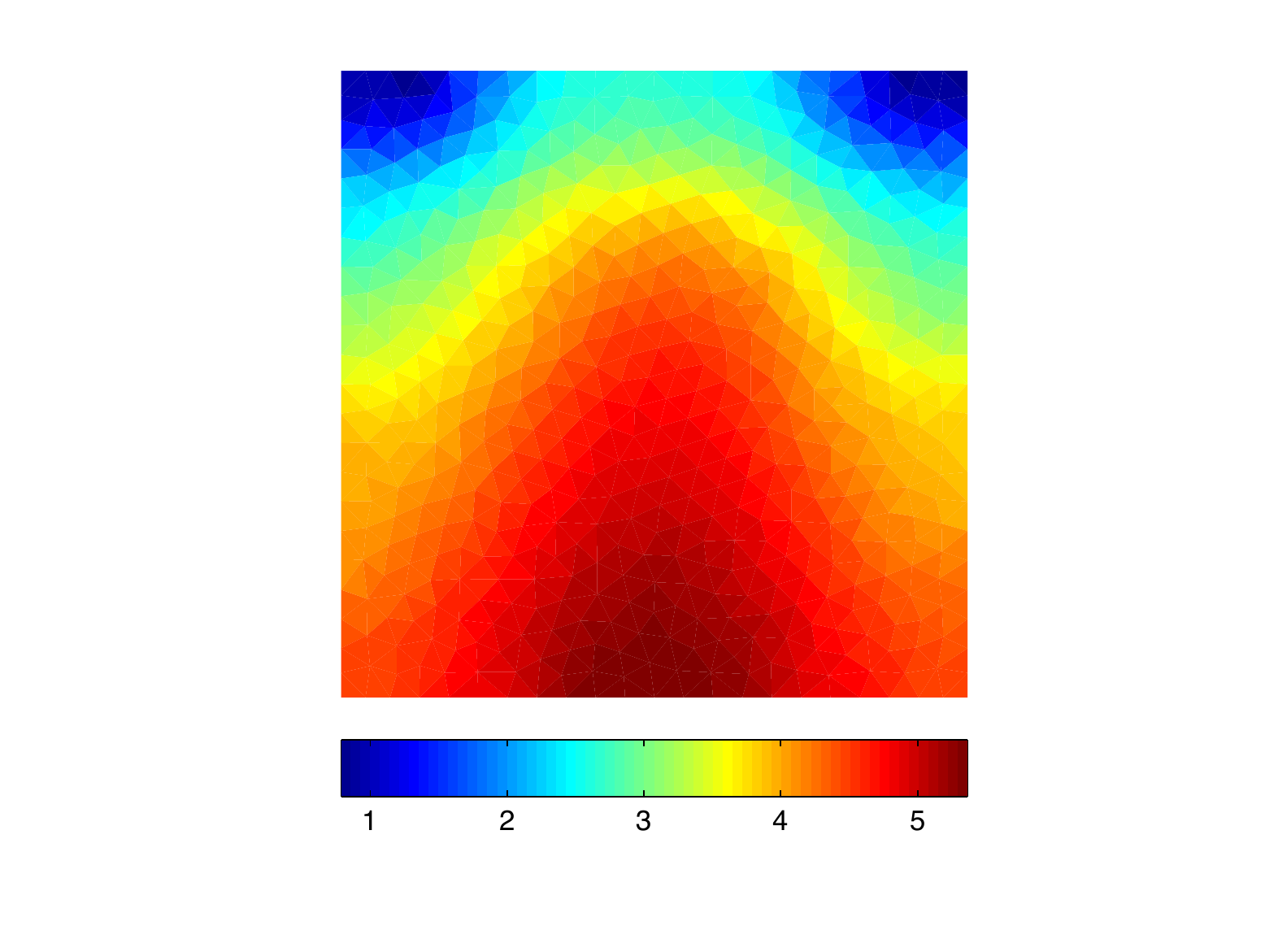,height=4cm,width=5.33cm}  \hspace*{-20pt} \epsfig{file=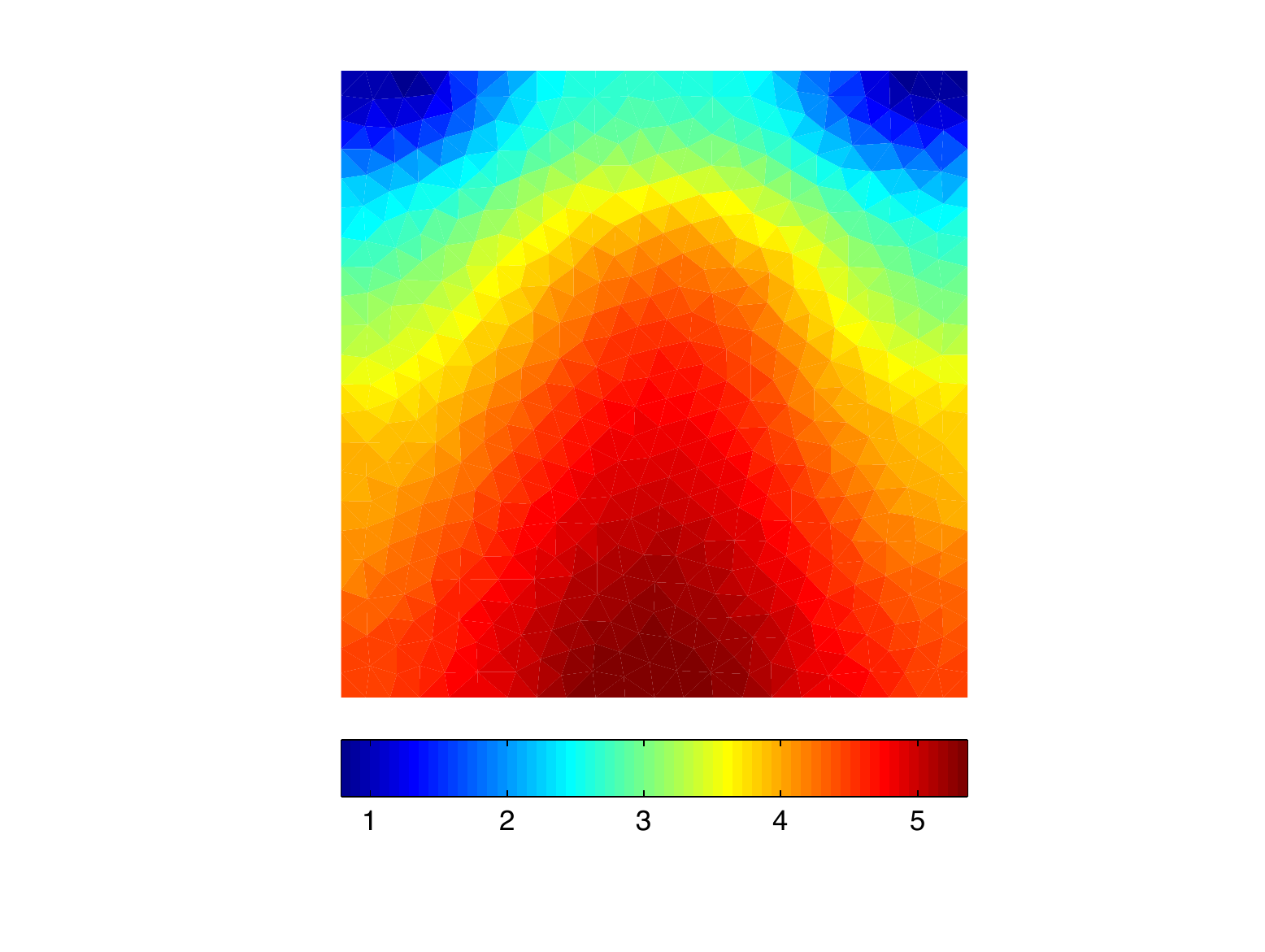,height=4cm,width=5.33cm}
\caption{At the top, the simulated target conductivity $\sigma^*$ profile on $B_f$, as used in the first test example and the arrangement of the electrodes. In the second row, from left to right, the respective images resulted from first exterior iteration using $\alpha = 5 \times 10^{-6}$, namely the real components of $\gamma_0 + \tau \delta \gamma_1$, $\gamma_0 + \tau \delta \gamma_2$, and $\gamma_1$ on $B_i$. Similarly at the bottom row,  the respective images from the second exterior GN iteration,  $\gamma_1 + \tau \delta \gamma_1$, $\gamma_1 + \tau \delta \gamma_2$, and $\gamma_2$, using the same value of $\alpha$.}
\label{fig4}
\end{figure}

\begin{figure}
\centering \epsfig{file=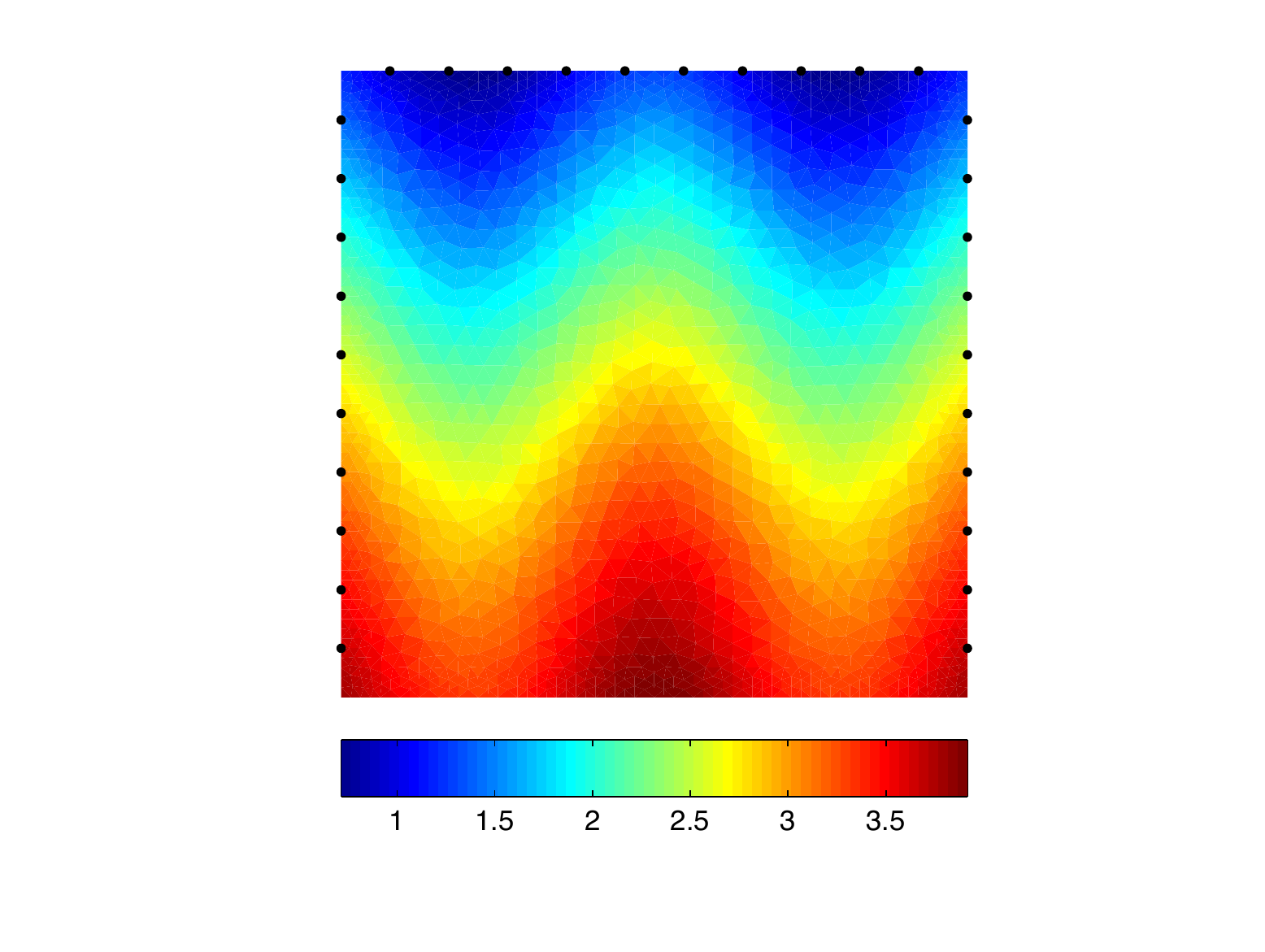,height=4cm,width=5.33cm}\\
\epsfig{file=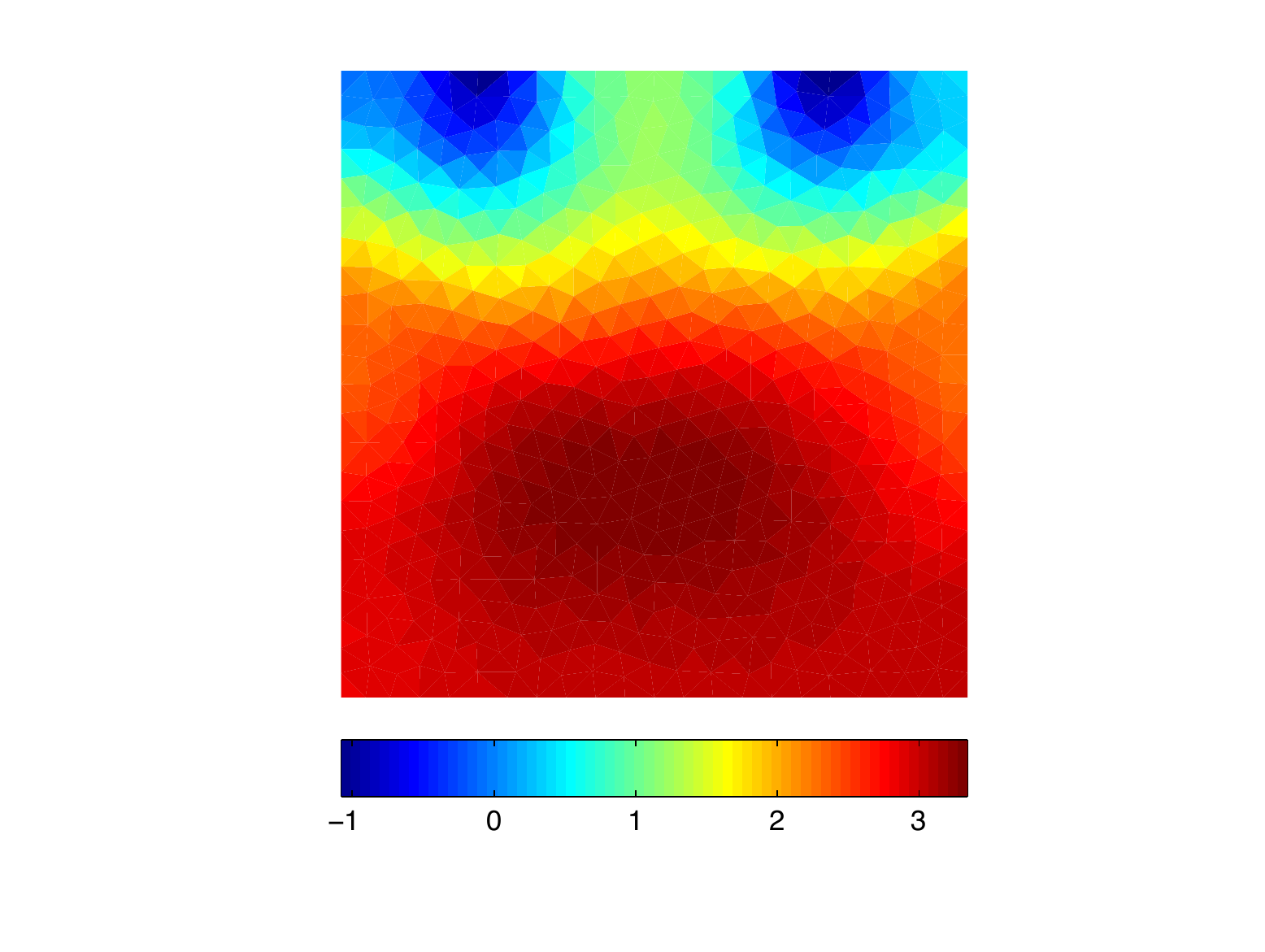,height=4cm,width=5.33cm} \hspace*{-20pt}\epsfig{file=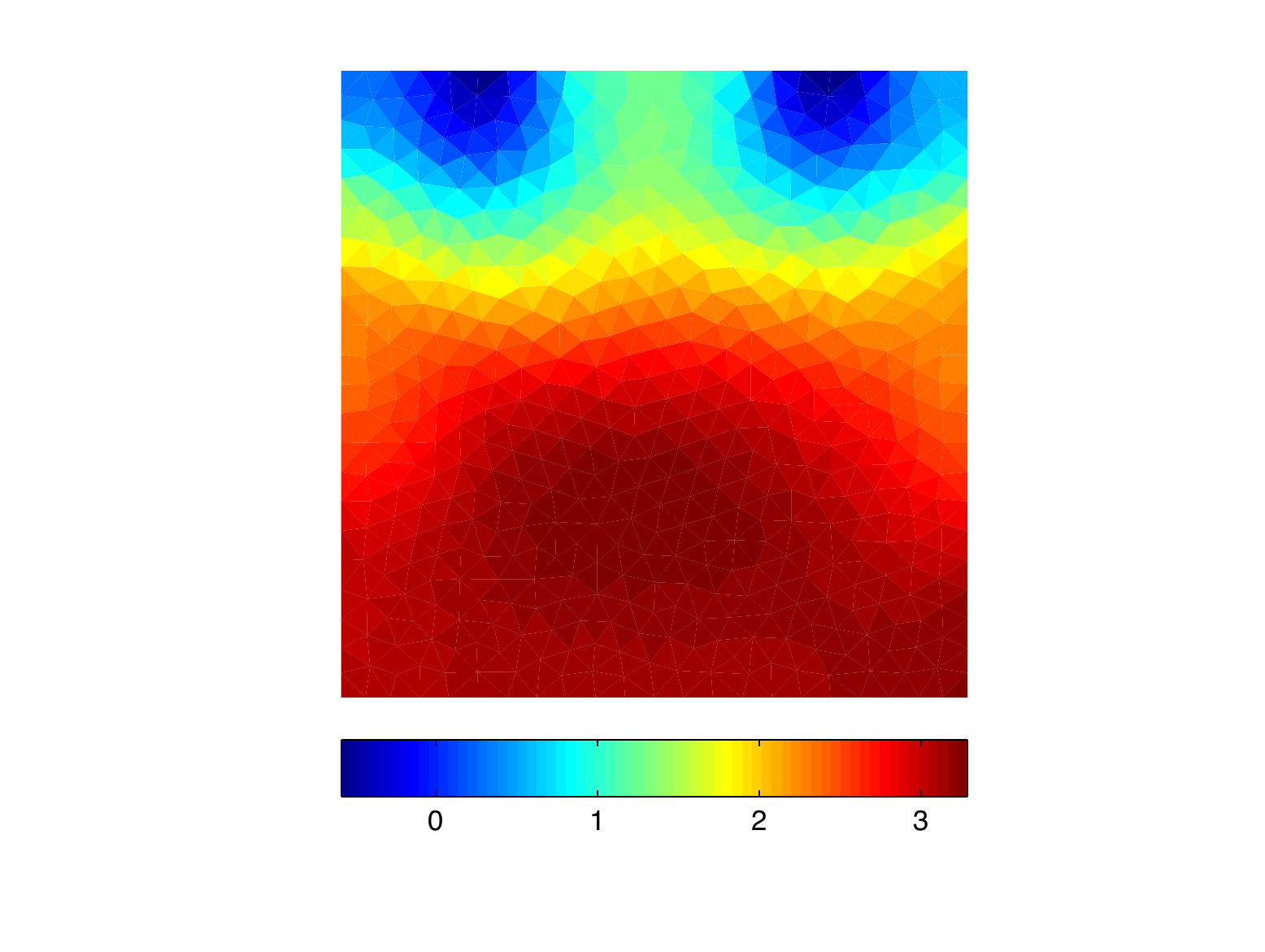,height=4cm,width=5.33cm} \hspace*{-20pt} \epsfig{file=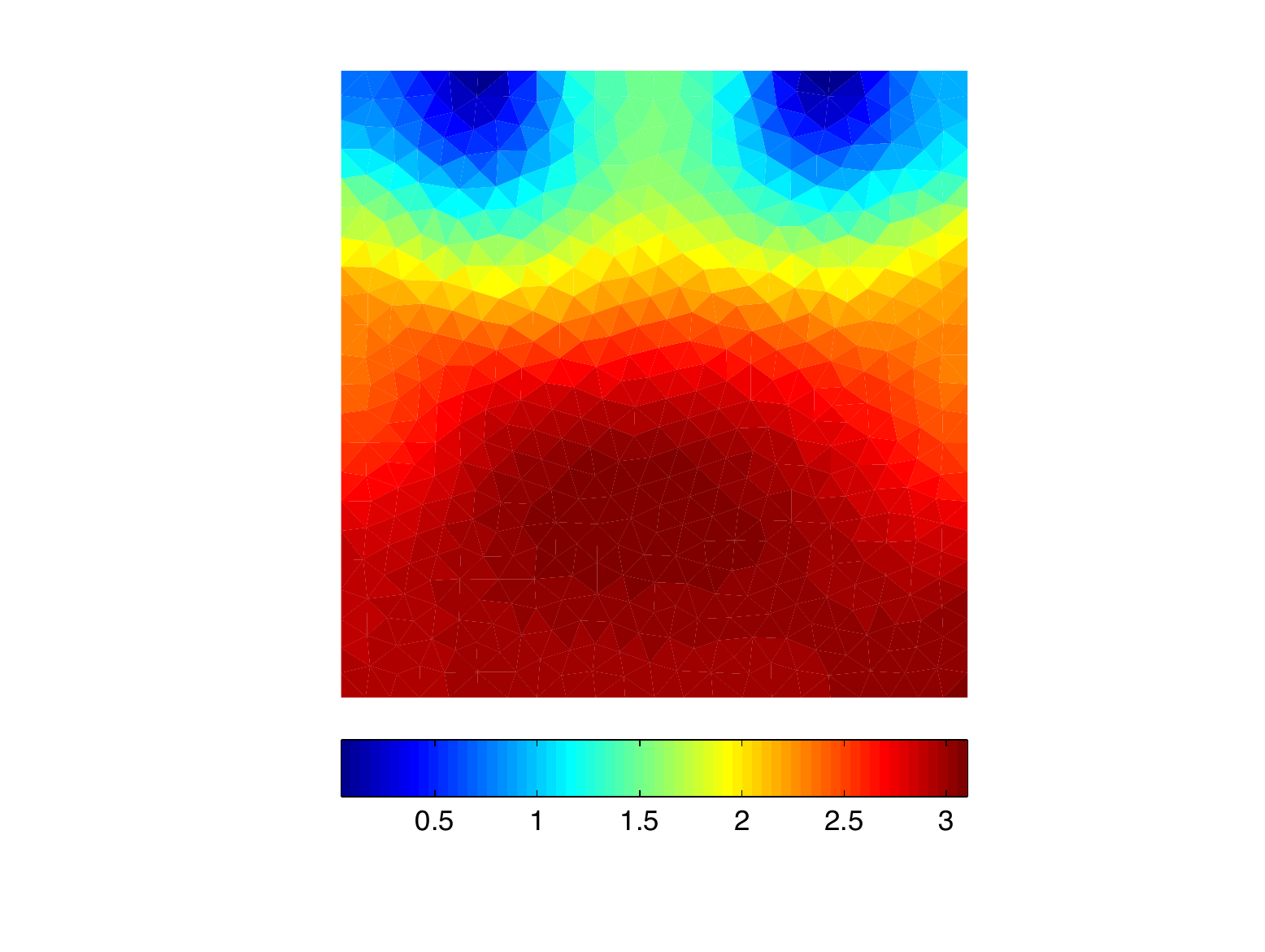,height=4cm,width=5.33cm}\\
\epsfig{file=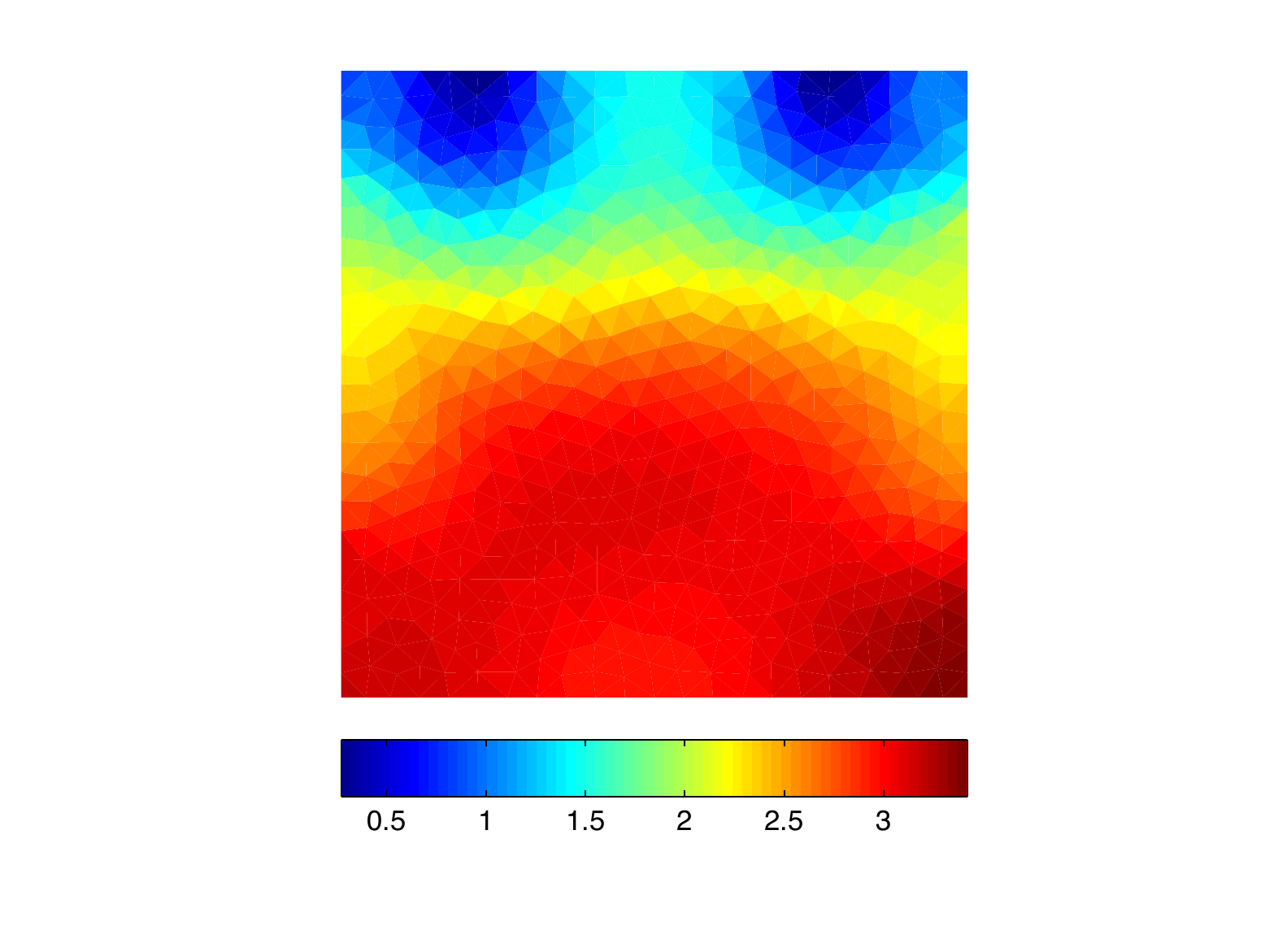,height=4cm,width=5.33cm} \hspace*{-20pt} \epsfig{file=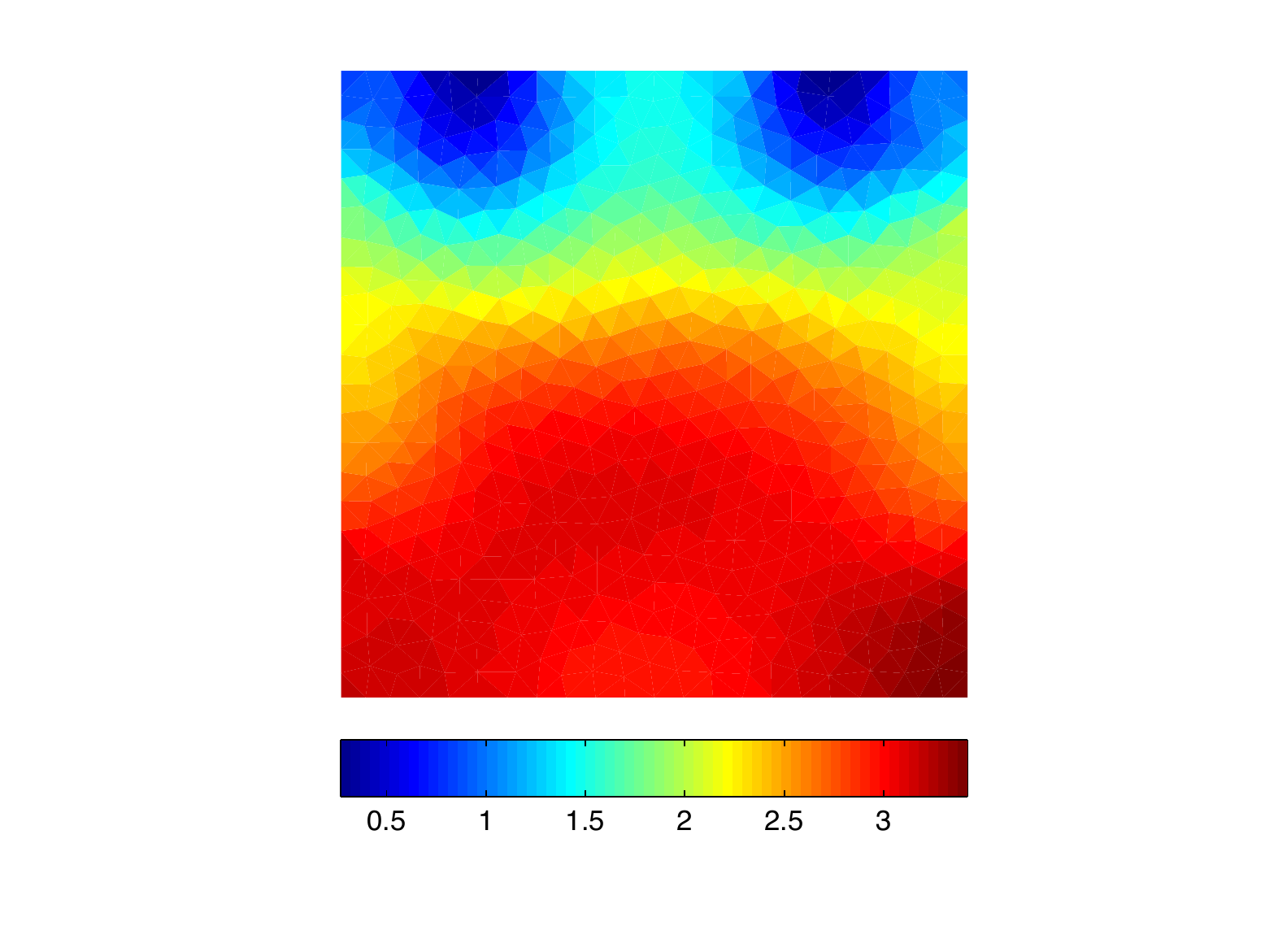,height=4cm,width=5.33cm}  \hspace*{-20pt} \epsfig{file=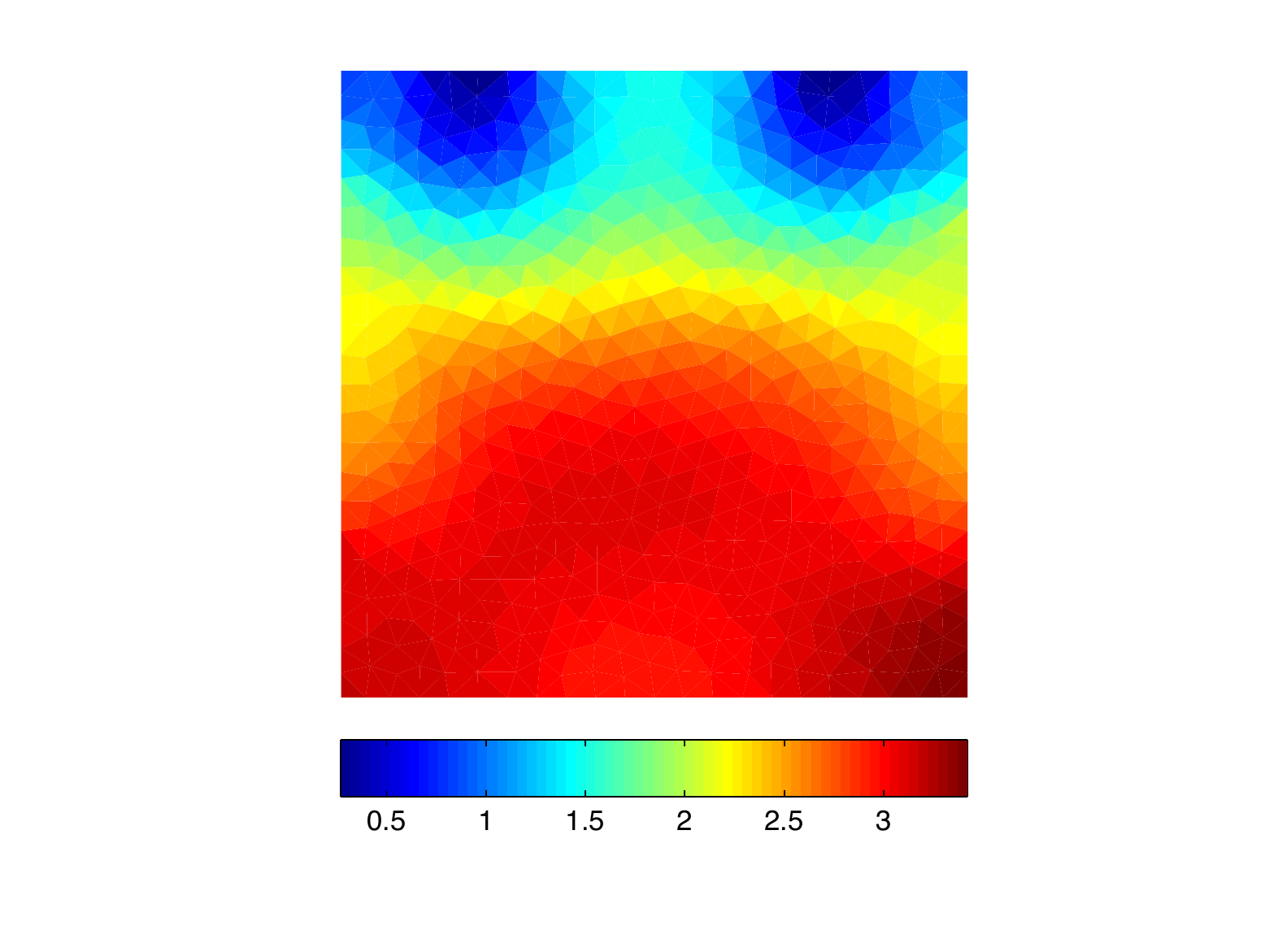,height=4cm,width=5.33cm}
\caption{At the top, the simulated target scaled permittivity $\omega \epsilon^*$ profile on $B_f$, as used in the first test example and the arrangement of the electrodes. In the second row, from left to right, the respective images resulted from first exterior iteration using $\alpha = 5 \times 10^{-6}$, namely the imaginary components of $\gamma_0 + \tau \delta \gamma_1$, $\gamma_0 + \tau \delta \gamma_2$, and $\gamma_1$ on $B_i$. Similarly at the bottom row,  the respective images from the second exterior GN iteration,  $\gamma_1 + \tau \delta \gamma_1$, $\gamma_1 + \tau \delta \gamma_2$, and $\gamma_2$, using the same value of $\alpha$.}
\label{fig5}
\end{figure}

To reconstruct the admittivity function we implement the proposed
iteration  (\ref{mainit}) using a precision matrix
$\mathbf{C}^{-1}_\gamma = \mathbf{R}'\mathbf{R}$, where
$\mathbf{R} \in \mathbb{R}^{N \times N}$ is a smoothness enforcing
operator. In  the numerical experiments we use two different
values of the regularization parameter in order to investigate the
performance of the scheme at different levels of regularization.
Using $\alpha = 5 \times 10^{-4}$ and $\alpha = 5 \times 10^{-6}$,
we execute two exterior GN iterations each one incorporating two
inner Newton iterations after which the algorithm converged to an
error value just above the noise level. The error reduction is
illustrated by the graphs of figures \ref{fig2} and \ref{fig2b},
showing a significant reduction in both the misfit error $Q(\delta
\gamma_p)$ and the image error $\|\delta \gamma_q^* - \Pi \delta
\gamma_p\|$ respectively for $p=0,1,2$ for the first and second GN
iterations, i.e. $q=1,2$. Each exterior iteration was initialized
with $\delta \gamma_0 = 0$ hence we can regard $\delta \gamma_1$
as the Tikhonov solution (\ref{tik2}) and $\delta \gamma_2$ the
quadratic regressor after two iterations (\ref{mainit}). To aid
convergence a backtracking line search algorithm was used where
the optimal step sizes for each iteration, interior and exterior.
The computational time required to assemble the Jacobians
$\mathbf{J} \in \mathbb{C}^{390 \times 1038}$ was about 0.34 s,
while each of the 390 matrices $\mathbf{K}^k \in \mathbb{C}^{1038 \times
1038}$ took about 4.75 s and then each iteration about 12
s depending on the line search. These times are based on running
Matlab \cite{matlab} on a machine with a dual core processor at
2.53 GHz. Despite the substantial computational overhead, the
method can be appealing in the cases where the inverse problem is
heavily underdetermined with only a few measurements. Moreover,
the assembling of the $\mathbf{K}$ matrices is well suited for
parallel processing. The images of the reconstructed admittivity
perturbation at each iteration are plotted in figures \ref{fig4}
(real component) and \ref{fig5} (imaginary component) below their
respective target images for comparison. As the error graphs
clearly indicate, the reconstructed images show a profound
quantitative improvement in spatial resolution, with the
regularized quadratic regression solution $\delta \gamma_2$ to
outperform the Tikhonov solution $\delta \gamma_1$ in both
Gauss-Newton iterations. Notice however, that in the exterior
iteration we scaled the increment  $\delta \gamma_p$ by $\tau_q$
in order to preserve the positivity new admittivity estimate. This
scaling, if $\tau_q < 1$ tends to increase the data misfit errors,
hence one can observe some discontinuities in the error reduction
from $q=1$ to $q=2$ in the plots of figure \ref{fig2}. Similarly
for the graphs of the image error in figure \ref{fig2b}, although
this time the correction works out to the improvement of the
errors as the target images are by definition positive. For
completeness, the step sizes used in the image reconstructions of
figures \ref{fig4} and \ref{fig5} are  $\tau_{p=1}=1$,
$\tau_{p=2}=0.3$, and $\tau_{q=1}=0.78$ for the first cycle of
iterations and $\tau_{p=1}=1$, $\tau_{p=2}=0.38$, and
$\tau_{q=1}=1$ for the second.

As a second example we consider a purely conductive case, i.e.
$\omega = 0$, aiming to reconstruct the target conductivity
function appearing at the top of figure \ref{fig8}. Once again
synthetic data are simulated, using the same current and
measurement patterns as in the previous case. After computing the
measurements $\zeta$ and introducing some zero mean Gaussian noise
using the noise covariance covariance matrix $\mathbf{C}_\eta =
10^{-5} \max |\zeta|\, \mathbf{I}$ we formulate the inverse
problem at a homogeneous background conductivity $\sigma_0$, the
best homogeneous fit of the data, regularization matrix
$\mathbf{R}$, and $\alpha$ parameters equal to $10^{-5}$ and
$10^{-7}$. To aid comparison with the convergent results for the
complex admittivity case we implement the algorithm for two
interior and two exterior iterations, for each of the regularized
problems. The graphs of the data misfit and image errors are
illustrated in figures \ref{fig6} and \ref{fig7}. The graphs show
a convergence pattern similar to that of the complex case, for
both values of the regularization parameter. Also at the initial
reference point the linear and quadratic data misfit functions
obtain values $\Lambda(\delta \sigma_0) = 0.24$ and $Q(\delta
\sigma_0)  = 0.09$,  demonstrating once again that the
contribution of the quadratic term can be significant if the
reference point is not sufficiently close to the solution. In
terms of its computational cost, implementing the algorithm for
the purely real admittivity has brought the processing time to
about a half of that consumed for the complex case. The
reconstructed images presented in figure \ref{fig8}, correspond to
the various conductivity updates as computed for two exterior and
two interior iterations, with $\alpha = 10^{-7}$. Initializing
with $\delta \sigma_0 = 0$ the second row, from left to right,
shows the conductivity updates after each interior iteration for
the first exterior GN iteration, and the bottom row the respective
images from the second exterior iteration. The step sizes used in
these results, as computed by the line search algorithms are
$\tau_{p=1}=1$, $\tau_{p=2}=0.3$, and $\tau_{q=1}=0.62$ for the
first cycle of iterations and $\tau_{p=1}=0.62$,
$\tau_{p=2}=0.24$, and $\tau_{q=1}=1$ for the second. Moreover at
the beginning of the second GN iteration the misfit functions have
been computed at $\Lambda(\delta \sigma_1) = 0.034$ and $Q(\delta
\sigma_1)  = 0.007$.

When the noise level in the data is approximately known, solving
the nonlinear EIT problem  one typically performs a number of GN
iterations until convergence is reached in the sense of the
discrepancy principle \cite{Lechleiter}, \cite{vauhkonen}. In our
results we implement only two exterior GN iterations, i.e.
$q=1,2$, each one encompassing two interior iterations, in order
to demonstrate the observed reduction in the image and data misfit
errors. Consequently, by virtue of the convergence properties of
the Newton algorithm, it is straightforward to state that the
quadratic regression solution will sustain a smaller error for any
number of GN iterations \cite{Bakushinsky}, and will thus converge
to the solution faster. On the other hand, a serious bottleneck of
the second, respectively higher-order, formulation is the
computational demand to compute the $\mathbf{K}$ matrices. In this
sense the method is more suited to the cases where high
performance computing is available, or when the number of data $m$
is fairly small.

\begin{figure}
\begin{center}
\epsfig{file=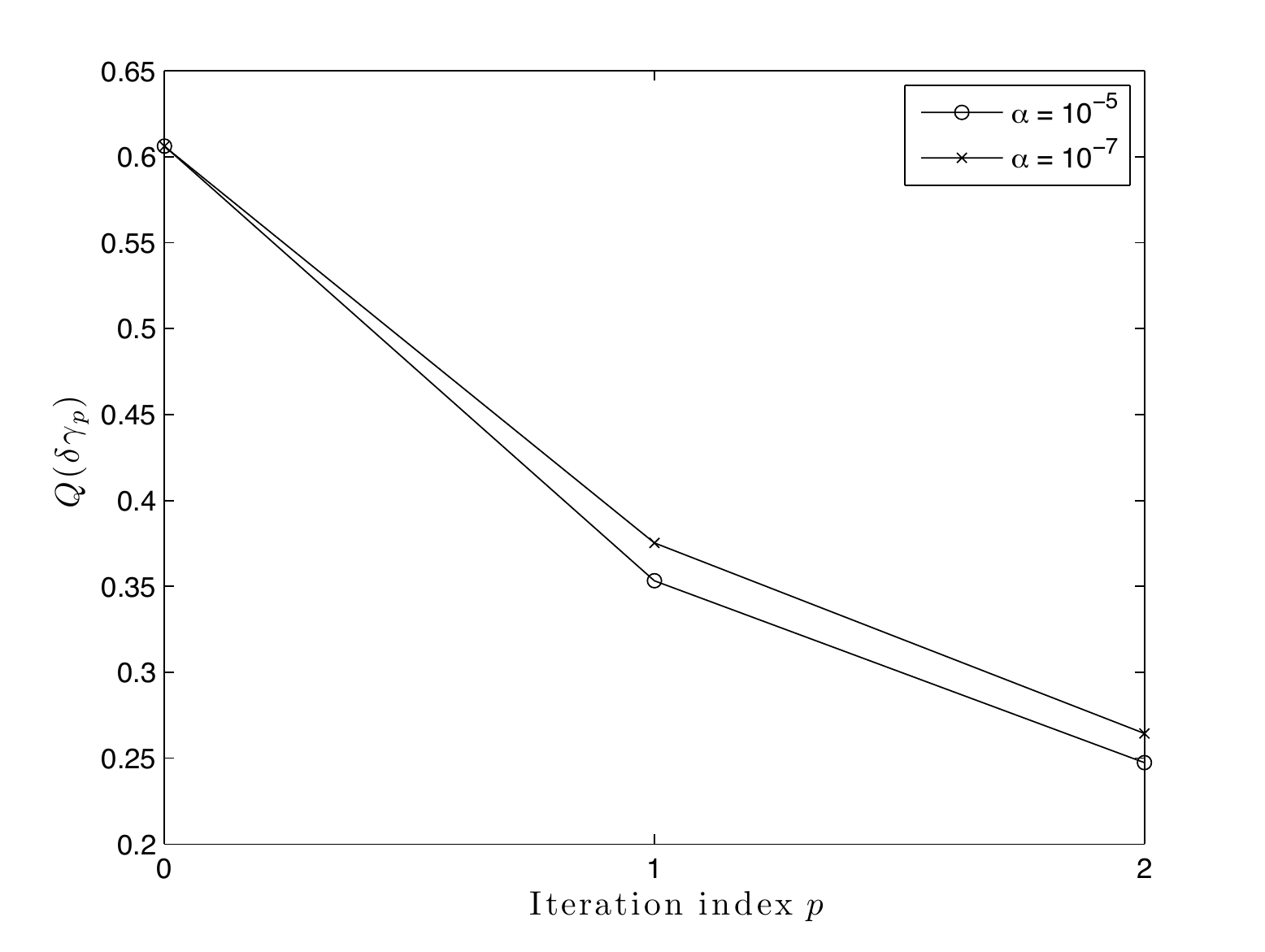,height=5cm,width=7cm} \epsfig{file=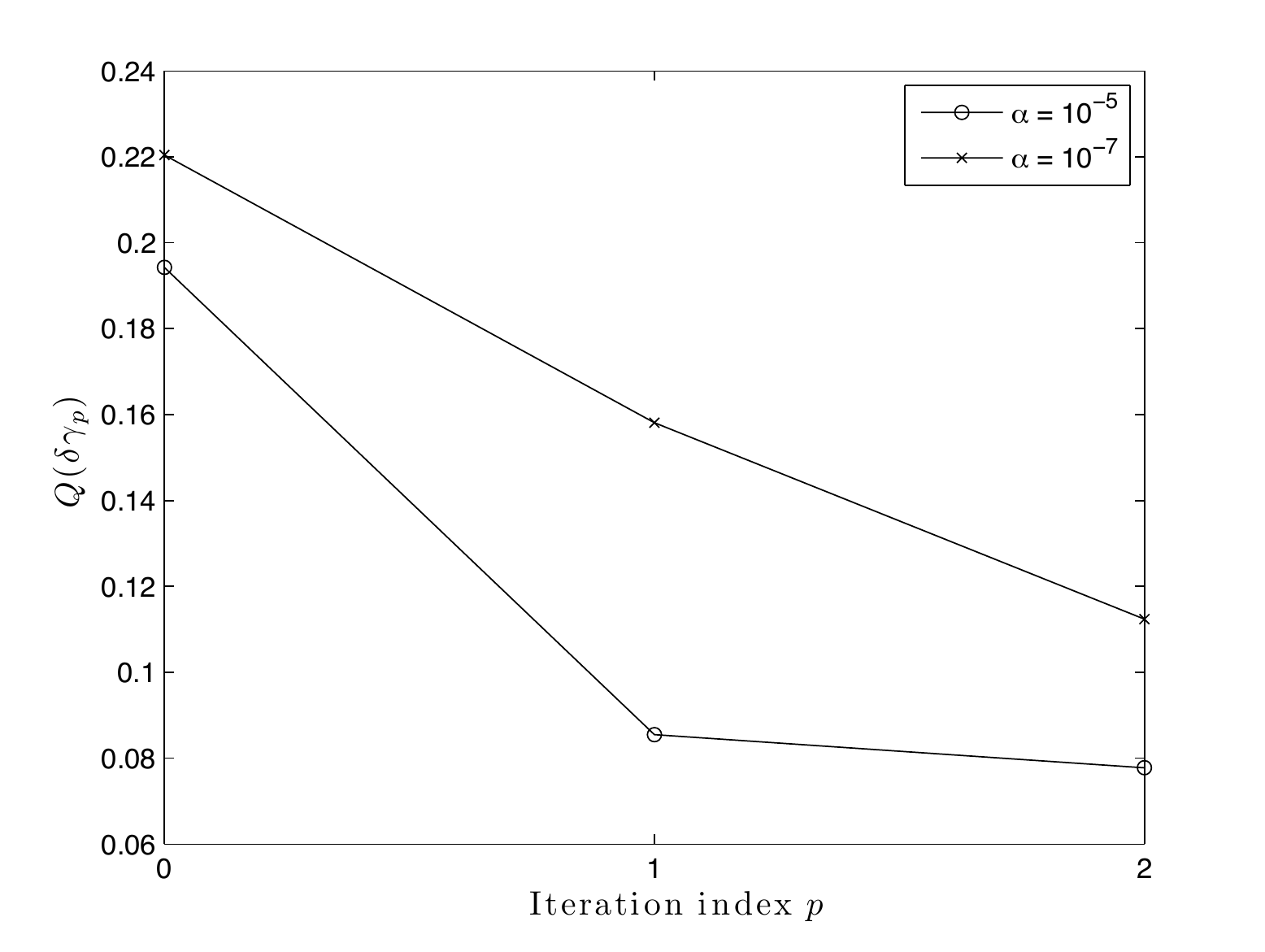,height=5cm,width=7cm}
\caption{Indicative convergence of the proposed method, in terms of minimizing the quadratic misfit error $Q(\delta \gamma_p)$  for two different values of the regularization parameter $\alpha$. Results are from the second test case with simulations at dc conditions $\omega=0$, hence the admittivity is purely real, i.e. $\gamma=\sigma$. Left the results during the first exterior iteration $q=1$, and right the corresponding values for $q=2$. In these results, $\delta \gamma_0=0$, $\delta \gamma_1$ coincides with the Tikhonov solution, and $\delta \gamma_2$ is the regularized quadratic regression solution. Notice that the quadratic regression solution has lower data misfit errors in both GN iterations. Between the first and second exterior iteration the admittivity increment was scaled to preserve positivity, hence the apparent discontinuity in the error reduction.}
\label{fig6}
\end{center}
\end{figure}

\begin{figure}
\begin{center}
\epsfig{file=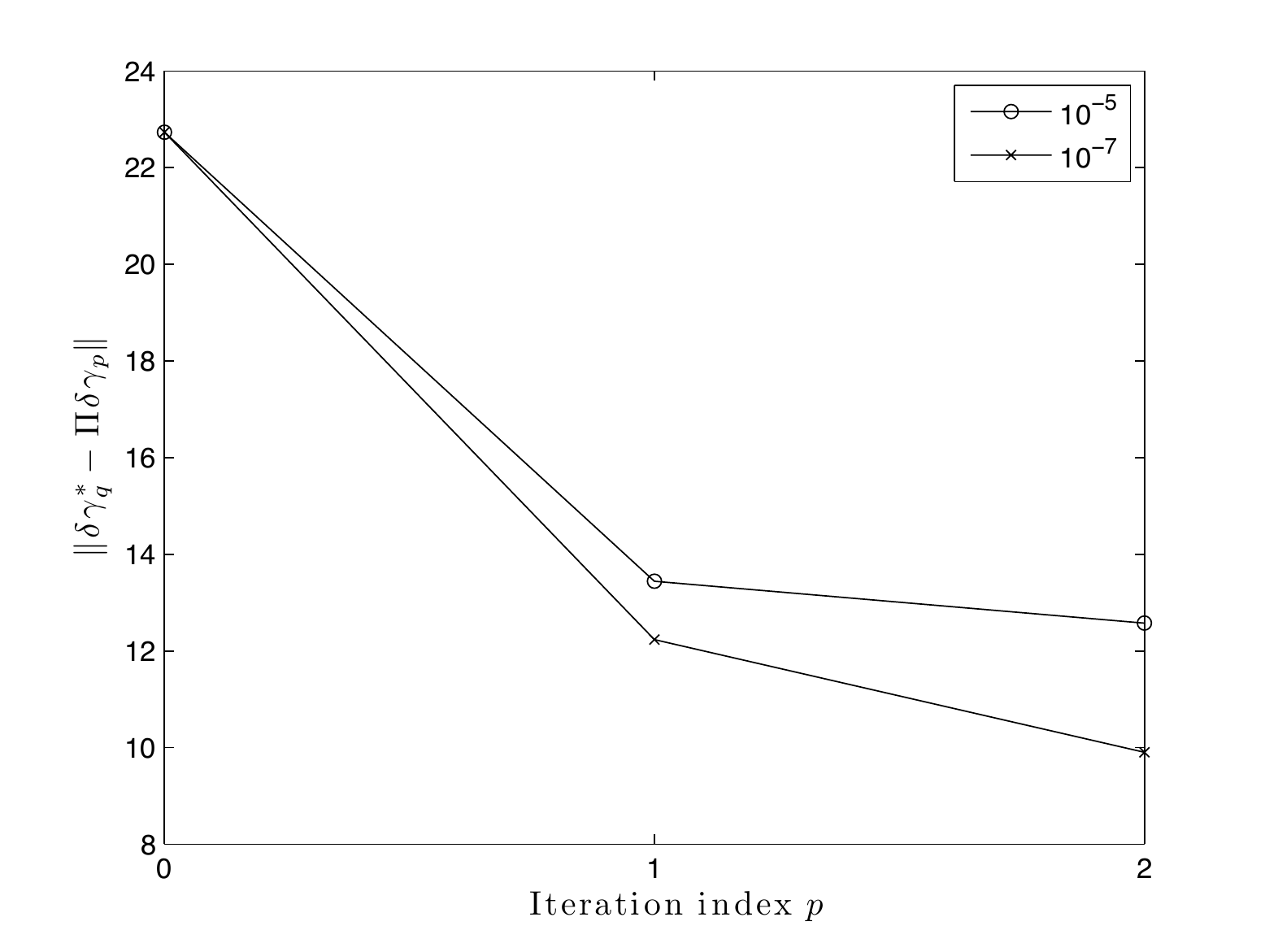,height=5cm,width=7cm} \epsfig{file=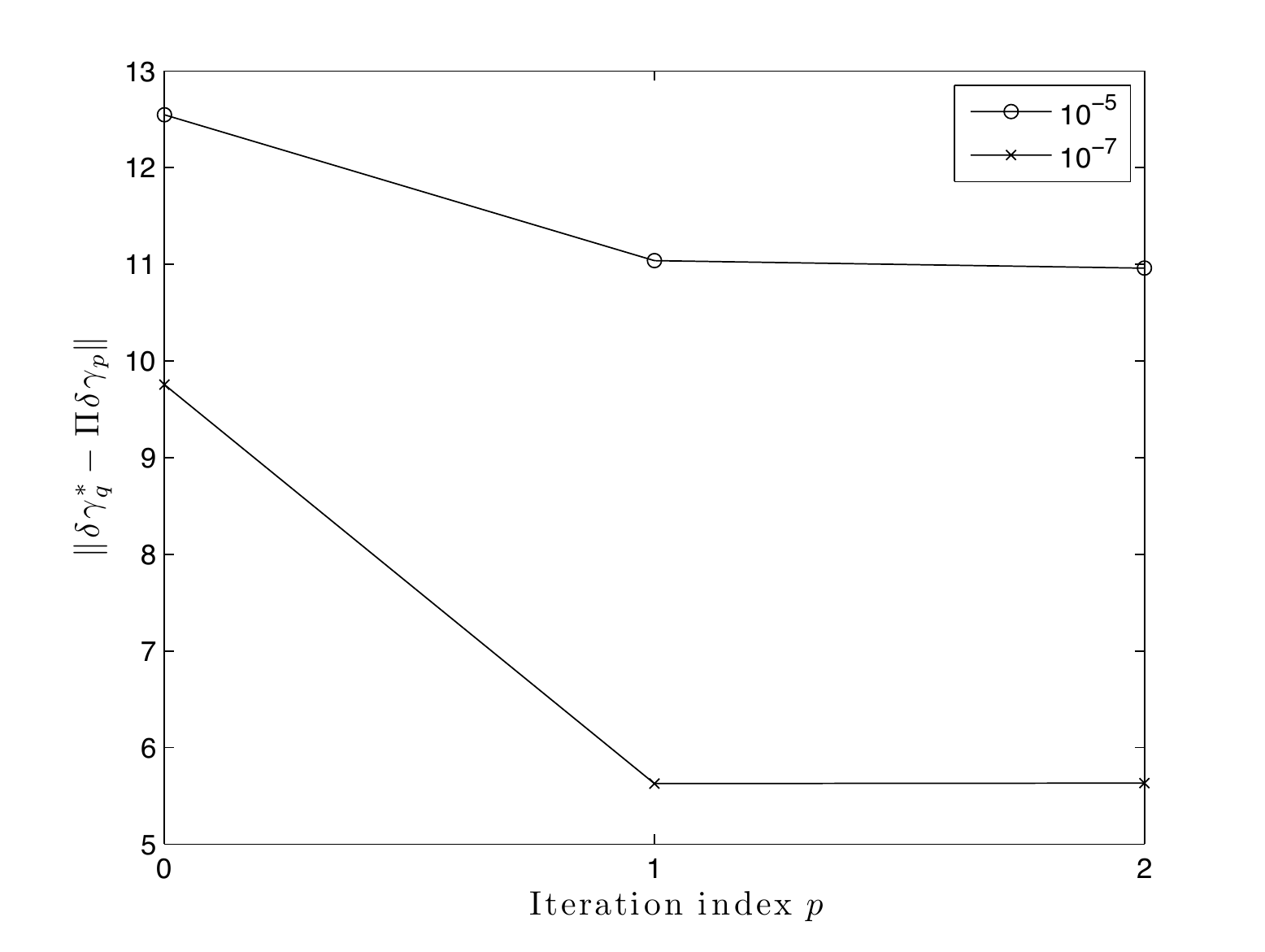,height=5cm,width=7cm}
\caption{Indicative convergence of the proposed method, in terms of minimizing the the image error $\|\delta \gamma_q^* - \Pi \delta \gamma_p\|$  at each interior iteration for two different values of the regularization parameter $\alpha$. Results are from the second test case with simulations at dc conditions $\omega=0$, hence the admittivity is purely real, i.e. $\gamma=\sigma$. Left the results during the first exterior iteration $q=1$, and right the corresponding values for $q=2$. In the figures $\delta \gamma_0$ is the initial homogeneous guess, $\delta \gamma_1$ coincides with the Tikhonov solution, and $\delta \gamma_2$ is the regularized quadratic regression solution. Between the first and second exterior iteration the admittivity increment was scaled to preserve positivity, hence the apparent discontinuity in the error reduction.}
\label{fig7}
\end{center}
\end{figure}

\begin{figure}
\centering \epsfig{file=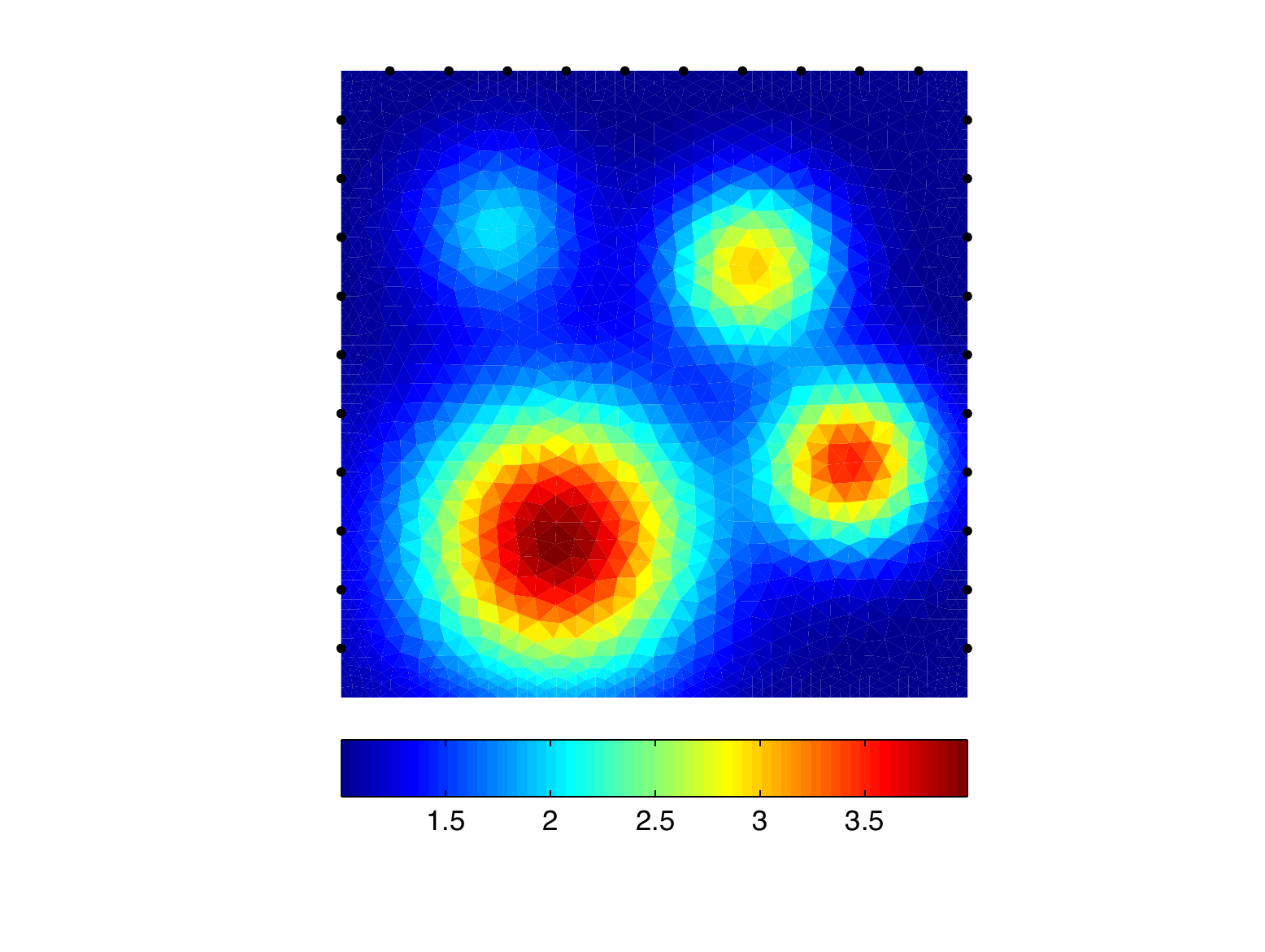,height=4cm,width=5.33cm}\\
\epsfig{file=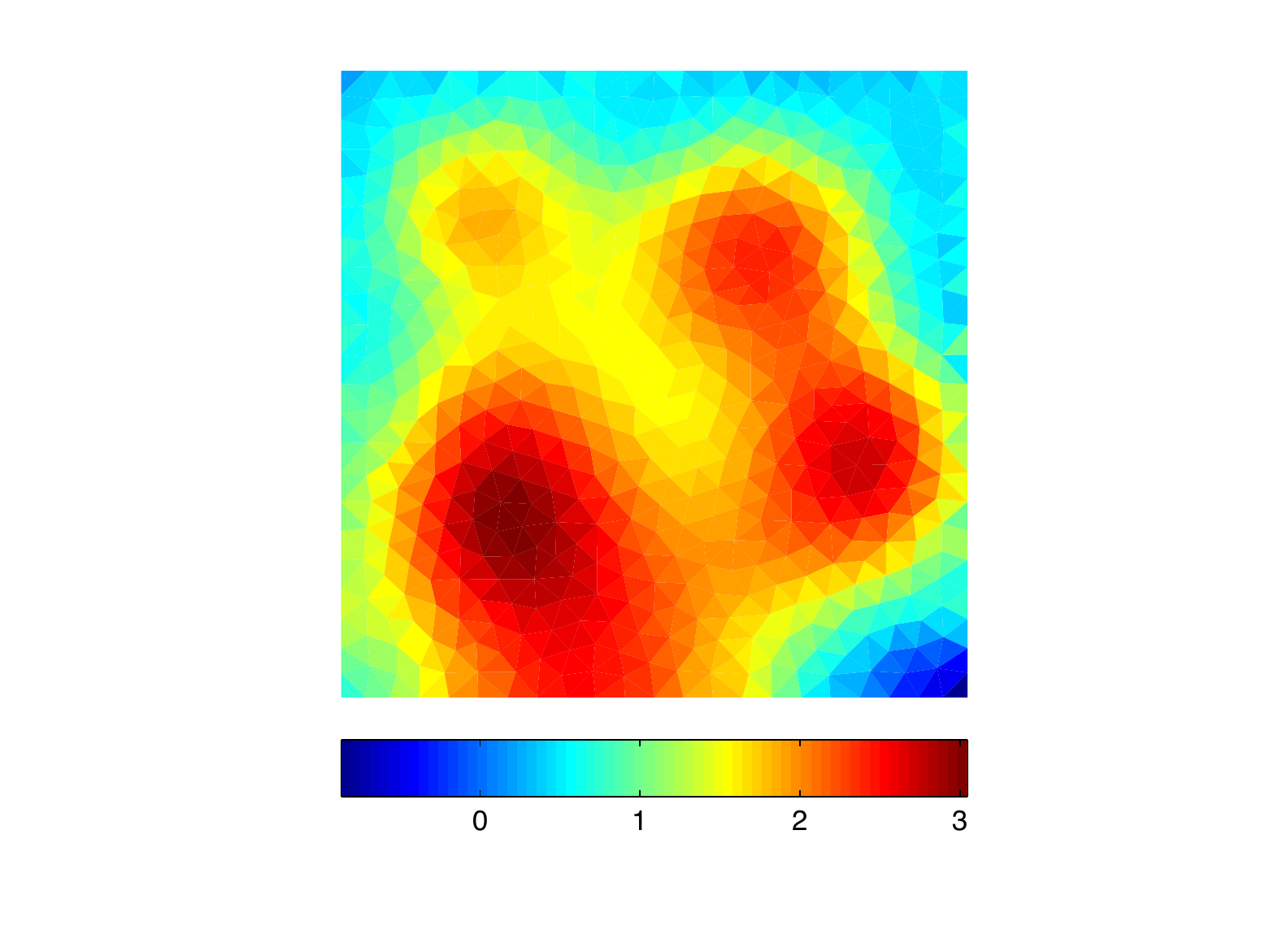,height=4cm,width=5.33cm} \hspace*{-20pt}\epsfig{file=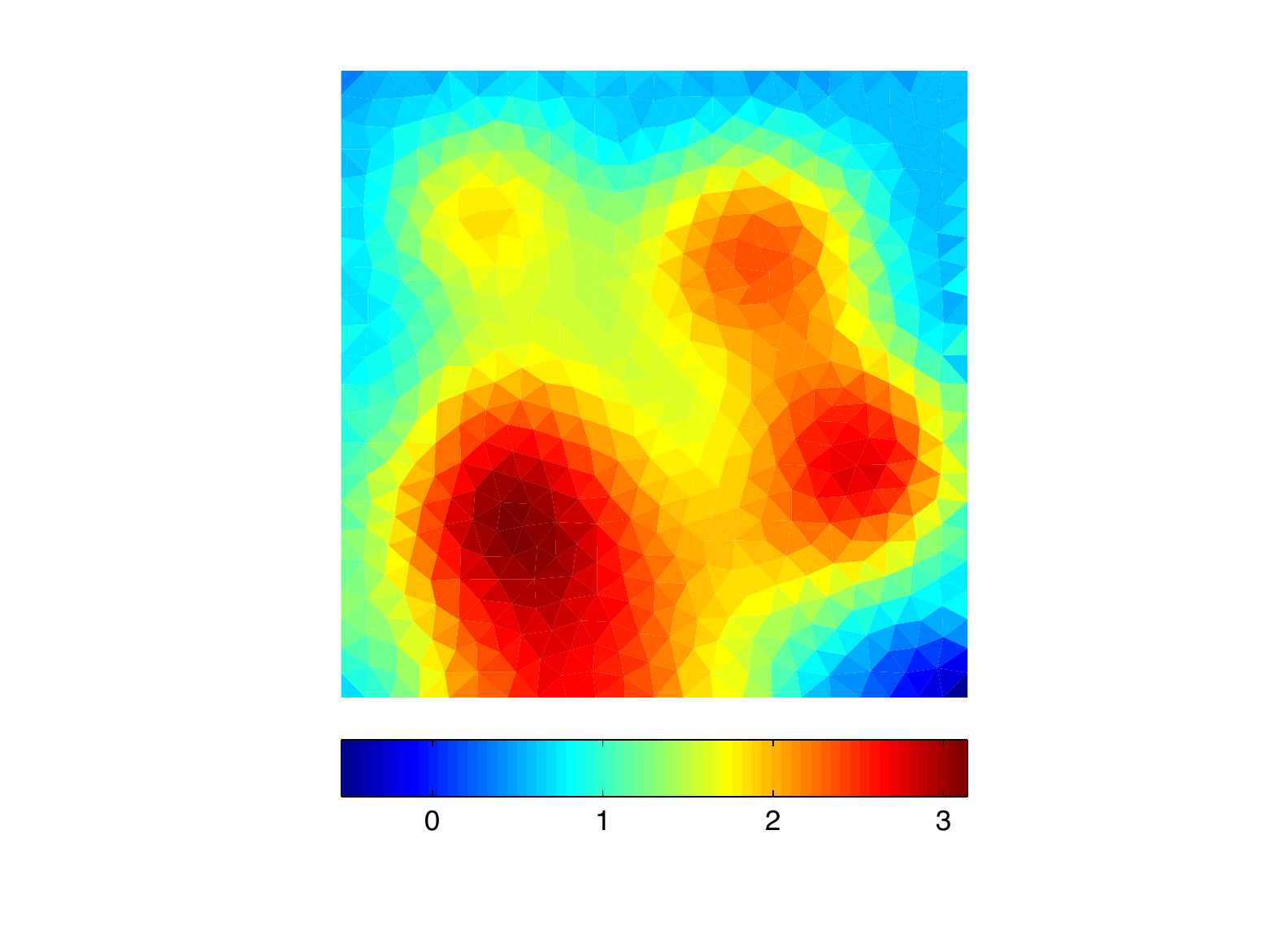,height=4cm,width=5.33cm} \hspace*{-20pt} \epsfig{file=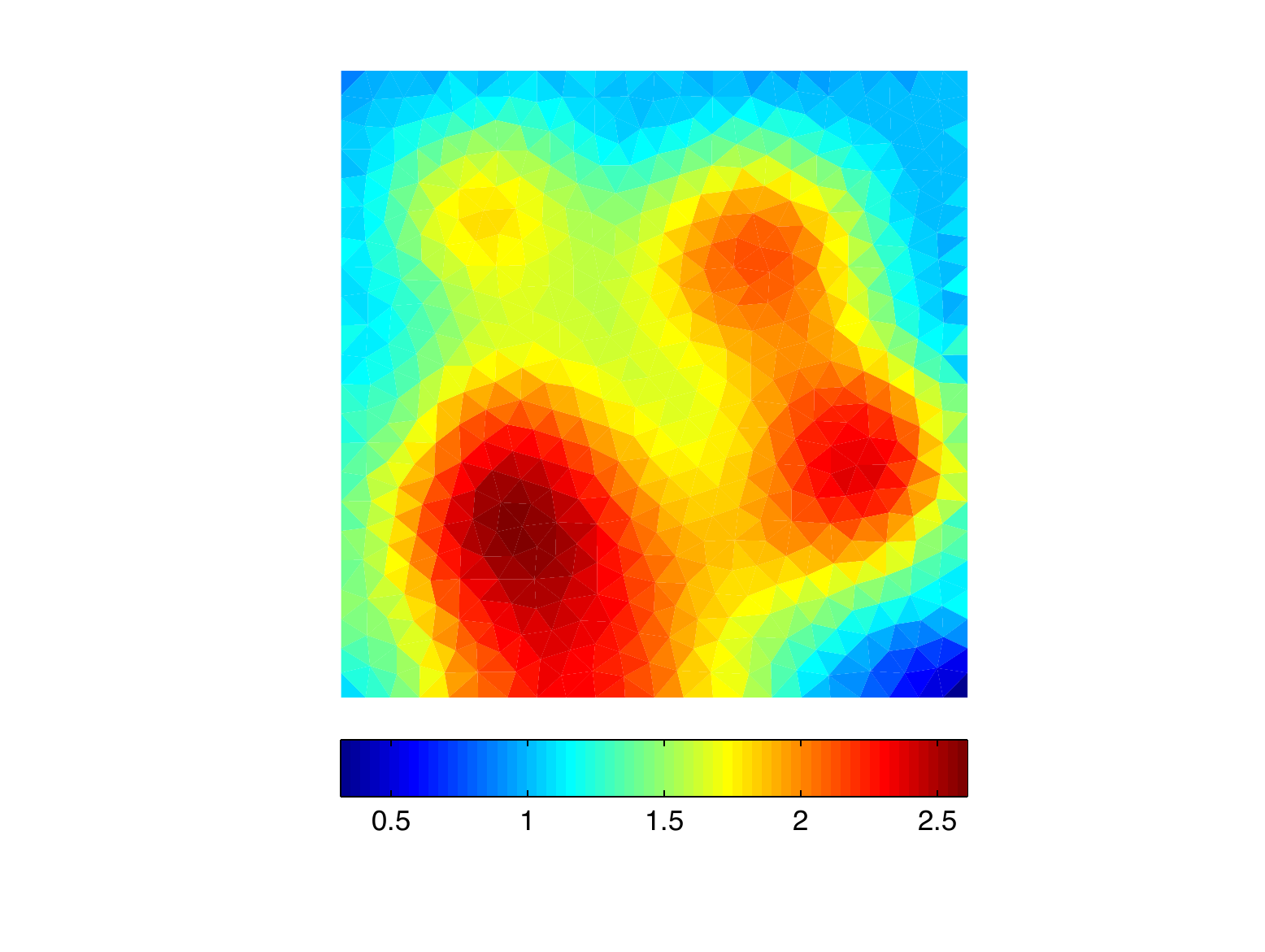,height=4cm,width=5.33cm}\\
\epsfig{file=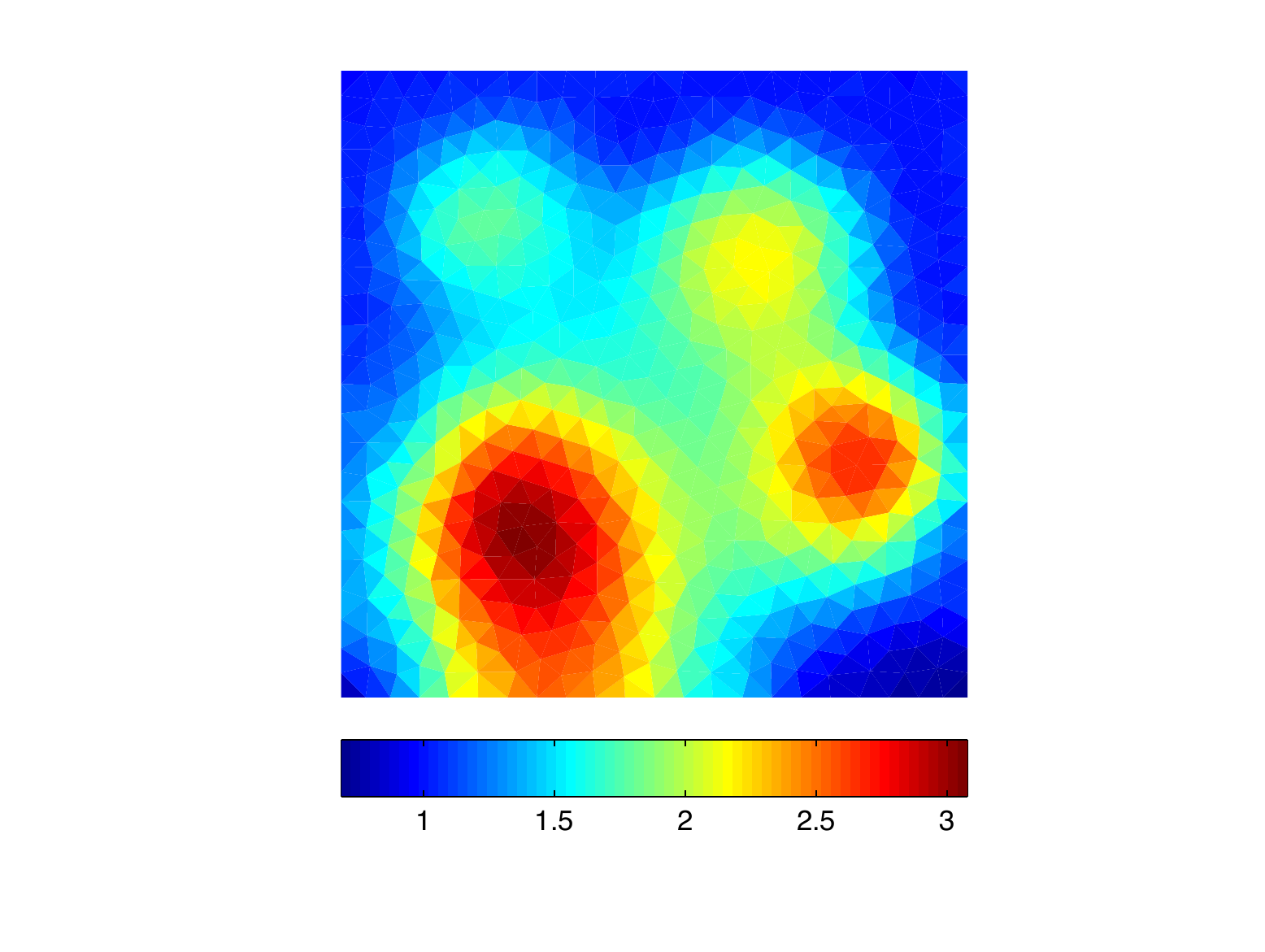,height=4cm,width=5.33cm} \hspace*{-20pt} \epsfig{file=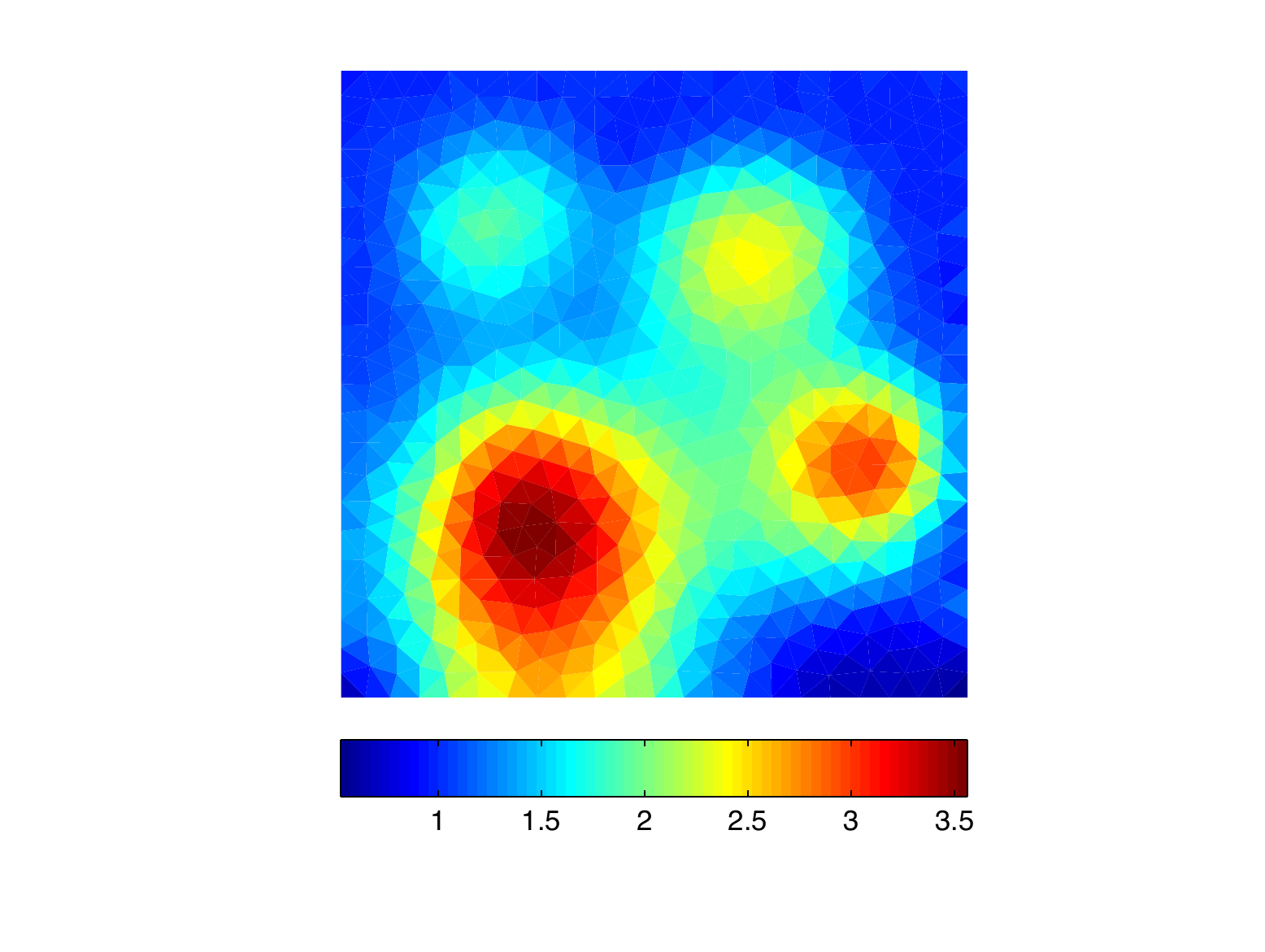,height=4cm,width=5.33cm}  \hspace*{-20pt} \epsfig{file=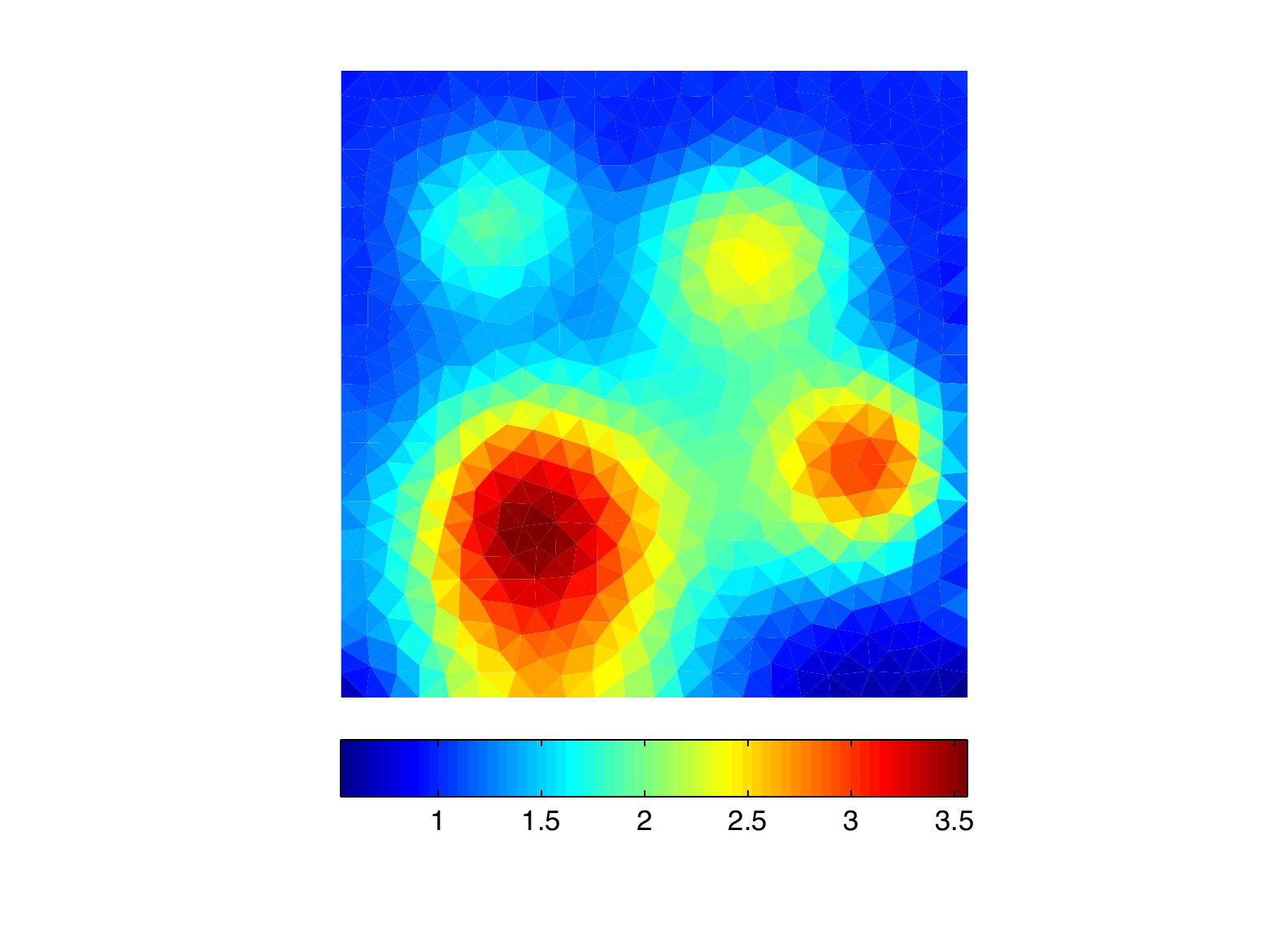,height=4cm,width=5.33cm}
\caption{Simulated and reconstructed admittivity functions, for the second test case at direct current conditions. Top row, the target simulated conductivity $\sigma^*$ discretized in $B_f$. Below from first exterior iteration using $\alpha = 10^{-7}$, namely $\sigma_0 + \tau \delta \sigma_1$ , $\sigma_0 + \tau \delta \sigma_2$, and $\sigma_1$ on $B_i$. Similarly at the bottom row, the respective images from the second exterior GN iteration, $\sigma_1 + \tau \delta \sigma_1$, $\sigma_1 + \tau \delta \sigma_2$, and $\sigma_2$, using the same value of $\alpha$.}
\label{fig8}
\end{figure}

\section{Conclusions}\label{conclusions}
This paper proposes a new approach for the inverse impedance
tomography problem. Based on a a power perturbation approach we
derive a nonlinear integral transform relating changes in
electrical admittivity to those observed in the respective
boundary measurements. This transform was then modified by
assuming that the electric potential in the interior of a domain
with unknown electrical properties can be approximated by a
first-order Taylor expansion centered at an a priori admittivity
estimate. This framework yields a quadratic regression problem
which we then regularized in the usual Tikhonov fashion.
Implementing Gauss-Newton's iterative algorithm we demonstrate
that the method quickly converges to results that outperform those
typically computed by applying the algorithm on the linearized
inverse problem. An important shortcoming of this approach is the
computational cost of computing the second and higher derivatives,
as they require the assembly of large dense matrices of dimension
equal to that of the parameter space. A possible remedy to this
can be found in model reduction methods \cite{lwg}. Another
interesting extension is to consider a reformulation of the
inverse problem in terms of some surrogate parameter functions,
e.g. the logarithm of the admittivity, in a way that preserves the
necessary positivity on the electrical parameters.

\section*{Appendix}

Here we present an alternative approach to the derivation of (\ref{nlt}) suggested to us by one of the anonymous reviewers.  Rather than relying on the conservation laws as was the case for our approach, the one presented below is based more on the use of variational methods applied to both the forward and adjoint problems. From
the weak form (\ref{weakv}) for $\psi = \bar v(\gamma_p,I^m)$ and
$\Psi = \overline{V_\ell}(\gamma_p,I^m)$ assuming $u(\gamma,I^d)$,
$U_\ell(\gamma,I^d)$ then
\begin{align*}
& \int_B \mathrm{d}x \, \gamma \nabla u (\gamma,I^d) \cdot \nabla
\bar v(\gamma_p,I^m) \\& + \sum_{\ell=1}^L z_\ell^{-1}
\int_{\Gamma_{e_\ell}}\mathrm{d}s \bigl ( u(\gamma,I^d) -
U_\ell(\gamma, I^d) \bigr ) \bigl ( \bar v(\gamma_p,I^m) -
\overline{V_\ell}(\gamma_p,I^m)\bigr ) = \sum_{\ell=1}^L I^d_\ell
\overline{V_\ell}.
\end{align*}
Repeating for a model with  $u(\gamma_p,I^d)$,
$U_\ell(\gamma_p,I^d)$ yields
\begin{align*}
&\int_B \mathrm{d}x \, \gamma_p \nabla u (\gamma_p,I^d) \cdot
\nabla \bar v(\gamma_p,I^m) \\& + \sum_{\ell=1}^L z_\ell^{-1}
\int_{\Gamma_{e_\ell}}\mathrm{d}s \bigl ( u(\gamma_p,I^d) -
U_\ell(\gamma_p, I^d) \bigr ) \bigl ( \bar v(\gamma_p,I^m) -
\overline{V_\ell}(\gamma_p,I^m)\bigr ) = \sum_{\ell=1}^L I^d_\ell
\overline{V_\ell},
\end{align*}
and thus by subtracting and inserting $\pm \int_B \mathrm{d}x \,
\gamma_p \nabla u(\gamma, I^d) \cdot \nabla \bar v(\gamma_p,I^m)$
one arrives at
\begin{align*}
0 & = \int_B \mathrm{d}x (\gamma-\gamma_p) \nabla u(\gamma,I^d) \cdot \nabla \bar v(\gamma_p,I^m) + \int_B \mathrm{d}x \gamma_p \nabla \bigl (u(\gamma,I^d) - u(\gamma_p,I^d) \bigr ) \cdot \bar v (\gamma_p,I^m) \\
& +  \sum_{\ell=1}^L z_\ell^{-1} \int_{\Gamma_{e_\ell}}\mathrm{d}s
\bigl ( u(\gamma_p,I^d) - U_\ell(\gamma_p, I^d) \bigr ) \bigl (
\bar v(\gamma_p,I^m) - \overline{V_\ell}(\gamma_p,I^m)\bigr ).
\end{align*}
Similarly from the weak form of the adjoint problem assuming $\bar
v(\gamma_p,\overline{I^m})$ and
$\overline{V_\ell}(\gamma_p,\overline{I^m})$ for $\psi =
u(\gamma,I^d)$ and $\Psi_\ell = U_\ell(\gamma,I^d)$ and  $\psi =
u(\gamma_p,I^d)$ and $\Psi_\ell = U_\ell(\gamma_p,I^d)$ we get
\begin{eqnarray*}
\sum_{\ell=1}^L \bigl ( U(\gamma,I^d) & - & U(\gamma_p,I^d) \bigr )\overline{I_\ell^m} = \int_B \mathrm{d}x \gamma_p \nabla \bigl ( u(\gamma,I^d) - u(\gamma_p,I^d)\bigr ) \cdot \nabla \bar v(\gamma_p,\overline{I^m}) \\
& + & \sum_{\ell=1}^L z_\ell^{-1} \int_{\Gamma_{e_\ell}} \bigl (
u(\gamma,I^d) - u(\gamma_p,I^d) + U_\ell(\gamma,I^d) -
U_\ell(\gamma_p,I^d) \bigr )\bigl (\bar v(\gamma_p,\overline{I^m})
- \overline{V_\ell}(\gamma_p,\overline{I^m}) \bigr ),
\end{eqnarray*}
thus combining the last two relations yields the result
(\ref{nlt}).

\section*{Acknowledgment} The authors would like to thank the reviewers for their helpful comments and suggestions during the review process. NP acknowledges helpful discussions with Irene Moulitsas on the quadratic regression problem and the help of Bill Lionheart and Kyriakos Paridis who commented on earlier drafts. NP is grateful to the Cyprus Program at MIT and the Cyprus Research Promotion Foundation for the financial support of this work.

\end{document}